\providecommand{\U}[1]{\protect\rule{.1in}{.1in}}
\newtheorem{theorem}{Theorem}
\newtheorem{lemma}[theorem]{Lemma}
\newenvironment{proof}[1][Proof]{\noindent\textbf{#1.} }{\ \rule{0.5em}{0.5em}}
\begin{document}

\title{An asymptotic theory of cloning of classical state families}
\author{Keiji Matsumoto \ \ \ keiji@nii.ac.jp\\National Institute of Infroamatics}
\maketitle

\begin{abstract}
Cloning, or approximate cloning, is one of basic operations in quantum
information processing. \ In this paper, we deal with cloning of classical
states, or probability distribution in asymptotic setting. We study the
quality of the approximate ($n,rn$)-clone, with $n$ being very large and $r$
being constant.

The result turns out to be $\left\Vert \mathrm{N}\left(  0,r\mathbf{1}\right)
-\mathrm{N}\left(  0,\mathbf{1}\right)  \right\Vert _{1}$, where
$\mathrm{N}\left(  \mu,\Sigma\right)  $ is the Gaussian distribution with mean
$\mu$ and covariance $\Sigma$. Notablly, this value does not depend on the the
family of porbability distributions to be cloned.

The key of the argument is use of local asymptotic normality: If the curve
$\theta\rightarrow P_{\theta}$ is sufficiently smooth in $\theta$, then, the
behavior of $P_{\theta^{\prime}}^{\otimes n}$ where $\theta^{\prime}%
-\theta=o\left(  \sqrt{1/n}\right)  $, is approximated by Gaussian shift.
Using this, we reduce the general case to Gaussian shift model.

\end{abstract}

\section{Introduction}

Cloning, or approximate cloning, is one of basic operations in quantum
information processing. It is related to optimal eavesdropping of quantum key
distribution, and also to optimal estimation efficiency. The quality of the
approximate clone, thus, has been studied extensively \cite{Scrani}.

In this paper, we deal with cloning of classical states, or probability
distributions in asymptotic setting. The study of (approximate) cloning of
classical states had started even earlier than the proposal of no-cloning
theorem, to give a measure of information contained in \ additional
observations : they studied the quality of approximate $(n+r)$-copies made
from $n$-copies ($(n,n+r)$-clone, hereafter), with $n$ being very large and
$r$ being constant \cite{Helgeland}\cite{LeCam}\cite{Mammen}.

This paper explores another direction: we study the quality of the approximate
$\left(  n,rn\right)  $-clone with $n$ being very large and $r$ being
constant, since its extension to quantum system seems to be easier.

In the argument, we make full use of local asymptotic normality: If the curve
$\theta\rightarrow P_{\theta}$ is sufficiently smooth, then, the family
$\left\{  P_{\theta+hn^{-1/2}}^{\otimes n}\right\}  _{h\in%
\mathbb{R}
^{m}}$ is approximated by Gaussian shift $\left\{  \mathrm{N}\left(
h,J_{\theta}^{-1}\right)  \right\}  _{h\in%
\mathbb{R}
^{m}}$, where $J_{\theta}$ is the Fisher information matrix of $\left\{
P_{\theta}\right\}  _{\theta\in\Theta}$ at $\theta$. Using this fact , we
reduce the general case to the Gaussian shift model . More concretely, letting
$D_{r,\Sigma}$ be the loss of optimal $\left(  1,r\right)  $-cloner of the
Gaussian shift $\left\{  \mathrm{N}\left(  h,\Sigma\right)  \right\}  _{h\in%
\mathbb{R}
^{m}}$, we show
\begin{equation}
\sup_{a\geq0}\varliminf_{n\rightarrow\infty}\inf_{\Lambda}\sup_{\left\Vert
\theta^{\prime}-\theta\right\Vert \leq an^{-1/2}}\left\Vert \Lambda\left(
P_{\theta^{\prime}}^{n}\right)  -P_{\theta^{\prime}}^{rn}\right\Vert _{1}\geq
D_{r,J_{\theta}^{-1}},
\end{equation}
where $\Lambda$ moves over all the Markov maps. In other words, the loss of
the optimal asymptotic $\left(  n,nr\right)  $-cloner is asymptotically
lower-bounded by $D_{r,J_{\theta}^{-1}}$, at each $\theta\in\Theta$. This loss
turns out to be achievable: we construct a cloner $\Lambda_{\delta
,\varepsilon}^{n,r}$ with
\begin{equation}
\lim_{\delta\downarrow0}\lim_{\varepsilon\downarrow0}\lim_{n\rightarrow\infty
}\left\Vert \Lambda_{\delta,\varepsilon}^{n,r}\left(  P_{\theta}^{n}\right)
-P_{\theta}^{rn}\right\Vert _{1}=D_{r,J_{\theta}^{-1}}.
\end{equation}
Also, we find more explicit expression of $D_{r,\Sigma}$, which is
\[
D_{r,\Sigma}=\left\Vert \mathrm{N}\left(  0,r\mathbf{1}\right)  -\mathrm{N}%
\left(  0,\mathbf{1}\right)  \right\Vert _{1}.
\]

It is notable that $D_{r,\Sigma}$ does not depend on $\Sigma$. This means that
$D_{r,J_{\theta}^{-1}}$, the smallest asymptotic loss of $\left(  n,rn\right)
$-cloner, does not depend on the family $\left\{  P_{\theta}\right\}
_{\theta\in\Theta}$ to be cloned.

Since there is a (finite dimensional) quantum version of local asymptotic
normality, this argument may be extended to finite dimensional quantum case.

The paper is organized as follows. First, we give the optimal approximate
cloners for Gaussian shift families, and find some properties of them. Second,
we state local asymptotic normality of smooth family of probability
distributions, and its uniform version. Finally, we give asymptotic analysis
of approximate $\left(  n,rn\right)  $-clone of smooth families. The paper is
concluded by discussions.

\section{Gaussian shift family}

\subsection{Reduction of cloning to amplification}

The contents of the subsection is well-known, but added for the sake of completion.

Consider the Gaussian shift family $\left\{  \mathrm{N}\left(  h,\Sigma
\right)  \right\}  _{h\in%
\mathbb{R}
^{m}}$. Then, the problem of optimum approximate $\left(  1,r\right)  $-clone,
or finding a map achieving
\[
C_{r,\Sigma}:=\inf_{\Lambda:\text{Markov}}\sup_{h\in%
\mathbb{R}
^{m}}\left\Vert \Lambda\left(  \mathrm{N}\left(  h,\Sigma\right)  \right)
-\mathrm{N}\left(  h,\Sigma\right)  ^{\otimes r}\right\Vert _{1}%
\]
is equivalent to finding the optimum $r$-amplifier, or a Markov map achieving
\begin{equation}
D_{r,\Sigma}:=\inf_{\Lambda:\text{Markov}}\sup_{h\in%
\mathbb{R}
^{m}}\left\Vert \Lambda\left(  \mathrm{N}\left(  h,\Sigma\right)  \right)
-\mathrm{N}\left(  \sqrt{r}h,\Sigma\right)  \right\Vert _{1}. \label{amp-def}%
\end{equation}

To see this, let $X_{1},\cdots,X_{r}\sim$ $\mathrm{N}\left(  h,\Sigma\right)
$, and
\[
X_{i}^{\prime}=\sum_{j=1}^{r}O_{i,j}X_{j}%
\]
where $O$ is an orthogonal matrix with $O_{1,1}=O_{1,2}=\cdots=O_{1,r}%
=\frac{1}{\sqrt{r}}$. Then, $X_{1}^{\prime}\sim\mathrm{N}\left(  \sqrt
{r}h,\Sigma\right)  $ and $X_{2}^{\prime},\cdots,X_{r}^{\prime}\sim
\mathrm{N}\left(  0,\Sigma\right)  $.

Therefore, if
\[
\sup_{h\in%
\mathbb{R}
^{m}}\left\Vert \Lambda_{0}\left(  \mathrm{N}\left(  h,\Sigma\right)  \right)
-\mathrm{N}\left(  h,\Sigma\right)  ^{\otimes r}\right\Vert _{1}=C_{r,\Sigma
}+\varepsilon\text{,}%
\]
then%
\[
\sup_{h\in%
\mathbb{R}
^{m}}\left\Vert \Psi\circ\Lambda_{0}\left(  \mathrm{N}\left(  h,\Sigma\right)
\right)  -\mathrm{N}\left(  \sqrt{r}h,\Sigma\right)  \right\Vert _{1}\leq
C_{r,\Sigma}+\varepsilon,,
\]
where $\Psi$ is a Markov map corresponding to application of $O$ followed by
restriction to the first variable. Hence,
\[
C_{r,\Sigma}\geq D_{r,\Sigma}.
\]

On the other hand, let $\Psi^{\prime}$ be a Markov map corresponding to the
map
\[
X\rightarrow\left(  X,X_{2}^{\prime},\cdots,X_{r}^{\prime}\right)
,\,X_{2}^{\prime},\cdots,X_{r}^{\prime}\sim\mathrm{N}\left(  0,\Sigma\right)
\]
followed by $O^{-1}$. If
\[
\sup_{h\in%
\mathbb{R}
^{m}}\left\Vert \Lambda_{1}\left(  \mathrm{N}\left(  h,\Sigma\right)  \right)
-\mathrm{N}\left(  \sqrt{r}h,\Sigma\right)  \right\Vert _{1}=D_{r,\Sigma
}+\varepsilon,
\]
then%
\[
\sup_{h\in%
\mathbb{R}
^{m}}\left\Vert \Psi^{\prime}\circ\Lambda_{1}\left(  \mathrm{N}\left(
h,\Sigma\right)  \right)  -\mathrm{N}\left(  h,\Sigma\right)  ^{\otimes
r}\right\Vert _{1}\leq D_{r,\Sigma}+\varepsilon.
\]
Hence,
\[
C_{r,\Sigma}\leq D_{r,\Sigma}.
\]
After all, we have $C_{r,\Sigma}=D_{r,\Sigma}.$

\subsection{Amplifier for Gaussian shift families}

In this subsection, we find the optimum $r$-amplifier ($r\geq1$) and its loss
$D_{r,\Sigma}=C_{r,\Sigma}$ for the Gaussian shift family $\left\{
\mathrm{N}\left(  h,\Sigma\right)  \right\}  _{h\in%
\mathbb{R}
^{m}}$.

Observe first that
\[
\Psi_{\sqrt{r}}\left(  \mathrm{N}\left(  h,\Sigma\right)  \right)
=\mathrm{N}\left(  \sqrt{r}h,r\Sigma\right)  ,\,\,\Psi_{r^{-1/2}}\left(
\mathrm{N}\left(  \sqrt{r}h,r\Sigma\right)  \right)  =\mathrm{N}\left(
h,\Sigma\right)  .
\]
where $\Psi_{a}$ describes the Markov map corresponding to scale change.
Hence,
\begin{align*}
D_{r,\Sigma}  &  \leq\inf_{\Lambda}\sup_{h\in%
\mathbb{R}
^{m}}\left\Vert \Lambda\circ\Psi_{\sqrt{r}}\left(  \mathrm{N}\left(
h,\Sigma\right)  \right)  -\mathrm{N}\left(  \sqrt{r}h,\Sigma\right)
\right\Vert _{1}\\
&  =\inf_{\Lambda}\sup_{h\in%
\mathbb{R}
^{m}}\left\Vert \Lambda\left(  \mathrm{N}\left(  \sqrt{r}h,r\Sigma\right)
\right)  -\mathrm{N}\left(  \sqrt{r}h,\Sigma\right)  \right\Vert _{1}%
\end{align*}
and
\begin{align*}
D_{r,\Sigma}  &  =\inf_{\Lambda}\sup_{h\in%
\mathbb{R}
^{m}}\left\Vert \Lambda\circ\Psi_{r^{-1/2}}\left(  \mathrm{N}\left(  \sqrt
{r}h,r\Sigma\right)  \right)  -\mathrm{N}\left(  \sqrt{r}h,\Sigma\right)
\right\Vert _{1}\\
&  \geq\inf_{\Lambda}\sup_{h\in%
\mathbb{R}
^{m}}\left\Vert \Lambda\left(  \mathrm{N}\left(  \sqrt{r}h,r\Sigma\right)
\right)  -\mathrm{N}\left(  \sqrt{r}h,\Sigma\right)  \right\Vert _{1}.
\end{align*}
Thus,
\begin{equation}
D_{r,\Sigma}=\inf_{\Lambda}\sup_{h\in%
\mathbb{R}
^{m}}\left\Vert \Lambda\left(  \mathrm{N}\left(  \sqrt{r}h,r\Sigma\right)
\right)  -\mathrm{N}\left(  \sqrt{r}h,\Sigma\right)  \right\Vert _{1},
\label{amp-1}%
\end{equation}
and $\Lambda_{\mathrm{amp}}^{r}$ achieving (\ref{amp-def}) and $\Lambda^{r}$
achieving (\ref{amp-1}) are, if exists, related by
\[
\Lambda_{\mathrm{amp}}^{r}=\Lambda^{r}\circ\Psi_{\sqrt{r}}.
\]

Now, we refer to Theorem\thinspace3 of \cite{Torgersen:trans}: applying to our
case, it says that
\begin{align}
D_{r,\Sigma}  &  =\sup_{f:\sup_{x}\left\vert f\left(  x\right)  \right\vert
\leq1}\left\{  \int f\left(  y\right)  p_{0,\Sigma}\left(  y\right)
\mathrm{d}y-\sup_{x}\int f\left(  y+\sqrt{r}x\right)  p_{0,r\Sigma}\left(
y\right)  \mathrm{d}y\right\} \nonumber\\
&  =\sup_{f:\sup_{x}\left\vert f\left(  x\right)  \right\vert \leq1}\inf
_{x}\left\{  \int f\left(  y\right)  \left\{  p_{0,\Sigma}\left(  y\right)
-p_{x,r\Sigma}\left(  y\right)  \right\}  \mathrm{d}y\right\} \nonumber\\
&  =\sup_{f:\sup_{x}\left\vert f\left(  x\right)  \right\vert \leq1}\inf
_{x}\left\{  \int f\left(  y\right)  \left\{  p_{0,\mathbf{1}}\left(
y\right)  -p_{x,r\mathbf{1}}\left(  y\right)  \right\}  \mathrm{d}y\right\}  ,
\label{D=}%
\end{align}
where $p_{x,\Sigma}$ is probability density function of $\mathrm{N}\left(
x,\Sigma\right)  $.

The right most side of (\ref{D=}) is evaluated as follows. Observe
\begin{align}
D_{r,\Sigma}  &  \leq\inf_{x}\left\Vert p_{0,\mathbf{1}}-p_{x,r\mathbf{1,}%
}\right\Vert _{1}\nonumber\\
&  =\left\Vert p_{0,\mathbf{1}}-p_{0,r\mathbf{1}}\right\Vert _{1}.
\label{inf|p-p|}%
\end{align}
(The proof of (\ref{inf|p-p|}) is in the appendix.) On the other hand, define
$B_{r}:=\left\{  y\,;p_{\mathbf{1}}\left(  y\right)  \geq p_{r\mathbf{1}%
}\left(  y\right)  \right\}  $, which is a ball centered at origin. Then,%
\begin{align}
D_{r,\Sigma}  &  \geq\inf_{x}\left\{  \int\left(  2I_{B_{r}}\left(  y\right)
-1\right)  \left\{  p_{0,\mathbf{1}}\left(  y\right)  -p_{x,r\mathbf{1}%
}\left(  y\right)  \right\}  \mathrm{d}y\right\} \nonumber\\
&  =\int\left(  2I_{B_{r}}\left(  y\right)  -1\right)  p_{0,\mathbf{1}}\left(
y\right)  \mathrm{d}y-\sup_{x}\int\left(  2I_{B_{r}}\left(  y\right)
-1\right)  p_{x,r\mathbf{1}}\left(  y\right)  \mathrm{d}y\nonumber\\
&  =\int\left(  2I_{B_{r}}\left(  y\right)  -1\right)  p_{0,\mathbf{1}}\left(
y\right)  \mathrm{d}y-\int\left(  2I_{B_{r}}\left(  y\right)  -1\right)
p_{0,r\mathbf{1}}\left(  y\right)  \mathrm{d}y\nonumber\\
&  =\left\Vert p_{0,\mathbf{1}}-p_{0,r\mathbf{1}}\right\Vert _{1}.
\label{infint}%
\end{align}
(N.B. in the case of $r<1$, $\sup_{x}\int\left(  2I_{B_{r}}\left(  y\right)
-1\right)  p_{x,r\mathbf{1}}\left(  y\right)  \mathrm{d}y$ is achieved as
$\left\Vert x\right\Vert \rightarrow\infty$.)

\ After all, we have, if $r\geq1$,
\begin{equation}
D_{r,\Sigma}=\left\Vert p_{0,\mathbf{1}}-p_{0,r\mathbf{1}}\right\Vert
_{1}=\left\Vert \mathrm{N}\left(  0,\mathbf{1}\right)  -\mathrm{N}\left(
0,r\mathbf{1}\right)  \right\Vert _{1}. \label{gauss-opt-loss}%
\end{equation}
Obviously, corresponding $\Lambda^{r}$ is the identity map. Thus,
\begin{equation}
\Lambda_{\mathrm{amp}}^{r}=\Psi_{\sqrt{r}}. \label{gauss-shift-opt}%
\end{equation}

\subsection{Bounded shifts}

Define
\[
D_{r,\Sigma,a}:=\inf_{\Lambda}\sup_{\left\Vert h\right\Vert \leq a}\left\Vert
\Lambda\left(  \mathrm{N}\left(  h,\Sigma\right)  \right)  -\mathrm{N}\left(
h,\sqrt{r}\Sigma\right)  \right\Vert _{1}.
\]
Then, if $a^{\prime}\geq a$ and \
\[
\sup_{\left\Vert h\right\Vert \leq a^{\prime}}\left\Vert \Lambda\left(
\mathrm{N}\left(  h,\Sigma\right)  \right)  -\mathrm{N}\left(  h,\sqrt
{r}\Sigma\right)  \right\Vert _{1}=D_{r,\Sigma,a^{\prime}}+\varepsilon,
\]
then
\[
\sup_{\left\Vert h\right\Vert \leq a}\left\Vert \Lambda\left(  \mathrm{N}%
\left(  h,\Sigma\right)  \right)  -\mathrm{N}\left(  h,\sqrt{r}\Sigma\right)
\right\Vert _{1}\leq D_{r,\Sigma,a^{\prime}}+\varepsilon.
\]
Since $\varepsilon>0$ can be arbitrary, therefore,
\[
D_{r,\Sigma,a}\leq D_{r,\Sigma,a^{\prime}}.
\]
Hence, since \ $D_{r,\Sigma,a}\leq2$, \ $\lim_{a\rightarrow\infty}%
D_{r,\Sigma,a}$ exists.

\begin{lemma}%
\[
\lim_{a\rightarrow\infty}D_{r,\Sigma,a}=D_{r,\Sigma}.
\]

\end{lemma}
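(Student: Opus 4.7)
The plan is to prove the stronger equality $D_{r,\Sigma,a}=D_{r,\Sigma}$ for every $a\ge 0$, which trivially implies the lemma. One inequality is immediate: for any Markov $\Lambda$, the supremum over $\{h:\|h\|\le a\}$ is no larger than the supremum over all of $\mathbb{R}^{m}$, so $D_{r,\Sigma,a}\le D_{r,\Sigma}$.

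For the reverse, I would repeat the derivation of (\ref{D=}) using Theorem~3 of \cite{Torgersen:trans} but applied to the sub-experiment indexed by $\{h:\|h\|\le a\}$, and perform the same two changes of variables that were used there to reduce to $\Sigma=\mathbf{1}$. This yields a dual formula
\[
D_{r,\Sigma,a}=\sup_{f:\sup_{y}|f(y)|\le 1}\inf_{x\in K_{a}}\int f(y)\bigl\{p_{0,\mathbf{1}}(y)-p_{x,r\mathbf{1}}(y)\bigr\}\,\mathrm{d}y,
\]
where $K_{a}$ is an origin-centered ellipsoid, namely the image of $\{h:\|h\|\le a\}$ under those rescalings. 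Shrinking the inner set can only enlarge the dual expression, so this value is at least the unrestricted dual value whenever the optimal witness for the latter already attains its infimum inside $K_{a}$.

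To check that this is the case, I would plug in the particular witness $f^{\ast}=2I_{B_{r}}-1$ from (\ref{infint}), with $B_{r}=\{y:p_{0,\mathbf{1}}(y)\ge p_{0,r\mathbf{1}}(y)\}$ a ball centered at the origin. The function $x\mapsto\int f^{\ast}(y)p_{x,r\mathbf{1}}(y)\,\mathrm{d}y=2\mathrm{N}(x,r\mathbf{1})(B_{r})-1$ is maximized at $x=0$ by Anderson's lemma, which is precisely the symmetry observation already used in (\ref{infint}) to locate the supremum at the origin for $r\ge 1$. Hence the inner infimum is attained at $0\in K_{a}$, its value is $\|p_{0,\mathbf{1}}-p_{0,r\mathbf{1}}\|_{1}$, and (\ref{gauss-opt-loss}) identifies this with $D_{r,\Sigma}$. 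Therefore $D_{r,\Sigma,a}\ge D_{r,\Sigma}$.

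The only mildly delicate point is verifying that Theorem~3 of \cite{Torgersen:trans} really yields the dual formula above for the compact restricted parameter set; this reduces to checking that its hypotheses transfer to a sub-experiment rather than to any new analytic work. Once the dual is in hand, the symmetry argument that places the inner infimum at the origin is already present in the preceding subsection.
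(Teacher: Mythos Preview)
Your proposal has a genuine error: the stronger claim $D_{r,\Sigma,a}=D_{r,\Sigma}$ for every $a\ge 0$ is false. Take $a=0$. Then the parameter set is the singleton $\{0\}$, and the Markov map that ignores its input and outputs the fixed target distribution achieves loss $0$; hence $D_{r,\Sigma,0}=0<D_{r,\Sigma}$. More generally, for small $a$ the constant map that outputs the target at $h=0$ has worst-case loss $O(a)$, so $D_{r,\Sigma,a}\to 0$ as $a\downarrow 0$ and the function $a\mapsto D_{r,\Sigma,a}$ is genuinely increasing.

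The mistake enters when you invoke Theorem~3 of \cite{Torgersen:trans} for the sub-experiment indexed by $\{h:\|h\|\le a\}$. That theorem concerns translation experiments, where the parameter set is the full group $\mathbb{R}^{m}$; the group structure is what forces the optimal randomization to be a convolution and produces the $\sup_{x}$ (equivalently $\inf_{x}$) over \emph{all} $x\in\mathbb{R}^{m}$ in the dual formula. Once you restrict to a bounded parameter set the experiment is no longer translation-invariant, there is no reason the optimal $\Lambda$ should be a shift, and no version of the dual with $\inf_{x\in K_{a}}$ follows. Your formula would give $D_{r,\Sigma,0}=\|p_{0,\mathbf{1}}-p_{0,r\mathbf{1}}\|_{1}$, contradicting the trivial computation above.

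The paper's argument avoids this by using the randomization criterion (Theorem~1.10 of \cite{Shiryaev}, Theorem~55.9 of \cite{Strasser}) to write both $D_{r,\Sigma}$ and $D_{r,\Sigma,a}$ as suprema of Bayes-risk gaps over finitely supported priors $\pi$, with $\pi\in\mathcal{P}_{\infty}$ and $\pi\in\mathcal{P}_{a}$ respectively. Since every finitely supported prior lies in some $\mathcal{P}_{a}$, one gets $D_{r,\Sigma}=\sup_{a\ge 0}D_{r,\Sigma,a}$, and monotonicity in $a$ then gives the limit.
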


\begin{proof}
Let us consider a decision problem taking values in $\left[  -1,1\right]  ^{%
\mathbb{R}
^{m}}$. Let $\rho$ be a Markov kernel from $%
\mathbb{R}
^{m}$ to $\left[  -1,1\right]  ^{%
\mathbb{R}
^{m}}$, and $F\left(  h,\cdot\right)  :%
\mathbb{R}
^{m}\rightarrow\left[  -1,1\right]  $ be a lower continuous function. Also, we
define $\mathcal{P}_{a}$ be the set of probability distributions over
$\left\{  x;\left\Vert x\right\Vert \leq a\right\}  $ with finite support.
Then, we define, for $\pi\in\mathcal{P}_{a}$,
\[
R_{\pi}\left(  \Sigma,F,\rho\right)  :=\int\int F\left(  h,a\right)
\rho\left(  \mathrm{d}a,x\right)  p_{h,r\Sigma}\left(  x\right)
\mathrm{d}x\mathrm{d}\pi\left(  h\right)  .
\]
Due to the randomization criteria (Theorem\thinspace1.10 of \cite{Shiryaev},
Theorem 55.9 of \cite{Strasser}),
\[
D_{r,\Sigma}=\sup_{\pi\in\mathcal{P}_{\infty}}\sup_{F}\left\{  \inf_{\rho
}R_{\pi}\left(  r\Sigma,F,\rho\right)  -\inf_{\rho}R_{\pi}\left(
\Sigma,F,\rho\right)  \right\}  ,
\]
and
\[
D_{r,\Sigma,a}=\sup_{\pi\in\mathcal{P}_{a}}\sup_{F}\left\{  \inf_{\rho}R_{\pi
}\left(  r\Sigma,F,\rho\right)  -\inf_{\rho}R_{\pi}\left(  \Sigma
,F,\rho\right)  \right\}  .
\]
Comparing the right hand sides of them,
\[
D_{r,\Sigma}=\sup_{a\geq0}D_{r,\Sigma,a}=\lim_{a\rightarrow\infty}%
D_{r,\Sigma,a}.
\]

\end{proof}

\section{Smooth family}

\subsection{Settings and description of results}

Consider a family of probability distributions $\left\{  P_{\theta};\theta
\in\Theta\right\}  $ over the measurable space $\left(  \Omega,\mathcal{X}%
\right)  $, where $\Theta$ is an open region in $%
\mathbb{R}
^{m}$, $\Omega$ is a Polish space (a separable completely metrizable
topological space, e.g. $%
\mathbb{R}
^{k}$, $%
\mathbb{Z}
^{k}$, etc. ) , and $P_{\theta}$ has density $p_{\theta}$ with respect to a
measure $\mu$. Define $P_{\theta}^{n}:=P_{\theta}^{\otimes n}$, $p_{\theta
}^{n}:=p_{\theta}^{\otimes n}$, $\Omega^{n}:=\Omega^{\times n}$,
$\mathcal{X}^{n}:=\mathcal{X}^{\otimes n}$, and%

\[
Z_{\theta,h}^{n}:=\frac{p_{\theta+hn^{-1/2}}^{n}}{p_{\theta}^{n}}.
\]
Also, $\mathrm{E}_{\theta}$ and $\mathrm{E}_{\theta}^{n}$ refers to
expectation with respect to $P_{\theta}$ or $P_{\theta}^{n}$, respectively.
$W_{\theta,\kappa}$ ($\kappa=1,\cdots,n$) are the random variables with
$W_{\theta,\kappa}\sim P_{\theta}$, and define $W_{\theta}^{n}:=\left(
W_{\theta,1},\cdots,W_{\theta,n}\right)  $, which obeys $P_{\theta}^{n}$.

Under this setting, we investigate the quality of $\left(  n,nr\right)
$-clone of $\left\{  P_{\theta};\theta\in\Theta\right\}  $. More specifically,
we show
\begin{equation}
\sup_{a\geq0}\varliminf_{n\rightarrow\infty}\inf_{\Lambda^{n,r}\text{:Markov}%
}\sup_{\left\Vert \theta^{\prime}-\theta\right\Vert \leq an^{-1/2}}\left\Vert
\Lambda^{n,r}\left(  P_{\theta^{\prime}}^{n}\right)  -P_{\theta^{\prime}}%
^{rn}\right\Vert _{1}\geq D_{r,J_{\theta}^{-1},\infty}=D_{r,J_{\theta}^{-1}},
\label{l-min-max}%
\end{equation}
which means the loss of the optimal asymptotic $\left(  n,nr\right)  $-cloner
is lower bounded by $D_{r,J_{\theta}^{-1}}$, at each $\theta\in\Theta$. Also,
we show this loss is achievable: we construct a cloner $\Lambda_{\delta
,\varepsilon}^{n,r}$ with
\begin{equation}
\lim_{\delta\downarrow0}\lim_{\varepsilon\downarrow0}\lim_{n\rightarrow\infty
}\left\Vert \Lambda_{\delta,\varepsilon}^{n,r}\left(  P_{\theta}^{n}\right)
-P_{\theta}^{rn}\right\Vert _{1}=D_{r,J_{\theta}^{-1}}. \label{achieve}%
\end{equation}

\subsection{Local asymptotic normality and its uniform version}

The map $\theta\rightarrow p_{\theta}$ is \textit{\ differentiable in
quadratic mean}, if
\begin{equation}
\lim_{h\rightarrow0}\frac{1}{\left\Vert h\right\Vert ^{2}}\int\left(
\sqrt{p_{\theta+h}}-\sqrt{p_{\theta}}-\frac{h^{T}}{2}\ell_{\theta}%
\sqrt{p_{\theta}}\right)  ^{2}\mathrm{d}\mu=0,\,\,\forall\theta\in\Theta.
\label{diff-quadratic-1}%
\end{equation}
If the map $\theta\rightarrow\ell_{\theta}$ is continuous, we say
$\theta\rightarrow p_{\theta}$ is \textit{continuously differentiable in
quadratic mean}.

We define, with $\omega^{n}\in\Omega^{n}$ and $\omega_{\kappa}\in\Omega$,
\[
\ell_{\theta}^{n}\left(  \omega^{n}\right)  :=\frac{1}{\sqrt{n}}\sum
_{\kappa=1}^{n}\ell_{\theta}\left(  \omega_{\kappa}\right)  ,
\]
$J_{\theta}:=\left[  \mathrm{E}_{\theta}\ell_{\theta,i}\ell_{\theta,j}\right]
$, and \
\[
Z_{\theta,h}\left(  x\right)  :=\exp\left(  h^{T}x-\frac{1}{2}h^{T}J_{\theta
}h\right)  .
\]

The following Lemma is recasting of Remark 1 of \cite{Ibragimov} and
Theorem\thinspace7.2 of \cite{vandervaart}.

\begin{lemma}
\label{lem:likelihood-expansion} Suppose $\Theta$ is an open region in $%
\mathbb{R}
^{m}$ and $\theta\rightarrow p_{\theta}$ is continuously\textit{
differentiable in quadratic mean. }Then, $\mathrm{E}_{\theta}\ell_{\theta}=0$,
and, for any compact set $K\subset\Theta$ and $K^{\prime}\subset%
\mathbb{R}
^{m}$,
\[
\lim_{n\rightarrow\infty}\sup_{h\in K^{\prime}}\sup_{\theta\in K}P_{\theta
}^{n}\left\{  \left\vert \ln Z_{\theta,h}^{n}-\ln Z_{\theta,h}\left(
\ell_{\theta}^{n}\right)  \right\vert >\varepsilon\right\}  =0,\,\forall
\varepsilon>0.
\]

\end{lemma}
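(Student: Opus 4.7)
The plan is to first establish the pointwise Local Asymptotic Normality (LAN) expansion by the standard square-root-of-likelihood method, and then upgrade it to the stated uniformity on $K\times K'$ via the continuity of $\theta\mapsto\ell_\theta$. The identity $\mathrm{E}_\theta\ell_\theta=0$ follows directly from differentiability in quadratic mean: multiplying $\sqrt{p_{\theta+h}}-\sqrt{p_\theta}=\tfrac{1}{2}h^T\ell_\theta\sqrt{p_\theta}+o_{L^2(\mu)}(\|h\|)$ by $\sqrt{p_{\theta+h}}+\sqrt{p_\theta}=2\sqrt{p_\theta}+o_{L^2(\mu)}(1)$ and integrating gives $0=\int(p_{\theta+h}-p_\theta)\,d\mu=h^T\int\ell_\theta p_\theta\,d\mu+o(\|h\|)$.

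For the pointwise expansion I would introduce
\[
W_{n,\kappa}:=\sqrt{p_{\theta+h/\sqrt n}(\omega_\kappa)/p_\theta(\omega_\kappa)}-1,\qquad T_{n,\kappa}:=W_{n,\kappa}-\tfrac{1}{2\sqrt n}h^T\ell_\theta(\omega_\kappa),
\]
so that $\ln Z^n_{\theta,h}=\sum_\kappa 2\ln(1+W_{n,\kappa})$. The differentiability assumption at scale $h/\sqrt n$ yields $n\mathrm{E}_\theta T_{n,1}^2\to 0$ and, after expansion of the Hellinger integral $\|\sqrt{p_{\theta+h/\sqrt n}}-\sqrt{p_\theta}\|^2_{L^2(\mu)}$, the moment estimates $n\mathrm{E}_\theta W_{n,1}\to-\tfrac18 h^T J_\theta h$ and $n\mathrm{E}_\theta W_{n,1}^2\to\tfrac14 h^T J_\theta h$. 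Inserting into $2\ln(1+x)=2x-x^2+O(|x|^3)$: the linear sum $h^T\ell^n_\theta+2\sum_\kappa T_{n,\kappa}$ tends to $h^T\ell^n_\theta-\tfrac14 h^TJ_\theta h$ in probability (its centered fluctuation has variance $4n\mathrm{Var}_\theta T_{n,1}=o(1)$); the quadratic sum $\sum_\kappa W_{n,\kappa}^2$ tends to $\tfrac14 h^TJ_\theta h$ in probability (expand $W_{n,\kappa}^2=(\tfrac{1}{2\sqrt n}h^T\ell_\theta+T_{n,\kappa})^2$, apply the usual LLN to $(h^T\ell_\theta)^2$, and bound the cross term by Cauchy--Schwarz); the cubic remainder satisfies $\sum_\kappa|W_{n,\kappa}|^3\le\max_\kappa|W_{n,\kappa}|\cdot\sum_\kappa W_{n,\kappa}^2=o_P(1)\cdot O_P(1)$, with $\max_\kappa|W_{n,\kappa}|\to 0$ deduced from $W_{n,1}^2\le 2(\tfrac{1}{2\sqrt n}h^T\ell_\theta)^2+2T_{n,1}^2$. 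Combining gives $\ln Z^n_{\theta,h}-\ln Z_{\theta,h}(\ell^n_\theta)=o_{P^n_\theta}(1)$.

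The main obstacle is the uniformity over $(\theta,h)\in K\times K'$. Each deterministic remainder above is continuous in $(\theta,h)$ and tends pointwise to $0$; under the continuity hypothesis (which makes $\theta\mapsto\ell_\theta$, and hence $\theta\mapsto J_\theta$, continuous in $L^2(P_\theta)$), these convergences become uniform on the compact set $K\times K'$ by a standard $\varepsilon$-net argument. The stochastic parts are handled analogously by covering $K\times K'$ with a fine grid, invoking the pointwise result at the centers, and controlling fluctuations across cells via the Lipschitz estimate $|\ln Z_{\theta,h_1}(\ell^n_\theta)-\ln Z_{\theta,h_2}(\ell^n_\theta)|\le\|h_1-h_2\|(\|\ell^n_\theta\|+C_K)$ together with $\sup_{\theta\in K}\mathrm{E}_\theta\|\ell^n_\theta\|^2=\sup_{\theta\in K}\mathrm{tr}\,J_\theta<\infty$. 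The technical heart is the uniform integrability of $nW_{n,1}^2$ over $(\theta,h)\in K\times K'$---precisely the refinement singled out in Remark 1 of \cite{Ibragimov}---which follows from continuity of $\theta\mapsto\ell_\theta$ combined with a uniform Hellinger bound on compacta.
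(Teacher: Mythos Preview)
The paper does not prove this lemma: it simply records it as a ``recasting of Remark~1 of \cite{Ibragimov} and Theorem~7.2 of \cite{vandervaart}''. Your sketch is exactly the content of those references. The pointwise argument via $W_{n,\kappa}=\sqrt{p_{\theta+h/\sqrt n}/p_\theta}-1$ and the Taylor expansion of $2\ln(1+W_{n,\kappa})$ is the proof of Theorem~7.2 in \cite{vandervaart}, and you correctly identify the uniform-integrability refinement as the contribution of \cite{Ibragimov}.

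One point that your sketch leaves implicit and that deserves a sentence: your Lipschitz bound for the stochastic part is written only for variation in $h$ at fixed $\theta$. Variation in $\theta$ is structurally harder, since it changes the reference measure $P_\theta^n$ as well as the integrand. The $\varepsilon$-net argument works, but not by ``pointwise convergence plus continuity'' alone---one needs equicontinuity in $(\theta,h)$ of the remainders $nE_\theta T_{n,1}^2$ and $nE_\theta W_{n,1}^2$, uniformly in $n$. This is exactly what continuity of $\theta\mapsto\ell_\theta$ in $L^2$ buys (it makes the quadratic-mean remainder in the definition of differentiability vanish uniformly on compacts), and it is the substance of the Ibragimov--Khas'minskii condition you cite at the end. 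So your outline is correct, and aligned with the sources the paper invokes; just be explicit that the $\theta$-uniformity rests on this equicontinuity rather than on a Lipschitz estimate.
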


The following Lemma is recasting of Remark 1 of Theorem\thinspace7.2 of
\cite{vandervaart}.

\begin{lemma}
\label{lem:likelihood-expansion-1}Suppose $\Theta$ is an open region in $%
\mathbb{R}
^{m}$ and $\theta\rightarrow p_{\theta}$ is \textit{\ differentiable in
quadratic mean. }Then, $\mathrm{E}_{\theta}\ell_{\theta}=0$, and, for any
compact set $K^{\prime}\subset%
\mathbb{R}
^{m}$,
\[
\lim_{n\rightarrow\infty}\sup_{h\in K^{\prime}}P_{\theta}^{n}\left\{
\left\vert \ln Z_{\theta,h}^{n}-\ln Z_{\theta,h}\left(  \ell_{\theta}%
^{n}\right)  \right\vert >\varepsilon\right\}  =0,\,\forall\varepsilon>0.
\]

\end{lemma}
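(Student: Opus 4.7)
The plan is to follow the standard DQM-based proof of LAN (as in \cite{vandervaart}) by a Taylor expansion of $\ln(1+x)$ applied to the Hellinger variables
\[
V_{n,i} := 2\bigl(\sqrt{p_{\theta+h/\sqrt n}(W_{\theta,i})/p_\theta(W_{\theta,i})} - 1\bigr),
\]
and then to verify that each error term is in fact uniform in $h$ over compact $K' \subset \mathbb{R}^m$. The identity $\mathrm{E}_\theta \ell_\theta = 0$ comes first: substituting the DQM expansion $\sqrt{p_{\theta+h}} = \sqrt{p_\theta} + \tfrac12 h^T \ell_\theta \sqrt{p_\theta} + r_{\theta,h}$ into the normalization $\int(p_{\theta+h} - p_\theta)\mathrm{d}\mu = 0$ and reading off the first-order term yields $\mathrm{E}_\theta \ell_\theta = 0$.

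For fixed $h$, set $h_n := h/\sqrt n$ and split $V_{n,i} = h_n^T \ell_\theta(W_{\theta,i}) + 2\, r_{\theta,h_n}(W_{\theta,i})/\sqrt{p_\theta(W_{\theta,i})}$. I would then establish three estimates under $P_\theta^n$:
\begin{align*}
\text{(a)}\quad & \textstyle\sum_{i=1}^n V_{n,i} = h^T \ell_\theta^n - \tfrac{1}{4} h^T J_\theta h + o_{P_\theta^n}(1), \\
\text{(b)}\quad & \textstyle\sum_{i=1}^n V_{n,i}^2 = h^T J_\theta h + o_{P_\theta^n}(1), \\
\text{(c)}\quad & \textstyle\max_{i\le n} |V_{n,i}| = o_{P_\theta^n}(1).
\end{align*}
In (a), the linear part contributes exactly $h^T \ell_\theta^n$; the centered residual sum has variance bounded by $4 n \|r_{\theta,h_n}\|_{L^2(\mu)}^2 = o(1)$; and the mean contribution equals $n \mathrm{E}_\theta V_{n,i} = -n \int (\sqrt{p_{\theta+h_n}} - \sqrt{p_\theta})^2\mathrm{d}\mu = -\tfrac14 h^T J_\theta h + o(1)$ by DQM. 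Estimate (b) follows from $n \mathrm{E}_\theta V_{n,i}^2 = 4 n \|\sqrt{p_{\theta+h_n}} - \sqrt{p_\theta}\|_{L^2(\mu)}^2 = h^T J_\theta h + o(1)$ together with a Chebyshev-type variance bound on the centered sum; (c) combines $\max_i |h_n^T \ell_\theta(W_{\theta,i})| \to 0$ in probability (a consequence of $\mathrm{E}_\theta \|\ell_\theta\|^2 = \mathrm{tr}\,J_\theta < \infty$ via uniform integrability of $\ell_\theta^2$) with an analogous bound on the maximum of the residuals.

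Taylor-expanding $2\ln(1+V_{n,i}/2) = V_{n,i} - V_{n,i}^2/4 + O(V_{n,i}^3)$ on the event $\{\max_i|V_{n,i}| \le 1/2\}$, summing, and applying (a)--(c)---with the cubic remainder bounded by $\max_i|V_{n,i}|\cdot\sum_i V_{n,i}^2 = o_{P_\theta^n}(1)\cdot O_{P_\theta^n}(1)$---yields
\[
\ln Z_{\theta,h}^n = h^T \ell_\theta^n - \tfrac12 h^T J_\theta h + o_{P_\theta^n}(1) = \ln Z_{\theta,h}(\ell_\theta^n) + o_{P_\theta^n}(1),
\]
which is the claim at a single $h$.

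The main obstacle, and where care is most needed, is upgrading this pointwise convergence to the uniform-in-$h\in K'$ statement. The key observation is that DQM is automatically uniform under the rescaling $h \mapsto h/\sqrt n$ over bounded $h$: since $h \mapsto \|r_{\theta,h}\|_{L^2(\mu)}^2/\|h\|^2$ is a single real-valued function of $h$ tending to $0$ as $h \to 0$, one obtains $\sup_{h \in K'} n \,\|r_{\theta,h/\sqrt n}\|_{L^2(\mu)}^2 \to 0$ as $n \to \infty$. This makes every deterministic $o(1)$ appearing in (a)--(b) uniform in $h \in K'$; the variance bounds that drive the probabilistic $o_{P_\theta^n}(1)$ terms are then uniform as well, and an analogous argument handles (c). Assembling these uniform bounds gives $\sup_{h \in K'} |\ln Z_{\theta,h}^n - \ln Z_{\theta,h}(\ell_\theta^n)| \to 0$ in $P_\theta^n$-probability, which is equivalent to the stated conclusion.
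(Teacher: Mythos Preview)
Your proposal is correct and follows exactly the route the paper has in mind: the paper does not give its own proof but simply cites Theorem~7.2 (and its remark) of \cite{vandervaart}, and what you have sketched is precisely that argument, together with the observation that the DQM remainder bound $n\|r_{\theta,h/\sqrt{n}}\|_{L^2(\mu)}^2\to 0$ is automatically uniform over $h\in K'$ because $K'$ is bounded. One small imprecision: in your final sentence you assert that $\sup_{h\in K'}|\ln Z_{\theta,h}^n-\ln Z_{\theta,h}(\ell_\theta^n)|\to 0$ in probability and that this is ``equivalent'' to the stated conclusion; in fact it is strictly stronger, and your uniform variance and moment bounds directly yield only the weaker statement $\sup_{h\in K'}P_\theta^n\{|\cdot|>\varepsilon\}\to 0$---but that weaker statement is exactly what the lemma asserts, so no harm is done.
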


In addition, we assume the following conditions:%
\begin{align}
&  J_{\theta}\text{ is continuous in }\theta\text{,}\label{J-cont}\\
&  \inf_{\theta\in\Theta}\alpha_{\theta}>0,\,\, \label{J-nondeg}%
\end{align}
where $\alpha_{\theta}$ is the minimum eigenvalue of $J_{\theta}$, and \
\begin{equation}
\sup_{\theta\in K}\mathrm{E}_{\theta}e^{h^{T}\ell_{\theta}}<\infty
,\,\text{\ }\forall h\in%
\mathbb{R}
^{m},\text{ for any compact set }K\subset\Theta\text{.} \label{EZ<infty}%
\end{equation}
Observe that
\begin{align*}
\mathrm{E}_{\theta}\left(  h^{T}\ell_{\theta}\right)  ^{2k}  &  \leq\left(
2k\right)  !\left\Vert h\right\Vert ^{2k}\mathrm{E}_{\theta}\cosh\left(
e^{T}\ell_{\theta}\right)  ,\\
\mathrm{E}_{\theta}\left\vert h^{T}\ell_{\theta}\right\vert ^{2k-1}  &
\leq\left\Vert h\right\Vert ^{2k-1}\left\{  1+\mathrm{E}_{\theta}\left(
e^{T}\ell_{\theta}\right)  ^{2k}\right\}  \leq\left\Vert h\right\Vert
^{2k-1}\left\{  1+\left(  2k\right)  !\mathrm{E}_{\theta}\cosh\left(
e^{T}\ell_{\theta}\right)  \right\}  ,
\end{align*}
where $e=h/\left\Vert h\right\Vert $, implying
\begin{equation}
\sup_{\theta\in K}\mathrm{E}_{\theta}\left\vert h^{T}\ell_{\theta}\right\vert
^{k}<\infty,\,\text{\ }\forall h\in%
\mathbb{R}
^{m},\text{ for any compact set }K\subset\Theta\text{.} \label{l-moment<infty}%
\end{equation}

Also, one can show that, for any compact set $K\subset\Theta$ and $K^{\prime
}\subset%
\mathbb{R}
^{m}$,
\begin{equation}
\sup_{n\geq n_{K,K^{\prime}}}\mathrm{E}_{\theta}e^{h^{T}\ell_{\theta}^{n}}\leq
e^{h^{T}J_{\theta}h},\,\forall\theta\in\Theta,\forall h\in K^{\prime},\exists
n_{K,K^{\prime}}\, \label{EZ<finite}%
\end{equation}
The proof of (\ref{EZ<finite}) is as follows. Observe, since $\mathrm{E}%
_{\theta}\ell_{\theta}=0$ due to Lemma\thinspace
\ref{lem:likelihood-expansion-1},
\begin{align}
\mathrm{E}_{\theta}^{n}e^{h^{T}\ell_{\theta}^{n}}  &  =\left(  \mathrm{E}%
_{\theta}e^{-\frac{h^{T}}{\sqrt{n}}\ell_{\theta}}\right)  ^{n}\nonumber\\
&  =\left(  1+\frac{h^{T}J_{\theta}h}{2n}+f_{\mathrm{rem}}\left(
\theta,h,n\right)  \right)  ^{n}, \label{exp-rem}%
\end{align}
where%
\begin{align}
&  \left\vert f_{\mathrm{rem}}\left(  \theta,h,n\right)  \right\vert
\nonumber\\
&  \leq\sum_{k=3}^{\infty}\frac{1}{k!}\left(  \frac{\left\Vert h\right\Vert
}{\sqrt{n}}\right)  ^{k}\mathrm{E}_{\theta}\left\vert e^{T}\ell_{\theta
}\right\vert ^{k}\nonumber\\
&  \leq\frac{1}{2}\sum_{k\geq3,k\text{:even}}^{\infty}\left(  \frac{\left\Vert
h\right\Vert }{\sqrt{n}}\right)  ^{k}\mathrm{E}_{\theta}\cosh e^{T}%
\ell_{\theta}\nonumber\\
&  +\sum_{k\geq3,k\text{:odd}}^{\infty}\left\{  \frac{1}{k!}\left(
\frac{\left\Vert h\right\Vert }{\sqrt{n}}\right)  ^{k}+\frac{\left(
k+1\right)  !}{k!}\left(  \frac{\left\Vert h\right\Vert }{\sqrt{n}}\right)
^{k}\mathrm{E}_{\theta}\cosh e^{T}\ell_{\theta}\right\} \nonumber\\
&  \leq\sum_{k\geq3}^{\infty}\left(  k+1\right)  \left(  \frac{\left\Vert
h\right\Vert }{\sqrt{n}}\right)  ^{k}\left\{  \mathrm{E}_{\theta}\cosh
e^{T}\ell_{\theta}+1\right\} \nonumber\\
&  =\left(  \frac{\left\Vert h\right\Vert }{\sqrt{n}}\right)  ^{3}%
\frac{4-5\left\Vert h\right\Vert /\sqrt{n}}{1-\left\Vert h\right\Vert
/\sqrt{n}}\left(  \mathrm{E}_{\theta}\cosh e^{T}\ell_{\theta}+1\right)  .
\label{rem-upper}%
\end{align}
Therefore, for each compact set $K\subset\Theta$ and $K^{\prime}\subset%
\mathbb{R}
^{m}$, there is $n_{K,K^{\prime}}$ such that
\[
\mathrm{E}_{\theta}^{n}e^{h^{T}\ell_{\theta}^{n}}\leq\left(  1+\frac
{h^{T}J_{\theta}h}{n}\right)  ^{n}\leq e^{h^{T}J_{\theta}h},\,\forall n\geq
n_{K,K^{\prime}}.
\]
Hence, we have (\ref{EZ<finite}).

Also, we use the following identity :%
\begin{equation}
\lim_{a\rightarrow\infty}\sup_{n\geq n_{K,K^{\prime}}}\sup_{h\in K^{\prime}%
}\sup_{\theta\in K}\mathrm{E}_{\theta}^{n}\left[  e^{h^{T}\ell_{\theta}^{n}%
}:e^{h^{T}\ell_{\theta}^{n}}\geq a\right]  =0, \label{E[e>a]}%
\end{equation}
which is proved as follows.%

\begin{align*}
&  \lim_{a\rightarrow\infty}\sup_{n\geq n_{K,K^{\prime}}}\sup_{h\in K^{\prime
}}\sup_{\theta\in K}\mathrm{E}_{\theta}^{n}\left[  e^{h^{T}\ell_{\theta}^{n}%
}:e^{h^{T}\ell_{\theta}^{n}}\geq a\right] \\
&  \leq\lim_{a\rightarrow\infty}\sup_{n\geq n_{K,K^{\prime}}}\sup_{h\in
K^{\prime}}\sup_{\theta\in K}\sqrt{\mathrm{E}_{\theta}^{n}\left[
e^{2h^{T}\ell_{\theta}^{n}}\right]  P_{\theta}^{n}\left\{  e^{h^{T}%
\ell_{\theta}^{n}}\geq a\right\}  }\\
&  \leq\lim_{a\rightarrow\infty}\sup_{n\geq n_{K,K^{\prime}}}\sup_{h\in
K^{\prime}}\sup_{\theta\in K}e^{2h^{T}J_{\theta}h}\sqrt{P_{\theta}^{n}\left\{
e^{h^{T}\ell_{\theta}^{n}}\geq a\right\}  }\\
&  \leq\lim_{a\rightarrow\infty}\sup_{n\geq n_{K,K^{\prime}}}\sup_{h\in
K^{\prime}}\sup_{\theta\in K}e^{2h^{T}J_{\theta}h}\sqrt{\frac{1}{a}%
\mathrm{E}_{\theta}^{n}e^{h^{T}\ell_{\theta}^{n}}}\\
&  \leq\lim_{a\rightarrow\infty}\frac{1}{a}\sup_{h\in K^{\prime}}\sup
_{\theta\in K}e^{\frac{5}{2}h^{T}J_{\theta}h}=0.
\end{align*}

\begin{lemma}
\label{lem:converge}Suppose random variables $X_{n,t}$ , and $Y_{n,t}$,
$n\geq1$, $t\in T$, taking values in $%
\mathbb{R}
^{k,}$ satisfies
\begin{equation}
\lim_{n\rightarrow\infty}\sup_{t\in T}\Pr\left\{  \left\Vert X_{n,t}%
-Y_{n,t}\right\Vert >\varepsilon\right\}  =0, \label{prob-converge}%
\end{equation}
Let $f$ be a continuously differentiable function from $%
\mathbb{R}
^{k}$ to $%
\mathbb{R}
$ such that,
\begin{equation}
\sup_{x:f\left(  x\right)  \leq a}\left\Vert \nabla_{x}f\left(  x\right)
\right\Vert <\infty,\,\, \label{f-1}%
\end{equation}
and
\begin{align}
\lim_{a\rightarrow\infty}\lim_{n\rightarrow\infty}\sup_{t\in T}\mathrm{E}%
\left[  f\left(  X_{n,t}\right)  :f\left(  X_{n,t}\right)  >a\right]   &
<\infty,\label{f-2}\\
\,\lim_{a\rightarrow\infty}\lim_{n\rightarrow\infty}\sup_{t\in T}%
\mathrm{E}\left[  f\left(  X_{n,t}\right)  :f\left(  Y_{n,t}\right)
>a\right]   &  <\infty. \label{f-3}%
\end{align}
Then,
\[
\lim_{n\rightarrow\infty}\sup_{t\in T}\left\vert \mathrm{E}f\left(
X_{n,t}\right)  -\mathrm{E}f\left(  Y_{n,t}\right)  \right\vert =0
\]

\end{lemma}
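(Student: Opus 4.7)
The plan is to first turn the hypothesis $X_{n,t}-Y_{n,t}\to 0$ in probability (uniformly in $t$) into the same statement for $f(X_{n,t})-f(Y_{n,t})$ via a local Lipschitz bound coming from (\ref{f-1}), and then to upgrade that to $L^{1}$-convergence via the uniform integrability encoded in (\ref{f-2})--(\ref{f-3}).

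The key geometric step: fix $a>0$ and let $L_{a}:=\sup_{\{f\leq 2a\}}\|\nabla f\|$, which is finite by (\ref{f-1}). I claim that whenever $f(X_{n,t})\leq a$ and $\|X_{n,t}-Y_{n,t}\|\leq\delta$ with $\delta<a/L_{a}$, the segment $[X_{n,t},Y_{n,t}]$ stays inside $\{f\leq 2a\}$, so $|f(X_{n,t})-f(Y_{n,t})|\leq L_{a}\delta$. Indeed, parametrising $g(s):=f(X_{n,t}+s(Y_{n,t}-X_{n,t}))$, if $g$ first hit $2a$ at some $s_{*}\in(0,1]$, then $\|\nabla f\|\leq L_{a}$ along $[0,s_{*})$ and integration would give $a\leq 2a-g(0)\leq L_{a}\delta<a$, a contradiction. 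A union bound then yields
\[
\Pr\{|f(X_{n,t})-f(Y_{n,t})|>L_{a}\delta\}\leq \Pr\{f(X_{n,t})>a\}+\Pr\{\|X_{n,t}-Y_{n,t}\|>\delta\}.
\]
The first term is bounded by $a^{-1}\mathrm{E}[f(X_{n,t}):f(X_{n,t})>a]$ by Markov, which is $O(1/a)$ uniformly in $t$ for $n$ large by (\ref{f-2}); the second vanishes uniformly by (\ref{prob-converge}). Sending $n\to\infty$, then $a\to\infty$, then $\delta\downarrow 0$ gives $f(X_{n,t})-f(Y_{n,t})\to 0$ in probability, uniformly in $t$.

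For the passage to $L^{1}$, set $E_{n,t}:=\{f(X_{n,t})\leq a,\,\|X_{n,t}-Y_{n,t}\|\leq\delta\}$ and decompose
\[
|\mathrm{E}f(X_{n,t})-\mathrm{E}f(Y_{n,t})|\leq L_{a}\delta+\mathrm{E}[|f(X_{n,t})|\,\mathbf{1}_{E_{n,t}^{c}}]+\mathrm{E}[|f(Y_{n,t})|\,\mathbf{1}_{E_{n,t}^{c}}].
\]
Since $E_{n,t}^{c}\subseteq\{f(X_{n,t})>a\}\cup\{\|X_{n,t}-Y_{n,t}\|>\delta\}$, the $X_{n,t}$-tail is handled by (\ref{f-2}) on the first piece and by the crude bound $|f(X_{n,t})|\leq a$ together with (\ref{prob-converge}) on the second. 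The $Y_{n,t}$-tail is analogous: the local Lipschitz bound together with the cross-integrability (\ref{f-3}) lets one transfer the control from $f(X_{n,t})$ to $f(Y_{n,t})$ on $\{f(Y_{n,t})>a\}$. Taking the limits in the order $n\to\infty$, $a\to\infty$, $\delta\downarrow 0$ finishes the proof.

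The principal difficulty is the asymmetry of (\ref{f-2})--(\ref{f-3}) in $X_{n,t}$ and $Y_{n,t}$: uniform integrability is only stated directly for $f(X_{n,t})$, so the tails of $f(Y_{n,t})$ must be recovered through the cross-bound (\ref{f-3}) combined with the Lipschitz comparison from Step 1. Interleaving the three thresholds $a,\delta$, and the probability-convergence rate while preserving uniformity in $t$ is where the book-keeping is delicate; one also tacitly reads (\ref{f-2})--(\ref{f-3}) as genuine uniform-integrability statements, i.e.\ that the outer $\lim_{a\to\infty}$ in fact drives these tail expectations to zero.
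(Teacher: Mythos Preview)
Your approach is essentially the paper's: a Lipschitz bound on a truncation to handle the bulk, plus the tail control from (\ref{f-2})--(\ref{f-3}). The one structural difference is that the paper truncates the \emph{function}, setting $f^{a}:=f\wedge a$, rather than truncating on the \emph{event} $E_{n,t}$. This buys a cleaner argument: $f^{a}$ is globally Lipschitz with constant $C\le\sup_{\{f\le a\}}\|\nabla f\|$ (since along any segment the a.e.\ derivative of $f^{a}$ is either $\nabla f$ on $\{f<a\}$ or $0$ on $\{f>a\}$), so one gets directly
\[
|\mathrm{E}f^{a}(X_{n,t})-\mathrm{E}f^{a}(Y_{n,t})|\le \varepsilon + a\cdot\sup_{t}\Pr\{C\|X_{n,t}-Y_{n,t}\|>\varepsilon\}\to\varepsilon,
\]
and the remaining two terms are exactly the tail expectations $\mathrm{E}[f(X_{n,t}):f(X_{n,t})>a]$ and $\mathrm{E}[f(Y_{n,t}):f(Y_{n,t})>a]$. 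In particular there is no need for your barrier argument on the segment, nor for the somewhat delicate ``transfer from $f(X)$ to $f(Y)$ via (\ref{f-3})'' that you sketch for the $Y$-tail; the function truncation symmetrises the two tails automatically. Your observation that (\ref{f-2})--(\ref{f-3}) must be read as genuine uniform integrability (tails $\to 0$, not merely bounded) is exactly how the paper uses them as well.
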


\begin{proof}
Define%
\[
f^{a}\left(  x\right)  :=f\left(  x\right)  \wedge a.
\]
Then,
\begin{align}
&  \lim_{n\rightarrow\infty}\sup_{t\in T}\left\vert \mathrm{E}f\left(
X_{n,t}\right)  -\mathrm{E}f\left(  Y_{n,t}\right)  \right\vert \nonumber\\
&  \leq\lim_{n\rightarrow\infty}\sup_{t\in T}\left\vert \mathrm{E}f^{a}\left(
X_{n,t}\right)  -\mathrm{E}f^{a}\left(  X_{t}\right)  \right\vert \nonumber\\
&  +\lim_{n\rightarrow\infty}\sup_{t\in T}\left\vert \mathrm{E}\left[
f\left(  X_{n,t}\right)  :f\left(  X_{n,t}\right)  >a\right]  \right\vert
\nonumber\\
&  +\lim_{n\rightarrow\infty}\sup_{t\in T}\left\vert \mathrm{E}\left[
f\left(  Y_{n,t}\right)  :f\left(  Y_{n,t}\right)  >a\right]  \right\vert .
\label{E-E}%
\end{align}

The first term of the right hand side is evaluated as follows.%
\[
\left\vert f^{a}\left(  X_{n,t}\right)  -f^{a}\left(  Y_{n,t}\right)
\right\vert \leq C\left\Vert X_{n,t}-Y_{n,t}\right\Vert ,\forall t\in T
\]
where%
\[
C\leq\sup_{x:f\left(  x\right)  \leq a}\left\Vert \nabla_{x}f\left(  x\right)
\right\Vert <\infty.
\]
Therefore,
\begin{align*}
\lim_{n\rightarrow\infty}\sup_{t\in T}\left\vert \mathrm{E}f^{a}\left(
X_{n,t}\right)  -\mathrm{E}f^{a}\left(  Y_{n,t}\right)  \right\vert  &
\leq\varepsilon+a\times\lim_{n\rightarrow\infty}\sup_{t\in T}\Pr\left\{
\left\vert f^{a}\left(  X_{n,t}\right)  -f^{a}\left(  Y_{n,t}\right)
\right\vert >\varepsilon\right\} \\
&  =\varepsilon+a\times\lim_{n\rightarrow\infty}\sup_{t\in T}\Pr\left\{
C\left\vert X_{n,t}-Y_{n,t}\right\vert >\varepsilon\right\} \\
&  =\varepsilon.
\end{align*}
This can be made arbitrarily small, since $\varepsilon>0$ is arbitrary.

The second and the third terms of the right hand side of ( \ref{E-E}) can be
made \ arbitrarily small by taking $a$ large. Hence, we have the assertion.
\end{proof}

\begin{lemma}
\label{lem:EZ-EZl}Suppose  $\theta\rightarrow p_{\theta}$ is continuously
differentiable in quadratic mean, and (\ref{EZ<infty}) holds. Then, for any
compact set $K\subset\Theta$ and $K^{\prime}\subset%
\mathbb{R}
^{m}$,
\[
\lim_{n\rightarrow\infty}\sup_{h\in K^{\prime}}\sup_{\theta\in K}%
\mathrm{E}_{\theta}^{n}\left\vert Z_{\theta,h}^{n}-Z_{\theta,h}\left(
\ell_{\theta}^{n}\right)  \right\vert =0.
\]

\end{lemma}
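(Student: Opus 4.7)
The plan is to apply Lemma~\ref{lem:converge} twice, each time with $f(x) = e^x$ and with the same $X_{n,(\theta,h)} := \min(\ln Z_{\theta,h}^n, \ln Z_{\theta,h}(\ell_\theta^n))$, but with $Y_{n,(\theta,h)}$ taken to be $\ln Z_{\theta,h}(\ell_\theta^n)$ in one case and $\ln Z_{\theta,h}^n$ in the other. Since $f(X_{n,t}) = \min(Z_{\theta,h}^n, Z_{\theta,h}(\ell_\theta^n))$ and the pointwise identity $V - \min(U,V) = (V-U)_+$ holds, the two applications yield $\mathrm{E}_\theta^n (Z_{\theta,h}(\ell_\theta^n) - Z_{\theta,h}^n)_+ \to 0$ and $\mathrm{E}_\theta^n (Z_{\theta,h}^n - Z_{\theta,h}(\ell_\theta^n))_+ \to 0$, both uniformly on $K \times K'$; adding them gives the stated uniform convergence of $\mathrm{E}_\theta^n|Z_{\theta,h}^n - Z_{\theta,h}(\ell_\theta^n)|$ to zero.

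To verify the hypotheses of Lemma~\ref{lem:converge}, note first that the bound $|X_{n,t} - Y_{n,t}| \le |\ln Z_{\theta,h}^n - \ln Z_{\theta,h}(\ell_\theta^n)|$ reduces the required uniform convergence in probability (\ref{prob-converge}) to Lemma~\ref{lem:likelihood-expansion}, and (\ref{f-1}) is immediate since $\nabla(e^x) = e^x \le a$ on the sublevel set $\{e^x \le a\}$. The tail hypotheses (\ref{f-2}) and (\ref{f-3}) require more attention: the key observation is that $f(X_{n,t}) = \min \le Z_{\theta,h}(\ell_\theta^n)$, so every tail expectation of $f(X_{n,t})$ is dominated by some $\mathrm{E}_\theta^n[Z_{\theta,h}(\ell_\theta^n)\mathbf{1}_E]$. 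By Cauchy--Schwarz this is at most $\sqrt{\mathrm{E}_\theta^n Z_{\theta,h}(\ell_\theta^n)^2 \cdot P_\theta^n(E)}$, and (\ref{EZ<finite}) applied to $2h$ supplies the uniform $L^2$-bound $\mathrm{E}_\theta^n Z_{\theta,h}(\ell_\theta^n)^2 \le e^{3 h^T J_\theta h}$ for $n$ sufficiently large. The only tail not governed by $Z_{\theta,h}(\ell_\theta^n)$ itself arises in condition (\ref{f-3}) of the second application as $\{Z_{\theta,h}^n > a\}$, which is handled by the crude Markov estimate $P_\theta^n\{Z_{\theta,h}^n > a\} \le \mathrm{E}_\theta^n Z_{\theta,h}^n/a \le 1/a$.

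The main subtlety I anticipate is ensuring uniformity in $(\theta,h) \in K \times K'$ of every exponential moment estimate, but this is already built into the uniform bounds (\ref{exp-rem})--(\ref{EZ<finite}) and the uniform integrability result (\ref{E[e>a]}) proved earlier in the paper. It is worth emphasizing that the argument deliberately avoids any direct $L^2$-control on the exact likelihood ratio $Z_{\theta,h}^n$---which the hypotheses do not supply---by exploiting the one-sided comparison $\min \le Z_{\theta,h}(\ell_\theta^n)$ and pushing the $Z_{\theta,h}^n$ contribution into the $Y$-variable, where only the weak Markov bound above is needed.
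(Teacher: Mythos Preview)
Your min-trick is a nice device for extracting an $L^{1}$ conclusion from the expectation-difference conclusion of Lemma~\ref{lem:converge}, but the second application has a gap. You verify (\ref{f-3}) as literally printed---with $f(X_{n,t})$ in the integrand---but inspect the proof of Lemma~\ref{lem:converge}: the third term of the decomposition~(\ref{E-E}) is $\lvert\mathrm{E}[f(Y_{n,t}):f(Y_{n,t})>a]\rvert$, so the condition actually needed is uniform integrability of $f(Y)$, not of $f(X)$ restricted to the event $\{f(Y)>a\}$. With the literal (\ref{f-3}) the lemma is in fact false: take $f=\exp$, $X_{n}\equiv 0$, and $Y_{n}=n$ with probability $1/n$ and $0$ otherwise; every stated hypothesis holds, yet $\mathrm{E}f(Y_{n})-\mathrm{E}f(X_{n})\to\infty$. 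In your second application $f(Y_{n,t})=Z_{\theta,h}^{n}$, so what is genuinely required is $\sup_{\theta\in K,\,h\in K'}\mathrm{E}_{\theta}^{n}\bigl[Z_{\theta,h}^{n}:Z_{\theta,h}^{n}>a\bigr]\to 0$ --- precisely the uniform-integrability estimate on the true likelihood ratio that you set out to avoid.

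The paper's own argument confronts this head-on: it applies Lemma~\ref{lem:converge} once with $X_{n,t}=\ln Z_{\theta,h}^{n}$ and $Y_{n,t}=\ln Z_{\theta,h}(\ell_{\theta}^{n})$, and devotes most of the work to establishing $\mathrm{E}_{\theta}^{n}[Z_{\theta,h}^{n}:Z_{\theta,h}^{n}\ge a]\to 0$ via a continuous cutoff $g_{a}$ and comparison back to the $Z_{\theta,h}(\ell_{\theta}^{n})$-tail already controlled by~(\ref{E[e>a]}). If you wish to keep the spirit of your approach and truly bypass that estimate, note that your \emph{first} application is valid under the corrected hypothesis (there $f(Y)=Z_{\theta,h}(\ell_{\theta}^{n})$, handled by~(\ref{E[e>a]})) and yields $\mathrm{E}_{\theta}^{n}\bigl[(Z_{\theta,h}(\ell_{\theta}^{n})-Z_{\theta,h}^{n})_{+}\bigr]\to 0$ uniformly; the other half then follows without any second application from $\mathrm{E}_{\theta}^{n}Z_{\theta,h}^{n}\le 1$, $\mathrm{E}_{\theta}^{n}Z_{\theta,h}(\ell_{\theta}^{n})\to 1$ uniformly (by (\ref{exp-rem})--(\ref{rem-upper})), and the identity $\mathrm{E}(U-V)_{+}=\mathrm{E}(V-U)_{+}+\mathrm{E}U-\mathrm{E}V$.
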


\begin{proof}
We apply Lemma\thinspace\ref{lem:converge}, with $f\left(  x\right)  :=e^{x}$,
$t=\left(  \theta,h\right)  $, $X_{n,t}:=\ln Z_{\theta,h}^{n}$ and
$Y_{n,t}:=\ln Z_{\theta,h}\left(  \ell_{\theta}^{n}\right)  =h^{T}\ell
_{\theta}^{n}-\frac{1}{2}h^{T}J_{\theta}h$. Then, the premises
(\ref{prob-converge}) and (\ref{f-1}) are obviously satisfied.

Due to (\ref{E[e>a]}), (\ref{f-2}) is satisfied:
\[
\lim_{n\rightarrow\infty}\sup_{h\in K^{\prime}}\sup_{\theta\in K}%
\mathrm{E}_{\theta}^{n}\left[  Z_{\theta,h}\left(  \ell_{\theta}^{n}\right)
:Z_{\theta,h}\left(  \ell_{\theta}^{n}\right)  \geq a\right]  \rightarrow
0,a\rightarrow\infty.
\]

(\ref{f-3}) is proved as follows. Let $g_{a}\left(  x\right)  $ be a
continuous function on $%
\mathbb{R}
_{+}$ such that $g_{a}\left(  x\right)  =1$ for $x\leq a-1$ and $g_{a}\left(
x\right)  =0$ for $x\geq a$. Then,
\begin{align*}
&  \lim_{n\rightarrow\infty}\sup_{h\in K^{\prime}}\sup_{\theta\in K}%
\mathrm{E}_{\theta}^{n}\left[  Z_{\theta,h}^{n}\,:Z_{\theta,h}^{n}\geq
a\right] \\
&  \leq\lim_{n\rightarrow\infty}\sup_{h\in K^{\prime}}\sup_{\theta\in
K}\left\{  1-\mathrm{E}_{\theta}^{n}\left[  Z_{\theta,h}^{n}\,g_{a}\left(
Z_{\theta,h}^{n}\,\right)  \right]  \right\} \\
&  \leq\lim_{n\rightarrow\infty}\sup_{h\in K^{\prime}}\sup_{\theta\in
K}\left\{  1-\mathrm{E}_{\theta}^{n}\left[  Z_{\theta,h}\left(  \ell_{\theta
}^{n}\right)  \,g_{a}\left(  Z_{\theta,h}\left(  \ell_{\theta}^{n}\right)
\,\,\right)  \right]  \right\} \\
&  +\sup_{x}\left\{  \left(  x+\varepsilon\right)  g_{a}\left(  x+\varepsilon
\right)  -xg_{a}\left(  x\right)  \right\} \\
&  +a\lim_{n\rightarrow\infty}P_{\theta}^{n}\left\{  \left\vert Z_{\theta
,h}^{n}-Z_{\theta,h}\left(  \ell_{\theta}^{n}\right)  \right\vert
>\varepsilon\right\} \\
&  \leq\lim_{n\rightarrow\infty}\sup_{h\in K^{\prime}}\sup_{\theta\in
K}\mathrm{E}_{\theta}^{n}\left[  Z_{\theta,h}\left(  \ell_{\theta}^{n}\right)
\,:Z_{\theta,h}\left(  \ell_{\theta}^{n}\right)  \geq a-1\,\,\right] \\
&  +\sup_{x}\left\{  \left(  x+\varepsilon\right)  g_{a}\left(  x+\varepsilon
\right)  -xg_{a}\left(  x\right)  \right\} \\
&  +a\lim_{n\rightarrow\infty}P_{\theta}^{n}\left\{  \left\vert Z_{\theta
,h}^{n}-Z_{\theta,h}\left(  \ell_{\theta}^{n}\right)  \right\vert
>\varepsilon\right\} \\
&  =\lim_{n\rightarrow\infty}\sup_{h\in K^{\prime}}\sup_{\theta\in
K}\mathrm{E}_{\theta}^{n}\left[  Z_{\theta,h}\left(  \ell_{\theta}^{n}\right)
\,:Z_{\theta,h}\left(  \ell_{\theta}^{n}\right)  \geq a-1\,\,\right] \\
&  +\sup_{x}\left\{  \left(  x+\varepsilon\right)  g_{a}\left(  x+\varepsilon
\right)  -xg_{a}\left(  x\right)  \right\}  .
\end{align*}
Since $\varepsilon>0$ is arbitrary and $g_{a}$ is continuous,
\begin{align*}
&  \lim_{n\rightarrow\infty}\sup_{h\in K^{\prime}}\sup_{\theta\in K}%
\mathrm{E}_{\theta}^{n}\left[  Z_{\theta,h}^{n}\,:Z_{\theta,h}^{n}\geq
a\right] \\
&  \leq\lim_{n\rightarrow\infty}\sup_{h\in K^{\prime}}\sup_{\theta\in
K}\mathrm{E}_{\theta}^{n}\left[  Z_{\theta,h}\left(  \ell_{\theta}^{n}\right)
\,:Z_{\theta,h}\left(  \ell_{\theta}^{n}\right)  \geq a-1\,\,\right] \\
&  \rightarrow0,\,a\rightarrow\infty.
\end{align*}
So, we have the assertion.
\end{proof}

\begin{lemma}
\label{lem:EZ-EZl-1}Suppose  $\theta\rightarrow p_{\theta}$ is differentiable
in quadratic mean, and (\ref{EZ<infty}) holds. Then, for any compact set
$K^{\prime}\subset%
\mathbb{R}
^{m}$,
\[
\lim_{n\rightarrow\infty}\sup_{h\in K^{\prime}}\mathrm{E}_{\theta}%
^{n}\left\vert Z_{\theta,h}^{n}-Z_{\theta,h}\left(  \ell_{\theta}^{n}\right)
\right\vert =0.
\]

\end{lemma}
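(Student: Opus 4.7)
The plan is to mirror the proof of Lemma \ref{lem:EZ-EZl}, dropping only the supremum over $\theta\in K$. Concretely, I will invoke Lemma \ref{lem:converge} with index $t=h$, function $f(x):=e^{x}$, and vectors $X_{n,h}:=\ln Z_{\theta,h}^{n}$, $Y_{n,h}:=\ln Z_{\theta,h}(\ell_{\theta}^{n})=h^{T}\ell_{\theta}^{n}-\tfrac{1}{2}h^{T}J_{\theta}h$. Premise (\ref{prob-converge}) is exactly Lemma \ref{lem:likelihood-expansion-1}, which requires only differentiability in quadratic mean; premise (\ref{f-1}) is immediate since $\|\nabla e^{x}\|=e^{x}\leq a$ on $\{x\leq\ln a\}$.

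For the two uniform-integrability premises I will first observe that the derivations of (\ref{EZ<finite}) and of its consequence (\ref{E[e>a]}) use only $\mathrm{E}_{\theta}\ell_{\theta}=0$ (granted by Lemma \ref{lem:likelihood-expansion-1}) together with the moment hypothesis (\ref{EZ<infty}); specializing them to the singleton compact $K=\{\theta\}$ therefore yields
\[
\lim_{a\to\infty}\lim_{n\to\infty}\sup_{h\in K^{\prime}}\mathrm{E}_{\theta}^{n}\bigl[Z_{\theta,h}(\ell_{\theta}^{n}):Z_{\theta,h}(\ell_{\theta}^{n})\geq a\bigr]=0,
\]
which is premise (\ref{f-2}) for $Y_{n,h}$. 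To obtain the analogous tail bound for $X_{n,h}$, required by premise (\ref{f-3}), I will reuse verbatim the continuous-truncation step of Lemma \ref{lem:EZ-EZl}: pick $g_{a}$ continuous on $\mathbb{R}_{+}$ with $g_{a}=1$ on $[0,a-1]$ and $g_{a}=0$ on $[a,\infty)$; the normalization $\mathrm{E}_{\theta}^{n}Z_{\theta,h}^{n}=1$ gives
\[
\mathrm{E}_{\theta}^{n}\bigl[Z_{\theta,h}^{n}:Z_{\theta,h}^{n}\geq a\bigr]\leq 1-\mathrm{E}_{\theta}^{n}\bigl[Z_{\theta,h}^{n}g_{a}(Z_{\theta,h}^{n})\bigr],
\]
and the right-hand side is compared to $1-\mathrm{E}_{\theta}^{n}[Z_{\theta,h}(\ell_{\theta}^{n})g_{a}(Z_{\theta,h}(\ell_{\theta}^{n}))]$ up to an error controlled by (\ref{prob-converge}) and the modulus of continuity of $x\,g_{a}(x)$; letting first $\varepsilon\downarrow 0$ and then $a\to\infty$ closes out the bound.

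The hardest step will be this truncation comparison, but it has already been executed in Lemma \ref{lem:EZ-EZl} and I expect it to transfer without change, because the $\sup_{\theta\in K}$ played no active role there beyond what is still present when $K$ is the singleton $\{\theta\}$. Once both integrability premises are in place, Lemma \ref{lem:converge} delivers the desired conclusion.
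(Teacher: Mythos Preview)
Your proposal is correct and follows exactly the route the paper takes: the paper's own proof is the single remark that it is ``almost parallel with the one of Lemma~\ref{lem:EZ-EZl}, except that Lemma~\ref{lem:likelihood-expansion-1} is used instead of Lemma~\ref{lem:likelihood-expansion} and that $\sup_{\theta\in K}$ at each step is removed.'' Your expansion of this into the explicit verification of the premises of Lemma~\ref{lem:converge}, including the reuse of the $g_{a}$ truncation step, is precisely what that remark intends.
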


\begin{proof}
The proof is almost parallel with the one of Lemma\thinspace\ref{lem:EZ-EZl},
except that Lemma\thinspace\ref{lem:likelihood-expansion-1} is used instead of
Lemma\thinspace\ref{lem:likelihood-expansion} and that $\sup_{\theta\in K}$ at
each step is removed.
\end{proof}

Below, we denote by $C\left(  h,r\right)  $ the closed $m$-dimensional
hypercube which is centered at $h\in%
\mathbb{R}
^{m}$, parallel to the coordinate axis, and of edge length $2r$. Also, $2^{-k}%
\mathbb{Z}
^{m}$ is an element of $%
\mathbb{R}
^{m}$ whose coordinates are integer multiple of $2^{-k}$.

\begin{lemma}
\label{lem:cell}Let $\Theta_{0}$ be a countable subset of $\Theta$ and $c_{n}$
be a positive constant. Then, to every ordered correction $\left(  i_{1}%
,i_{2},\cdots,i_{k}\right)  $ associate a Borel set $S_{\left(  i_{1}%
,i_{2},\cdots,i_{k}\right)  }$ in $%
\mathbb{R}
^{m}$ such that%
\begin{align}
S_{\left(  i_{1},i_{2},\cdots,i_{k}\right)  }\cap S_{\left(  j_{1}%
,j_{2},\cdots,j_{k}\right)  }  &  =\emptyset,\,\,\,\,\left(  i_{1}%
,i_{2},\cdots,i_{k}\right)  \neq\left(  j_{1},j_{2},\cdots,j_{k}\right)
,\label{cell-1}\\
\text{Diameter \ of }S_{\left(  i_{1},i_{2},\cdots,i_{k}\right)  }  &
<\sqrt{m}2^{-k+2}\,\,\,\,\,\left(  k\geq1\right)  ,\label{cell-2}\\
P_{\theta}^{n}\left\{  \ell_{\theta}^{n}\in\partial S_{\left(  i_{1}%
,i_{2},\cdots,i_{k}\right)  }\right\}   &  =0,\,\,\forall\theta\in\Theta
_{0},\forall n\label{cell-3}\\
\bigcup_{j=1}^{N_{n}}S_{j}  &  \supset\left[  -c_{n},c_{n}\right]  ^{m}%
,N_{n}:=\left(  2c_{n}+1\right)  ^{m},\label{cell-4}\\
\bigcup_{j=1}^{\infty}S_{j}  &  =%
\mathbb{R}
^{m},\label{cell-5}\\
\,\bigcup_{j=1}^{5^{m}}S_{\left(  i_{1},\cdots,i_{k-1},j\right)  }  &
=S_{\left(  i_{1},\cdots,i_{k-1}\right)  }.\text{ } \label{cell-6}%
\end{align}

\end{lemma}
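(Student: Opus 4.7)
The plan is to realize the cells as half-open cubes drawn from a shifted $5$-adic grid, with the shift chosen so that every grid hyperplane avoids the countable set of bad coordinate values at which some relevant measure charges a coordinate hyperplane. For each $\theta \in \Theta_0$ and $n \geq 1$, let $\mu_{\theta,n}$ denote the law of $\ell_{\theta}^{n}$ on $\mathbb{R}^m$. Because any finite Borel measure on $\mathbb{R}$ has at most countably many atoms, for each coordinate $i$ there is a countable set $B_{\theta,n,i} \subset \mathbb{R}$ outside of which $\mu_{\theta,n}\{x : x_i = a\} = 0$. The union $B := \bigcup_{\theta, n, i} B_{\theta,n,i}$ is still countable, since $\Theta_0 \times \mathbb{N} \times \{1,\dots,m\}$ is countable.

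Next, I would pick a shift $\alpha = (\alpha_1,\dots,\alpha_m)$ with small $|\alpha_i|$ such that $\alpha_i + k\cdot 5^{-j} \notin B$ for every $i$, every $k \in \mathbb{Z}$, and every integer $j \geq 0$; this excludes only a countable set of values for each $\alpha_i$, so a suitable $\alpha$ exists. The level-$1$ cells are then the unit half-open cubes $\alpha + \prod_{i=1}^{m} [n_i, n_i+1)$ indexed by $n \in \mathbb{Z}^m$, enumerated $S_1, S_2, \dots$ in nondecreasing $\ell_\infty$-norm of $n$ (with $\alpha_i$ chosen on the appropriate side of $0$ so that the indices $|n_i| \leq c_n$ already cover $[-c_n, c_n]$), so that, after replacing $c_n$ by $\lceil c_n \rceil$ (which only strengthens the claim), the first $N_n = (2c_n+1)^m$ cells cover $[-c_n, c_n]^m$. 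Each level-$(k{-}1)$ cell is then partitioned into $5^m$ congruent half-open subcubes of edge $5^{-(k-1)}$, labeled $1$ through $5^m$; these are its level-$k$ children.

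Verification of conditions (cell-1)--(cell-6) is then essentially mechanical. Cells at a common level are pairwise disjoint by the half-open construction, giving (cell-1); the level-$k$ diameter is $\sqrt{m}\,5^{-(k-1)} \leq \sqrt{m}\,2^{-k+2}$, since $5^{k-1} \geq 2^{k-2}$ for all $k \geq 1$ with ample slack, giving (cell-2); the enumeration handles (cell-4) and (cell-5); and (cell-6) is immediate from the recursive subdivision. Finally, the topological boundary of each level-$k$ cell lies in finitely many hyperplanes of the form $\{x : x_i = \alpha_i + \ell\cdot 5^{-(k-1)}\}$, and the choice of $\alpha$ guarantees that each such hyperplane carries zero mass under every $\mu_{\theta,n}$, yielding (cell-3).

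The one moderately delicate step is the simultaneous avoidance in the choice of $\alpha$: a single shift must render every grid hyperplane at every $5$-adic scale good for every $(\theta, n) \in \Theta_0 \times \mathbb{N}$. This succeeds only because $\Theta_0$ is countable, each $\mu_{\theta,n}$ contributes only countably many bad atoms per coordinate, and each scale contributes only countably many candidate hyperplanes, so the total set of forbidden values for $\alpha$ remains countable. Nothing else in the argument goes beyond routine bookkeeping.
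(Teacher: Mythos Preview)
Your argument is correct and takes a genuinely different route from the paper's. The paper does not shift a fixed grid; instead it keeps the cube \emph{centers} on the integer (respectively $2^{-k+1}\mathbb{Z}^m$) lattice and perturbs the cube \emph{radii} $r_k$ within the dyadic window $(2^{-k},2^{-k+1})$ so that each $P_\theta^n\{\ell_\theta^n\in\partial C(h,r_k)\}=0$, again using countability of $\Theta_0$. Because these cubes then overlap, the paper builds the cells by successive set subtraction, $S_{(i_1,\dots,i_k)}=S_{(i_1,\dots,i_{k-1})}\cap\bigl(C(h_{i_1,\dots,i_k},r_k)\setminus\bigcup_{j<i_k}S_{(i_1,\dots,i_{k-1},j)}\bigr)$, and then has to check separately that $5^m$ such children really re-cover the parent, that the boundaries stay inside the controlled hyperplanes, and so on.

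Your half-open $5$-adic tiling with a single global shift $\alpha$ is cleaner: the cells are honest cubes, disjointness and the exact $5^m$-to-$1$ refinement are automatic, and the diameter bound holds with room to spare since $5^{-(k-1)}<2^{-k+2}$. The only point that needs care---and you identify it---is that one shift must clear \emph{all} grid hyperplanes at \emph{all} $5$-adic scales simultaneously; this works because $\{k\,5^{-j}:k\in\mathbb{Z},\,j\ge 0\}$ is countable, so the forbidden set for each $\alpha_i$ is a countable union of translates of the countable atom set $B$. The minor bookkeeping about $c_n$ versus $\lceil c_n\rceil$ in (cell-4) is the same sloppiness already present in the paper's formulation (where $N_n=(2c_n+1)^m$ is used as an index bound even when $c_n\notin\mathbb{Z}$), and your fix is the natural one.
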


\begin{proof}
Since $\Theta_{0}$ is a countable set, we can choose an $r_{0}$ with
$c_{n}<r_{0}<c_{n}+\frac{1}{2}$ and
\begin{equation}
P_{\theta}^{n}\left\{  \ell_{\theta}^{n}\in C\left(  0,r_{0}\right)  \right\}
=0,\,\forall\theta\in\Theta_{0},\forall n. \label{Pr=0-0}%
\end{equation}
Also, we can choose $r_{k}$ with $2^{-k}<r_{k}<2^{-k+1}$ and
\begin{equation}
P_{\theta}^{n}\left\{  \ell_{\theta}^{n}\in C\left(  h,r_{k}\right)  \right\}
=0,\,\forall\theta\in\Theta_{0},\forall n, \label{Pr=0}%
\end{equation}
for all $h\in2^{-k+1}%
\mathbb{Z}
^{m}$.

First, we compose $S_{1}$, $S_{2}$, $\cdots$. Define $h_{j}\in%
\mathbb{Z}
^{m}$ so that $h_{1}$,$\cdots$,$h_{N_{n}}\in\left[  -c_{n},c_{n}\right]  ^{m}%
$, and that $\left\{  h_{j}\,;\,j=1,2,\cdots\right\}  =$ $%
\mathbb{Z}
^{m}$. Then, recursively define, for $j=1$,$\cdots$, $N_{n}$,
\[
S_{1}:=C\left(  0,r_{0}\right)  \cap C\left(  h_{1},r_{1}\right)
,\,\,\,S_{j}:=C\left(  0,r_{0}\right)  \cap\left\{  C\left(  h_{j}%
,r_{1}\right)  -\bigcup_{i=1}^{j-1}S_{i}\right\}  ,
\]
and, for $j\geq N_{n}+1$,
\[
S_{j}:=C\left(  h_{j},r_{1}\right)  -\bigcup_{i=1}^{j-1}S_{i}\text{.}%
\]

Since $\bigcup_{j=1}^{N_{n}}C\left(  h_{j},2^{-1}\right)  =C\left(
0,c_{n}+\frac{1}{2}\right)  $, we have $\bigcup_{j=1}^{N_{n}}C\left(
h_{j},r_{1}\right)  \supset C\left(  0,r_{0}\right)  .$ Also,
\[
\bigcup_{j=1}^{N_{n}}S_{j}=C\left(  0,r_{0}\right)  \cap\left\{  \bigcup
_{j=1}^{N_{n}}C\left(  h_{j},r_{1}\right)  \right\}  \text{.}%
\]
Therefore,
\[
\bigcup_{j=1}^{N_{n}}S_{j}=C\left(  0,r_{0}\right)  \supset\left[
-c_{n},c_{n}\right]  ^{m},
\]
indicating (\ref{cell-4}). Similarly, we have
\begin{align*}
\bigcup_{j=1}^{\infty}S_{j}  &  =C\left(  0,r_{0}\right)  \cup\bigcup
_{j=N_{n}+1}^{\infty}S_{j}=C\left(  0,r_{0}\right)  \cup\bigcup_{j=N_{n}%
+1}^{\infty}C\left(  h_{j},r_{1}\right) \\
&  \supset\left[  -c_{n},c_{n}\right]  ^{m}\cup\bigcup_{j=N_{n}+1}^{\infty
}C\left(  h_{j},\frac{1}{2}\right)  =%
\mathbb{R}
^{m},
\end{align*}
which is (\ref{cell-5}).

Next, we compose $S_{\left(  i_{1},\cdots,i_{k}\right)  }$ . For each $k\geq
2$, let $h_{i_{1},\cdots,i_{k}}$ ($i_{k}=1$,$\cdots$, $5^{m}$) be an element
of $2^{-k+1}%
\mathbb{Z}
^{m}$ with $h_{i_{1},\cdots,i_{k}}\in C\left(  h_{i_{1},\cdots,i_{k-1}%
},2^{-k+2}\right)  $. Then, we define, recursively,
\begin{align*}
S_{\left(  i_{1},\cdots,i_{k-1},1\right)  }  &  :=S_{\left(  i_{1}%
,\cdots,i_{k-1}\right)  }\cap C\left(  h_{i_{1},\cdots,i_{k-1},1}%
,r_{k}\right)  ,\\
\,\,S_{\left(  i_{1},\cdots,i_{k}\right)  }  &  :=S_{\left(  i_{1}%
,\cdots,i_{k-1}\right)  }\cap\left\{  C\left(  h_{i_{1},\cdots,i_{k-1},i_{k}%
},r_{k}\right)  -\bigcup_{j=1}^{i_{k}-1}S_{\left(  i_{1},\cdots,i_{k-1}%
,j\right)  }\right\}  .
\end{align*}

Since
\begin{align*}
\bigcup_{j=1}^{5^{m}}S_{\left(  i_{1},\cdots,i_{k-1},j\right)  }  &
\supset\bigcup_{j=1}^{5^{m}}C\left(  h_{i_{1},\cdots,i_{k-1},j},2^{-k}\right)
\\
&  =C\left(  h_{i_{1},\cdots,i_{k-1}},2^{-k+2}+2^{-k}\right)
\end{align*}
and
\[
C\left(  h_{i_{1},\cdots,i_{k-1}},2^{-k+2}+2^{-k}\right)  \supset C\left(
h_{i_{1},\cdots,i_{k-1}},r_{k-1}\right)  \supset S_{\left(  i_{1}%
,\cdots,i_{k-1}\right)  },
\]
we have
\[
\bigcup_{j=1}^{5^{m}}S_{\left(  i_{1},\cdots,i_{k-1},j\right)  }\supset
S_{\left(  i_{1},\cdots,i_{k-1}\right)  },
\]
which implies (\ref{cell-6}).

(\ref{cell-1}) is trivial by composition. (\ref{cell-2}) is due to
\[
\partial S_{\left(  i_{1},\cdots,i_{k-1},i_{k}\right)  }\subset\partial
S_{\left(  i_{1},\cdots,i_{k-1}\right)  }\cup\bigcup_{j=1}^{i_{k}}\partial
C\left(  h_{i_{1},\cdots,i_{k-1},j},r_{k}\right)  .
\]
Hence, by (\ref{Pr=0-0}) and (\ref{Pr=0}), recursively we have (\ref{cell-3}).
(\ref{cell-2}) is obvious from that $S_{\left(  i_{1},\cdots,i_{k}\right)  }$
is a subset of $C\left(  h_{i_{1},\cdots,i_{k}},r_{k}\right)  $.
\end{proof}

\begin{lemma}
\label{lem:skorohod}Suppose $\theta\rightarrow p_{\theta}$ is continuously
differentiable in quadratic mean, and (\ref{EZ<infty}) holds. Also, let
$\Theta_{0}$ be a countable subset of $\Theta$. Then, there are random
variables $\eta_{\theta}$ and $\eta_{\theta}^{n}$ ($n\geq1$) over $\left(
\left[  0,1\right]  ,\mathcal{B}\left(  \left[  0,1\right]  \times%
\mathbb{R}
^{m}\right)  ,\nu\right)  $, such that $\nu$ is Lebesgue measure,
\begin{align}
&  \mathcal{L}\left(  \eta_{\theta}|\nu\right)  =\mathrm{N}\left(
0,J_{\theta}\right)  ,\,\,\,\,\mathcal{L}\left(  \eta_{\theta}^{n}|\nu\right)
=\mathcal{L}\left(  \ell_{\theta}^{n}|P_{\theta}^{n}\right)
,\label{skorohod-1}\\
&  \lim_{n\rightarrow\infty}\sup_{\theta\in K\cap\Theta_{0}}\nu\left\{
\left\Vert \eta_{\theta}^{n}-\eta_{\theta}\right\Vert \geq\varepsilon\right\}
=0.\label{skorohod-2}%
\end{align}

\end{lemma}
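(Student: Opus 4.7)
The plan is to construct $\eta_\theta$ and $\eta_\theta^n$ by a Skorohod-type quantile transform with respect to the $\theta$-independent nested partition of Lemma~\ref{lem:cell}, and to obtain (\ref{skorohod-2}) from a uniform central limit theorem for $\ell_\theta^n$ combined with the boundary-null property of that partition. The first ingredient is uniform weak convergence $\mathcal{L}(\ell_\theta^n\mid P_\theta^n)\Rightarrow \mathrm{N}(0,J_\theta)$ over $\theta\in K$: the Taylor argument yielding (\ref{exp-rem})--(\ref{rem-upper}) applies with $it$ in place of $h$, producing uniform convergence of characteristic functions on compact $(t,\theta)$; combined with uniform tightness supplied by (\ref{l-moment<infty}) and continuity of $J_\theta$, this gives
\[
\lim_{n\to\infty}\sup_{\theta\in K}\bigl|P_\theta^n\{\ell_\theta^n\in S\}-\mathrm{N}(0,J_\theta)(S)\bigr|=0
\]
for every Borel set $S$ with Lebesgue-null boundary.

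Next, apply Lemma~\ref{lem:cell} with any $c_n\uparrow\infty$ and the given countable set $\Theta_0$ to obtain nested partitions $\{S_{(i_1,\dots,i_k)}\}_{k\ge 1}$ and their centers $h_{i_1,\dots,i_k}\in 2^{-k+1}\mathbb{Z}^m$, with level-$k$ diameters below $\sqrt{m}\,2^{-k+2}$ and boundaries simultaneously $P_\theta^n$-null for every $\theta\in\Theta_0$ and every $n$. On $([0,1],\mathcal{B},\nu)$ I perform the standard quantile construction twice, ordering the $5^m$ children of each parent in the construction order of Lemma~\ref{lem:cell}: assign to each cell a subinterval nested inside its parent's, of length equal to the cell probability under $\mathrm{N}(0,J_\theta)$ for $\eta_\theta$ and under $P_\theta^n\circ(\ell_\theta^n)^{-1}$ for $\eta_\theta^n$. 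Define
\[
\eta_\theta(u):=\lim_{k\to\infty}h_{i_1(u;\theta),\dots,i_k(u;\theta)},\qquad \eta_\theta^n(u):=\lim_{k\to\infty}h_{i_1(u;\theta,n),\dots,i_k(u;\theta,n)},
\]
where the indices label the cells whose subintervals contain $u$ in each construction; measurability and existence of the limits follow from the diameter bound, and the marginals (\ref{skorohod-1}) hold by construction.

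For (\ref{skorohod-2}), fix $\varepsilon>0$, choose $k$ with $\sqrt{m}\,2^{-k+2}<\varepsilon$, and pick $R$ so large that $\mathrm{N}(0,J_\theta)(\|x\|>R)+\sup_n P_\theta^n\{\|\ell_\theta^n\|>R\}<\varepsilon/4$ uniformly in $\theta\in K$, which is possible by (\ref{l-moment<infty}) and continuity of $J_\theta$. Only finitely many level-$k$ cells meet the ball $B_R$, and the $\nu$-measure of the event that $\eta_\theta(u)$ and $\eta_\theta^n(u)$ land in different level-$k$ cells is bounded by that tail mass plus
\[
\sum_{S_{(i_1,\dots,i_k)}\cap B_R\neq\emptyset}\bigl|P_\theta^n\{\ell_\theta^n\in S_{(i_1,\dots,i_k)}\}-\mathrm{N}(0,J_\theta)(S_{(i_1,\dots,i_k)})\bigr|,
\]
a finite sum tending to zero uniformly in $\theta\in K\cap\Theta_0$ by the first step. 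On the complement, $\|\eta_\theta(u)-\eta_\theta^n(u)\|<\varepsilon$, giving (\ref{skorohod-2}).

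The main obstacle is arranging the quantile coupling to be uniform in $\theta$. This is resolved by using a single $\theta$-independent partition, so that uniform control reduces to comparing a finite collection of cell probabilities once the tail has been truncated at radius $R$; without fixing the partition in advance, an infinite sum of uniform bounds would be needed and the method would fail. The restriction of the supremum in (\ref{skorohod-2}) to the countable set $\Theta_0$ is intrinsic to the approach, since Lemma~\ref{lem:cell} can place partition boundaries off the distributions of $\ell_\theta^n$ only for countably many $\theta$ simultaneously.
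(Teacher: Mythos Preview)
Your proposal is correct and follows essentially the same Skorohod-representation strategy as the paper: build $\eta_\theta,\eta_\theta^n$ by a nested quantile construction on $[0,1]$ tied to the $\theta$-independent partition of Lemma~\ref{lem:cell}, then control the mismatch by comparing cell probabilities. The one noteworthy difference is in the convergence step: the paper invokes the multivariate Berry--Esseen theorem to obtain a uniform rate $\beta/\sqrt{n}$ on convex sets and then lets the partition depth $k=k_n\asymp\log n$ and the truncation radius $c_n$ grow with $n$ along a carefully balanced diagonal; you instead fix $k$ from $\varepsilon$, truncate at a fixed radius $R$, and appeal to a uniform Portmanteau argument on null-boundary sets. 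Your route avoids Berry--Esseen and is a bit more elementary; the paper's route yields explicit rates and packages everything into a single sequence $k_n$.

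One small point of care: the displayed inequality bounding the level-$k$ disagreement by the sum of \emph{individual} cell-probability differences over cells meeting $B_R$ is not the most direct estimate. The quantile-coupling mismatch at level $k$ is naturally controlled through the cumulative differences $|A_j-B_j|$ at the interval endpoints (this is exactly what the paper tracks when it bounds $\nu(\Delta_\theta^n\triangle\Delta_\theta)$), and those cumulative sums may pick up contributions from cells preceding the ones that meet $B_R$ in the lexicographic order. In the present setting this is harmless: by the structure of Lemma~\ref{lem:cell} the relevant predecessors are still finitely many null-boundary cells, so each $|A_j-B_j|\to 0$ uniformly in $\theta\in K\cap\Theta_0$ by your uniform-Portmanteau step, and the conclusion is unaffected.
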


\begin{proof}
Let $S_{\left(  i_{1},\cdots,i_{k}\right)  }$ be as of Lemma\thinspace
\ref{lem:cell}, and for each $k$, and order
\[
\left\{  \left(  i_{1},\cdots,i_{k}\right)  ;i_{1}\in%
\mathbb{N}
\text{,}1\leq i_{j}\leq5^{m}\right\}
\]
lexicographically. For $\theta\in\Theta_{0}$, define intervals $\Delta
_{\theta}^{n}\left(  i_{1},\cdots,i_{k}\right)  $ of the form $[a,b)$ in
$[0,1)$ such that the length of $\Delta_{\theta}^{n}\left(  i_{1},\cdots
,i_{k}\right)  $ is $P_{\theta}^{n}\left\{  \ell_{\theta}^{n}\in S_{\left(
i_{1},\cdots,i_{n}\right)  }\right\}  $, and that, with $\left(  j_{1}%
,\cdots,j_{k}\right)  >\left(  i_{1},\cdots,i_{k}\right)  $, the left end
point of $\Delta_{\theta}^{n}\left(  j_{1},\cdots,j_{k}\right)  $ lies to the
right of $\Delta_{\theta}^{n}\left(  i_{1},\cdots,i_{k}\right)  $. Then, we
have
\[
\bigcup_{i_{1}\in%
\mathbb{N}
\text{,}1\leq i_{j}\leq5^{m}}\Delta_{\theta}^{n}\left(  i_{1},\cdots
,i_{k}\right)  =[0,1).
\]

If $P_{\theta}^{n}\left\{  \ell_{\theta}^{n}\in S_{\left(  i_{1},\cdots
,i_{n}\right)  }\right\}  $ is non-zero for some $n$, by (\ref{cell-3}), its
interior is non-empty. Thus we may take a point $x_{\left(  i_{1},\cdots
,i_{k}\right)  }$ in its interior. For $\varpi\in\lbrack0,1]$, define
\[
\eta_{\theta}^{n,k}\left(  \varpi\right)  :=x_{\left(  i_{1},\cdots
,i_{k}\right)  },\,\,\varpi\in\Delta_{\theta}^{n}\left(  i_{1},\cdots
,i_{k}\right)  .
\]
Then,
\begin{equation}
\left\Vert \eta_{\theta}^{n,k}\left(  \varpi\right)  -\eta_{\theta
}^{n,k+k^{\prime}}\left(  \varpi\right)  \right\Vert \leq\sqrt{m}2^{-k+2},
\label{eta-cauchy}%
\end{equation}
making the sequence $\left\{  \eta_{\theta}^{n,k}\left(  \varpi\right)
\right\}  _{k=1}^{\infty}$ Cauchy for each $\varpi$,$n$, and $\theta$. Hence,
$\eta_{\theta}^{n}\left(  \varpi\right)  :=$ $\lim_{k\rightarrow\infty}%
\eta_{\theta}^{n,k}\left(  \varpi\right)  $ exists.

Define the intervals $\Delta_{\theta}\left(  i_{1},\cdots,i_{k}\right)  $ of
the form $[a,b)$ in $[0,1)$ such that the length of $\Delta_{\theta}\left(
i_{1},\cdots,i_{k}\right)  $ is $P_{\mathrm{N}\left(  0,J_{\theta}\right)
}\left(  S_{\left(  i_{1},\cdots,i_{n}\right)  }\right)  $, and that, with
$\left(  j_{1},\cdots,j_{k}\right)  >\left(  i_{1},\cdots,i_{k}\right)  $, the
left end point of $\Delta_{\theta}\left(  j_{1},\cdots,j_{k}\right)  $ lies to
the right of $\Delta_{\theta}\left(  i_{1},\cdots,i_{k}\right)  $. Also, one
can define $\eta_{\theta}^{k}\left(  \varpi\right)  $ and $\eta_{\theta
}\left(  \varpi\right)  $ in the parallel manner with $\eta_{\theta}%
^{n,k}\left(  \varpi\right)  $ and $\eta_{\theta}^{n}\left(  \varpi\right)  $. \ 

\ Then, by (\ref{cell-3}) and the multi-dimensional Berry Esseen theorem
(Corollary 11.1 of \cite{Dasgupta}), we have
\[
\sup_{\theta\in K\cap\Theta_{0}}\left\vert \nu\left(  \Delta_{\theta}%
^{n}\left(  i_{1},\cdots,i_{k}\right)  \right)  -\nu\left(  \Delta_{\theta
}\left(  i_{1},\cdots,i_{k}\right)  \right)  \right\vert \leq\frac{\beta
}{\sqrt{n}},
\]
where $\beta:=400m^{1/4}\sup_{\theta\in K}\mathrm{E}_{\theta}\left\Vert
J_{\theta}^{-1}\ell_{\theta}\right\Vert ^{3}.$Therefore,
\[
\nu\left(  \Delta_{\theta}^{n}\left(  i_{1},\cdots,i_{k}\right)
\triangle\,\Delta_{\theta}\left(  i_{1},\cdots,i_{k}\right)  \right)
\leq\frac{2\beta5^{mk}i_{1}}{\sqrt{n}}.
\]
Also, by Markov's inequality,
\[
\sum_{j=N_{n}+1}^{\infty}\nu\left(  \Delta_{\theta}^{n}\left(  j\right)
\right)  \leq\frac{\sup_{\theta\in K}\mathrm{tr}\,J_{\theta}}{c_{n}^{2}%
},\,\sum_{j=N_{n}+1}^{\infty}\nu\left(  \Delta_{\theta}\left(  j\right)
\right)  \leq\frac{\sup_{\theta\in K}\mathrm{tr}\,J_{\theta}}{c_{n}^{2}}%
\]
Thus,
\begin{align*}
&  \sup_{\theta\in K\cap\Theta_{0}}\sum_{i_{1}\in%
\mathbb{N}
,1\leq i_{j}\leq5^{m}}\nu\left(  \Delta_{\theta}^{n}\left(  i_{1},\cdots
,i_{k}\right)  \triangle\,\Delta_{\theta}\left(  i_{1},\cdots,i_{k}\right)
\right) \\
&  \leq\frac{2\sup_{\theta\in K}\mathrm{tr}\,J_{\theta}}{c_{n}^{2}}%
+\frac{\beta5^{2mk}\left(  \left(  2c_{n}+1\right)  ^{m}+1\right)  ^{2}}%
{\sqrt{n}}.
\end{align*}
Here, set
\[
k=k_{n}:=\frac{\ln n}{16m\ln5},\,\,\,c_{n}:=n^{\frac{1}{16m}}.
\]
Then,%
\begin{align*}
&  \sup_{\theta\in K\cap\Theta_{0}}\sum_{i_{1}\in%
\mathbb{N}
,1\leq i_{j}\leq5^{m}}\nu\left(  \Delta_{\theta}^{n}\left(  i_{1}%
,\cdots,i_{k_{n}}\right)  \triangle\,\Delta_{\theta}\left(  i_{1}%
,\cdots,i_{k_{n}}\right)  \right) \\
&  =O\left(  n^{-\frac{1}{8m}}\right)  +O\left(  n^{-1/4}\right)
\rightarrow0,\,n\rightarrow\infty.
\end{align*}
Therefore,
\begin{align}
&  \lim_{n\rightarrow\infty}\sup_{\theta\in K\cap\Theta_{0}}\nu\left\{
\eta_{\theta}^{n,k_{n}}\left(  \varpi\right)  \neq\eta_{\theta}^{k_{n}%
}\ \left(  \varpi\right)  \right\} \nonumber\\
&  \leq\lim_{n\rightarrow\infty}\sup_{\theta\in K}\sum_{i_{1}\in%
\mathbb{N}
,1\leq i_{j}\leq5^{m}}\nu\left(  \Delta_{\theta}^{n}\left(  i_{1}%
,\cdots,i_{k_{n}}\right)  \triangle\,\Delta_{\theta}\left(  i_{1}%
,\cdots,i_{k_{n}}\right)  \right)  =0. \label{eta-1}%
\end{align}

Observe, due to (\ref{eta-cauchy})
\begin{align*}
\left\Vert \eta_{\theta}^{n}\left(  \varpi\right)  -\eta_{\theta}^{n,k_{n}%
}\left(  \varpi\right)  \right\Vert  &  =\lim_{k^{\prime}\rightarrow\infty
}\left\Vert \eta_{\theta}^{n,k^{\prime}}\left(  \varpi\right)  -\eta_{\theta
}^{n,k_{n}}\left(  \varpi\right)  \right\Vert \leq\sqrt{m}2^{-k_{n}+2},\\
\left\Vert \eta_{\theta}\left(  \varpi\right)  -\eta_{\theta}^{k_{n}}\left(
\varpi\right)  \right\Vert  &  =\lim_{k^{\prime}\rightarrow\infty}\left\Vert
\eta_{\theta}^{k^{\prime}}\left(  \varpi\right)  -\eta_{\theta}^{k_{n}}\left(
\varpi\right)  \right\Vert \leq\sqrt{m}2^{-k_{n}+2}.
\end{align*}
Therefore, due to
\begin{align*}
&  \left\Vert \eta_{\theta}^{n}\left(  \varpi\right)  -\eta_{\theta}\left(
\varpi\right)  \right\Vert \\
&  \leq\left\Vert \eta_{\theta}^{n}\left(  \varpi\right)  -\eta_{\theta
}^{n,k_{n}}\left(  \varpi\right)  \right\Vert +\left\Vert \eta_{\theta
}^{n,k_{n}}\left(  \varpi\right)  -\eta_{\theta}^{k_{n}}\left(  \varpi\right)
\right\Vert +\left\Vert \eta_{\theta}^{k_{n}}\left(  \varpi\right)
-\eta_{\theta}\left(  \varpi\right)  \right\Vert ,
\end{align*}
and (\ref{eta-1}), we have
\begin{align*}
&  \sup_{\theta\in K\cap\Theta_{0}}\nu\left\{  \left\Vert \eta_{\theta}%
^{n}-\eta_{\theta}\right\Vert \geq\varepsilon\right\} \\
&  \leq\sup_{\theta\in K\cap\Theta_{0}}\nu\left\{  \left\Vert \eta_{\theta
}^{n,k_{n}}-\eta_{\theta}^{k_{n}}\right\Vert +2\sqrt{m}2^{-k_{n}+2}%
\geq\varepsilon\right\} \\
&  \rightarrow0,\,n\rightarrow\infty,
\end{align*}
which is (\ref{skorohod-2}).

To prove (\ref{skorohod-1}), observe that every open set in $%
\mathbb{R}
^{m}$ can be expressed as a disjoint countable union of $S_{\left(
i_{1},\cdots,i_{k}\right)  }$'s. Therefore, for any open set $G$, by Fatou's
lemma,
\[
\varliminf_{k\rightarrow\infty}\nu\left\{  \eta_{\theta}^{n,k}\in G\right\}
\geq P_{\theta}^{n}\left\{  \ell_{\theta}^{n}\in G\right\}  .
\]
Hence, by Portmanteau theorem (Lemma 2.2 of \cite{vandervaart}),
$\,\lim_{k\rightarrow\infty}\mathcal{L}\left(  \eta_{\theta}^{n,k}|\nu\right)
=\mathcal{L}\left(  \ell_{\theta}^{n}|P_{\theta}^{n}\right)  $. Since
\ $\lim_{k\rightarrow\infty}\eta_{\theta}^{n,k}=\eta_{\theta}^{n}$ almost
surely, we have the second identity of (\ref{skorohod-1}). The first identity
is proved parallelly.
\end{proof}

\begin{lemma}
\label{lem:reconstruction}Suppose  $\theta\rightarrow p_{\theta}$ is
continuously differentiable in quadratic mean, and (\ref{EZ<infty}) holds.
Also, let $\Theta_{0}$ be a countable subset of $\Theta$. Then, there are
probability measure $\tilde{P}_{\theta}^{n}$ over a measurable space $\left(
\Omega^{n}\times\Omega^{\prime},\mathcal{X}^{n}\otimes\mathcal{X}^{\prime
}\right)  $, where $\left(  \Omega^{\prime},\mathcal{X}^{\prime}\right)
:=\left(
\mathbb{R}
^{m}\times\left[  0,1\right]  ,\mathcal{B}\left(
\mathbb{R}
^{m}\times\left[  0,1\right]  \right)  \right)  $, $n\geq1$, and random
variables $\lambda^{\prime n}$, $n\geq1$ over $\left(  \Omega^{n}\times
\Omega^{\prime},\mathcal{X}^{n}\otimes\mathcal{X}^{\prime},\tilde{P}_{\theta
}^{n}\right)  $, such that, $\tilde{P}_{\theta}^{n}$ is an extension of
$P_{\theta}^{n}$ and
\begin{align}
&  \lambda^{\prime n}\sim\mathrm{N}\left(  0,J_{\theta}\right)
,\label{reconstruction-1}\\
&  \lim_{n\rightarrow\infty}\sup_{\theta\in K\cap\Theta_{0}}\tilde{P}_{\theta
}^{n}\left\{  \left\Vert \ell_{\theta}^{n}-\lambda^{\prime n}\right\Vert
\geq\varepsilon\right\}  =0,\label{reconstruction-2}%
\end{align}
for any compact set $K\subset\Theta$.
\end{lemma}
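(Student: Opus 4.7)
The plan is to transfer the Skorokhod-type coupling provided by Lemma \ref{lem:skorohod} from the auxiliary space $([0,1],\nu)$ onto an extension of $(\Omega^{n},\mathcal{X}^{n},P_{\theta}^{n})$, by introducing an $\mathbb{R}^{m}$-valued coordinate on which $\lambda^{\prime n}$ is simply the projection. The $[0,1]$ factor inside $\Omega^{\prime}$ can be carried along as an independent Lebesgue-uniform component, retained only for compatibility with later constructions.

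First, I would invoke Lemma \ref{lem:skorohod} to obtain the coupled pair $(\eta_{\theta}^{n},\eta_{\theta})$ on $([0,1],\nu)$, and set $\mu_{\theta}^{n}:=\mathcal{L}\left(  (\eta_{\theta}^{n},\eta_{\theta})|\nu\right)  $ on $\mathbb{R}^{m}\times\mathbb{R}^{m}$. Since $\mathbb{R}^{m}$ is Polish, there exists a Markov kernel $K_{\theta}^{n}$ on $\mathbb{R}^{m}\times\mathcal{B}(\mathbb{R}^{m})$ that realizes the disintegration
\[
\mu_{\theta}^{n}(\mathrm{d}x,\mathrm{d}y)=\mathcal{L}(\eta_{\theta}^{n}|\nu)(\mathrm{d}x)\,K_{\theta}^{n}(x,\mathrm{d}y).
\]
By (\ref{skorohod-1}), the first marginal $\mathcal{L}(\eta_{\theta}^{n}|\nu)$ coincides with $\mathcal{L}(\ell_{\theta}^{n}|P_{\theta}^{n})$, so this kernel can be applied to $\ell_{\theta}^{n}(\omega^{n})$.

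Next, for $\theta\in\Theta_{0}$, I would define $\tilde{P}_{\theta}^{n}$ on $\Omega^{n}\times\mathbb{R}^{m}\times[0,1]$ by
\[
\tilde{P}_{\theta}^{n}(\mathrm{d}\omega^{n},\mathrm{d}y,\mathrm{d}u):=P_{\theta}^{n}(\mathrm{d}\omega^{n})\,K_{\theta}^{n}(\ell_{\theta}^{n}(\omega^{n}),\mathrm{d}y)\,\mathrm{d}u,
\]
and for $\theta\notin\Theta_{0}$ as the independent product $P_{\theta}^{n}\otimes\mathrm{N}(0,J_{\theta})\otimes\nu_{[0,1]}$. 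Let $\lambda^{\prime n}$ be the projection onto the $\mathbb{R}^{m}$ coordinate. By construction $\tilde{P}_{\theta}^{n}$ is an extension of $P_{\theta}^{n}$, and the marginal law of $\lambda^{\prime n}$ under $\tilde{P}_{\theta}^{n}$ is $\mathrm{N}(0,J_{\theta})$: for $\theta\in\Theta_{0}$ it is the second marginal of $\mu_{\theta}^{n}$, equal by (\ref{skorohod-1}) to $\mathcal{L}(\eta_{\theta}|\nu)=\mathrm{N}(0,J_{\theta})$, and for $\theta\notin\Theta_{0}$ it holds by construction. This gives (\ref{reconstruction-1}).

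Finally, for $\theta\in K\cap\Theta_{0}$ the joint law of $(\ell_{\theta}^{n},\lambda^{\prime n})$ under $\tilde{P}_{\theta}^{n}$ is exactly $\mu_{\theta}^{n}$, i.e., the same as the joint law of $(\eta_{\theta}^{n},\eta_{\theta})$ under $\nu$, so
\[
\tilde{P}_{\theta}^{n}\left\{  \|\ell_{\theta}^{n}-\lambda^{\prime n}\|\geq\varepsilon\right\}  =\nu\left\{  \|\eta_{\theta}^{n}-\eta_{\theta}\|\geq\varepsilon\right\}  ,
\]
and (\ref{reconstruction-2}) follows immediately from (\ref{skorohod-2}). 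Essentially all the analytic work is already absorbed into Lemma \ref{lem:skorohod}; the main technical point here is the existence and measurability of the regular conditional distribution $K_{\theta}^{n}$, which is standard thanks to the Polish structure of $\mathbb{R}^{m}$, and the verification of the joint-law identity is a direct computation from the definition of $\tilde{P}_{\theta}^{n}$.
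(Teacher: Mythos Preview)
Your proof is correct and rests on the same idea as the paper's: transfer the Skorokhod coupling of Lemma~\ref{lem:skorohod} onto an extension of $(\Omega^{n},P_{\theta}^{n})$ by disintegrating through a regular conditional distribution, whose existence is guaranteed by the Polish structure. The difference is purely in how the extra coordinates are used. You disintegrate the joint law $\mathcal{L}((\eta_{\theta}^{n},\eta_{\theta})|\nu)$ to get a kernel $\mathbb{R}^{m}\to\mathbb{R}^{m}$, place $\lambda'^{n}$ directly on the $\mathbb{R}^{m}$ factor, and leave the $[0,1]$ factor idle. The paper instead disintegrates $\mathcal{L}((\eta_{\theta}^{n},y)|\nu)$ to get a kernel $\mathbb{R}^{m}\to[0,1]$, uses the $\mathbb{R}^{m}$ factor to carry a redundant Dirac copy of $\ell_{\theta}^{n}$, and sets $\lambda'^{n}:=\eta_{\theta}(y)$ as a function of the $[0,1]$ coordinate. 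Both constructions produce the same joint law of $(\ell_{\theta}^{n},\lambda'^{n})$ and hence the same conclusions; your arrangement is arguably more direct for this lemma, while the paper's keeps the $[0,1]$ coordinate as the active randomizer, which aligns with how the extension is used downstream in Theorem~\ref{th:uniform-lan}.
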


\begin{proof}
The proof much draws upon the second proof of Lemma 2.2 of \cite{Shiryaev}.
Define a kernel $K_{\theta}^{n}\left(  x,\mathrm{d}y\right)  $ from $\left(
\mathbb{R}
^{m},\mathcal{B}\left(
\mathbb{R}
^{m}\right)  \right)  $ to $\left(  \left[  0,1\right]  ,\mathcal{B}\left(
\left[  0,1\right]  \right)  \right)  $ by the identity
\begin{equation}
\delta_{\eta_{\theta}^{n}\left(  y\right)  }\left(  \mathrm{d}x\right)
\nu\left(  \mathrm{d}y\right)  =R_{\theta}^{n}\left(  \mathrm{d}x\right)
K_{\theta}^{n}\left(  x,\mathrm{d}y\right)  , \label{measure}%
\end{equation}
where $\delta_{y}$ is Dirac measure, $\nu$ is the Lebesgue measure,
$\eta_{\theta}^{n}$ is as of Lemma\thinspace\ref{lem:skorohod}, and
$R_{\theta}^{n}=\mathcal{L}\left(  \ell_{\theta}^{n}|P_{\theta}^{n}\right)
=\mathcal{L}\left(  \eta_{\theta}^{n}|\nu\right)  $. (Since $\left[
0,1\right]  $ is Polish, such $K_{\theta}^{n}$ exists, see 342E of
\cite{Suugaku}. ) Define, with $\tilde{\omega}^{n}=\left(  \omega
^{n},x,y\right)  \in\Omega^{n}\times\Omega^{\prime}$,
\begin{align*}
\tilde{P}_{\theta}^{n}\left(  \mathrm{d}\tilde{\omega}^{n}\right)   &
:=P_{\theta}^{n}\left(  \mathrm{d}\omega^{n}\right)  \delta_{\ell_{\theta}%
^{n}\left(  \omega^{n}\right)  }\left(  \mathrm{d}x\right)  K_{\theta}%
^{n}\left(  x,\mathrm{d}y\right)  ,\\
\lambda^{\prime n}\left(  \tilde{\omega}^{n}\right)   &  :=\eta_{\theta
}\left(  y\right)  ,
\end{align*}
where $\eta_{\theta}$ is as of Lemma\thinspace\ref{lem:skorohod}.

Since the restriction of $\tilde{P}_{\theta}^{n}$ on $\left(  \left[
0,1\right]  ,\mathcal{B}\left(  \left[  0,1\right]  \right)  \right)  $ is
$\nu$,
\[
\mathcal{L}\left(  \lambda^{\prime n}\left(  \tilde{\omega}^{n}\right)
|\tilde{P}_{\theta}^{n}\right)  =\mathcal{L}\left(  \eta_{\theta}\left(
y\right)  |\tilde{P}_{\theta}^{n}\right)  =\mathcal{L}\left(  \eta_{\theta
}\left(  y\right)  |\nu\right)  =\mathrm{N}\left(  0,J_{\theta}\right)  .
\]
Hence, (\ref{reconstruction-1}) is shown.

By abusing the notation, we denote the extension of $\ell_{\theta}^{n}%
:\Omega^{n}\rightarrow%
\mathbb{R}
^{m}$ to $\Omega^{n}\times\Omega^{\prime}\rightarrow%
\mathbb{R}
^{m}$ also by $\ell_{\theta}^{n}$: in other words,
\[
\ell_{\theta}^{n}\left(  \omega^{n},x,y\right)  :=\ell_{\theta}^{n}\left(
\omega^{n}\right)  .
\]
To verify (\ref{reconstruction-2}), we show%
\begin{equation}
\ell_{\theta}^{n}\left(  \tilde{\omega}^{n}\right)  =\eta_{\theta}^{n}\left(
y\right)  ,\,\,\,\tilde{P}_{\theta}^{n}\text{-a.s.}. \label{l=eta}%
\end{equation}
Observe that restriction of $\tilde{P}_{\theta}^{n}$ to $\left(
\Omega^{\prime},\mathcal{X}^{\prime}\right)  =\left(
\mathbb{R}
^{m}\times\left[  0,1\right]  ,\mathcal{B}\left(
\mathbb{R}
^{m}\times\left[  0,1\right]  \right)  \right)  $ is (\ref{measure}).
Therefore, we have%
\begin{align*}
\tilde{P}_{\theta}^{n}\left(  \left\{  \ell_{\theta}^{n}\left(  \tilde{\omega
}^{n}\right)  =x\right\}  \right)   &  =\int_{\Omega^{n}}\int_{%
\mathbb{R}
^{m}}I_{\left\{  \ell_{\theta}^{n}\left(  \tilde{\omega}^{n}\right)
=x\right\}  }P_{\theta}^{n}\left(  \mathrm{d}\omega^{n}\right)  \delta
_{\ell_{\theta}^{n}\left(  \tilde{\omega}^{n}\right)  }\left(  \mathrm{d}%
x\right) \\
&  =\int_{\Omega^{n}}P_{\theta}^{n}\left(  \mathrm{d}\omega^{n}\right)  \int_{%
\mathbb{R}
^{m}}I_{\left\{  \ell_{\theta}^{n}\left(  \tilde{\omega}^{n}\right)
=x\right\}  }\delta_{\ell_{\theta}^{n}\left(  \tilde{\omega}^{n}\right)
}\left(  \mathrm{d}x\right) \\
&  =1,
\end{align*}
and%
\begin{align*}
\tilde{P}_{\theta}^{n}\left(  \left\{  \eta_{\theta}^{n}\left(  y\right)
=x\right\}  \right)   &  =\int_{%
\mathbb{R}
^{m}}\int_{\left[  0,1\right]  }I_{\left\{  \eta_{\theta}^{n}\left(  y\right)
=x\right\}  }\delta_{\eta_{\theta}^{n}\left(  y\right)  }\left(
\mathrm{d}x\right)  \nu\left(  \mathrm{d}y\right) \\
&  =\int_{\left[  0,1\right]  }\nu\left(  \mathrm{d}y\right)  \int_{\left[
0,1\right]  }I_{\left\{  \eta_{\theta}^{n}\left(  y\right)  =x\right\}
}\delta_{\eta_{\theta}^{n}\left(  y\right)  }\left(  \mathrm{d}x\right) \\
&  =1.
\end{align*}
Thus, (\ref{l=eta}) is shown.

By (\ref{l=eta}) and the definition of $\lambda^{\prime n}$,
\begin{align*}
\sup_{\theta\in K\cap\Theta_{0}}\tilde{P}_{\theta}^{n}\left\{  \left\Vert
\ell_{\theta}^{n}-\lambda^{\prime n\text{,}}\right\Vert \geq\varepsilon
\right\}   &  =\sup_{\theta\in K\cap\Theta_{0}}\tilde{P}_{\theta}^{n}\left\{
\left\Vert \eta_{\theta}^{n}-\eta_{\theta}\right\Vert \geq\varepsilon\right\}
\\
&  =\sup_{\theta\in K\cap\Theta_{0}}\nu\left\{  \left\Vert \eta_{\theta}%
^{n}-\eta_{\theta}\right\Vert \geq\varepsilon\right\} \\
&  \rightarrow0,\,\,n\rightarrow\infty.
\end{align*}

\end{proof}

\begin{lemma}
\label{lem:reconstruction-1}Suppose $\theta\rightarrow p_{\theta}$ is
differentiable in quadratic mean, and (\ref{EZ<infty}) holds. Then, there are
probability measure $\tilde{P}_{\theta}^{n}$ over a measurable space $\left(
\Omega^{n}\times\Omega^{\prime},\mathcal{X}^{n}\otimes\mathcal{X}^{\prime
}\right)  $, where $\left(  \Omega^{\prime},\mathcal{X}^{\prime}\right)
:=\left(
\mathbb{R}
^{m}\times\left[  0,1\right]  ,\mathcal{B}\left(
\mathbb{R}
^{m}\times\left[  0,1\right]  \right)  \right)  $, $n\geq1$, and random
variables $\lambda^{\prime n}$, $n\geq1$ over $\left(  \Omega^{n}\times
\Omega^{\prime},\mathcal{X}^{n}\otimes\mathcal{X}^{\prime},\tilde{P}_{\theta
}^{n}\right)  $, such that, $\tilde{P}_{\theta}^{n}$ is an extension of
$P_{\theta}^{n}$ and
\begin{align}
&  \lambda^{\prime n}\sim\mathrm{N}\left(  0,J_{\theta}\right)  ,\\
&  \lim_{n\rightarrow\infty}\tilde{P}_{\theta}^{n}\left\{  \left\Vert
\ell_{\theta}^{n}-\lambda^{\prime n}\right\Vert \geq\varepsilon\right\}  =0,
\end{align}
for any compact set $K\subset\Theta$.
\end{lemma}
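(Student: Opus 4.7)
The plan is to mirror the proof of Lemma \ref{lem:reconstruction}, the uniform version, but since uniformity in $\theta$ is not required the argument simplifies: I would avoid the cell construction of Lemma \ref{lem:cell} and the uniform multidimensional Berry--Esseen bound used in Lemma \ref{lem:skorohod}, and instead invoke the classical Skorohod representation theorem at the fixed $\theta$.

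First I would produce the pointwise analogue of Lemma \ref{lem:skorohod}: on $([0,1], \mathcal{B}([0,1]), \nu)$ with $\nu$ Lebesgue measure, construct $\eta_{\theta}^{n}$ and $\eta_{\theta}$ satisfying $\mathcal{L}(\eta_{\theta}\mid\nu)=\mathrm{N}(0,J_{\theta})$, $\mathcal{L}(\eta_{\theta}^{n}\mid\nu)=\mathcal{L}(\ell_{\theta}^{n}\mid P_{\theta}^{n})$, and $\nu\{\Vert\eta_{\theta}^{n}-\eta_{\theta}\Vert\geq\varepsilon\}\to0$. For this, Lemma \ref{lem:likelihood-expansion-1} gives $\mathrm{E}_{\theta}\ell_{\theta}=0$, (\ref{EZ<infty}) via (\ref{l-moment<infty}) gives $\mathrm{E}_{\theta}\Vert\ell_{\theta}\Vert^{2}<\infty$, and $\mathrm{E}_{\theta}\ell_{\theta}\ell_{\theta}^{T}=J_{\theta}$ by definition; hence the ordinary i.i.d.\ multivariate central limit theorem yields that $\ell_{\theta}^{n}$ converges in law to $\mathrm{N}(0,J_{\theta})$. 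Since $[0,1]$ is a standard Borel space carrying every Borel probability on $\mathbb{R}^{m}$, Skorohod's representation theorem delivers the desired coupling, with almost sure (hence in-probability) convergence.

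With this coupling in hand, the construction proceeds verbatim as in Lemma \ref{lem:reconstruction}. I would define a regular kernel $K_{\theta}^{n}$ from $(\mathbb{R}^{m},\mathcal{B}(\mathbb{R}^{m}))$ to $([0,1],\mathcal{B}([0,1]))$ by the disintegration
\[
\delta_{\eta_{\theta}^{n}(y)}(\mathrm{d}x)\,\nu(\mathrm{d}y)=R_{\theta}^{n}(\mathrm{d}x)\,K_{\theta}^{n}(x,\mathrm{d}y),
\]
where $R_{\theta}^{n}:=\mathcal{L}(\ell_{\theta}^{n}\mid P_{\theta}^{n})$, set
\[
\tilde{P}_{\theta}^{n}(\mathrm{d}\tilde{\omega}^{n}):=P_{\theta}^{n}(\mathrm{d}\omega^{n})\,\delta_{\ell_{\theta}^{n}(\omega^{n})}(\mathrm{d}x)\,K_{\theta}^{n}(x,\mathrm{d}y),\qquad \lambda^{\prime n}(\tilde{\omega}^{n}):=\eta_{\theta}(y),
\]
and then repeat the two marginal computations of the earlier proof to obtain $\lambda^{\prime n}\sim\mathrm{N}(0,J_{\theta})$ and $\ell_{\theta}^{n}(\tilde{\omega}^{n})=\eta_{\theta}^{n}(y)$ $\tilde{P}_{\theta}^{n}$-a.s. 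Consequently $\tilde{P}_{\theta}^{n}\{\Vert\ell_{\theta}^{n}-\lambda^{\prime n}\Vert\geq\varepsilon\}=\nu\{\Vert\eta_{\theta}^{n}-\eta_{\theta}\Vert\geq\varepsilon\}\to0$, which is the claim.

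The only point that needs care is to confirm that DQM together with (\ref{EZ<infty}) suffice for the classical multivariate CLT applied to the score, but this is immediate from the mean-zero and square-integrability statements just cited. No substantive obstacle appears: every use of continuous differentiability and of Lemma \ref{lem:likelihood-expansion} in the uniform proof can be replaced by its DQM counterpart Lemma \ref{lem:likelihood-expansion-1}, and deleting $\sup_{\theta\in K\cap\Theta_{0}}$ at each step of the preceding argument reproduces the present statement.
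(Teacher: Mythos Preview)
Your proposal is correct and follows essentially the same approach as the paper: the paper's proof is a one-line appeal to ``Lemma 2.2 of \cite{Shiryaev} and the central limit theorem,'' and what you have written is precisely an explicit unpacking of that citation---the CLT for the score under DQM plus finite moments, a Skorohod-type coupling on $([0,1],\nu)$, and then the same kernel/extension construction already spelled out in the proof of Lemma~\ref{lem:reconstruction}. There is no substantive difference in method, only in level of detail.
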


\begin{proof}
This is only the combination of Lemma 2.2 of \cite{Shiryaev} and the central
limit theorem.
\end{proof}

\begin{theorem}
\label{th:uniform-lan}Suppose  $\theta\rightarrow p_{\theta}$ is continuously
differentiable in quadratic mean, and (\ref{EZ<infty}) holds. Also, let
$\Theta_{0}$ be a countable subset of $\Theta$. Then, there are probability
measures $\tilde{P}_{\theta}^{n}$ over measurable spaces $\left(  \Omega
^{n}\times\Omega^{\prime},\mathcal{X}^{n}\otimes\mathcal{X}^{\prime}\right)
$, where $\left(  \Omega^{\prime},\mathcal{X}^{\prime}\right)  :=\left(
\mathbb{R}
^{m}\times\left[  0,1\right]  ,\mathcal{B}\left(
\mathbb{R}
^{m}\times\left[  0,1\right]  \right)  \right)  $, $n\geq1$, and random
variables $\lambda_{h}^{n}$, $n\geq1$ over $\left(  \Omega^{n}\times
\Omega^{\prime},\mathcal{X}^{n}\otimes\mathcal{X}^{\prime},\tilde{P}_{\theta
}^{n}\right)  $, such that, $\tilde{P}_{\theta}^{n}$ is an extension of
$P_{\theta}^{n}$ and
\begin{align*}
&  \lim_{n\rightarrow\infty}\sup_{h\in K^{\prime}}\sup_{\theta\in K\cap
\Theta_{0}}\left\Vert \tilde{P}_{\theta+hn^{-1/2}}^{n}-Q_{\theta,h}%
^{n}\right\Vert _{1}=0,\\
&  \mathcal{L}\left(  \lambda_{h}^{n}\right)  =\mathrm{N}\left(  h,J_{\theta
}^{-1}\right)  ,\\
&  Q_{\theta,h}^{n}\left(  A\right)  :=\mathrm{E}^{\lambda^{n}}R_{\theta}%
^{n}\left(  A|\lambda_{h}^{n}\right)  .
\end{align*}
Here, $K$ is an arbitrary compact set in $\Theta$, $K^{\prime}$ is an
arbitrary compact set in $%
\mathbb{R}
^{m}$, and $R_{\theta}^{n}\left(  \cdot|\lambda^{n}\right)  $ is a measure on
$\left(  \Omega^{n}\times\Omega^{\prime},\mathcal{X}^{n}\otimes\mathcal{X}%
^{\prime}\right)  $, which may depend on $\theta$, but is independent of $h$.
\end{theorem}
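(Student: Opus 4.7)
The strategy is to use Lemma~\ref{lem:reconstruction} to place $\ell_\theta^n$ and an exactly Gaussian copy on the same probability space, then transfer this pointwise closeness to closeness of the measures $\tilde P_{\theta+hn^{-1/2}}^n$ and $Q_{\theta,h}^n$ via Radon--Nikodym derivatives, using Lemma~\ref{lem:EZ-EZl} to control the resulting densities in $L^1$.

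I first apply Lemma~\ref{lem:reconstruction} to obtain, for each $\theta \in \Theta_0$, the measure $\tilde P_\theta^n$ on $\Omega^n \times \Omega'$ extending $P_\theta^n$ together with a random vector $\lambda^{\prime n}$ satisfying $\lambda^{\prime n} \sim \mathrm N(0, J_\theta)$ and $\| \ell_\theta^n - \lambda^{\prime n} \| \to 0$ in $\tilde P_\theta^n$-probability uniformly over $\theta \in K \cap \Theta_0$. Set $\lambda_0^n := J_\theta^{-1}\lambda^{\prime n}$, so that $\lambda_0^n \sim \mathrm N(0, J_\theta^{-1})$, and put $\lambda_h^n := \lambda_0^n + h$, which then has the required law $\mathrm N(h, J_\theta^{-1})$. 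I take $R_\theta^n(\cdot | \lambda)$ to be a regular conditional distribution of $\tilde P_\theta^n$ given the $\sigma$-algebra generated by $\lambda_0^n$ (existence is guaranteed because the underlying space is Polish); by construction $R_\theta^n$ depends on $\theta$ but not on $h$. Finally, $\tilde P_{\theta+hn^{-1/2}}^n$ is defined as the measure with Radon--Nikodym derivative $Z_{\theta,h}^n$ (a function only of $\omega^n$) against $\tilde P_\theta^n$; its restriction to $\Omega^n$ is then $P_{\theta+hn^{-1/2}}^n$, so this is indeed an extension of $P_{\theta+hn^{-1/2}}^n$.

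Next I identify $Q_{\theta,h}^n$ as a density against $\tilde P_\theta^n$. Since the density of $\mathrm N(h, J_\theta^{-1})$ against $\mathrm N(0, J_\theta^{-1})$ at a point $\lambda$ equals $\exp(h^T J_\theta \lambda - \tfrac12 h^T J_\theta h) = Z_{\theta,h}(J_\theta \lambda)$, and since $Z_{\theta,h}(\lambda^{\prime n}) = Z_{\theta,h}(J_\theta \lambda_0^n)$ is $\sigma(\lambda_0^n)$-measurable, the tower property gives, for any measurable $A$,
\[
Q_{\theta,h}^n(A) = \int R_\theta^n(A | \lambda)\, \mathrm N(h, J_\theta^{-1})(\mathrm d\lambda) = \mathrm E^{\tilde P_\theta^n}\!\bigl[R_\theta^n(A|\lambda_0^n)\, Z_{\theta,h}(\lambda^{\prime n})\bigr] = \int_A Z_{\theta,h}(\lambda^{\prime n})\, \mathrm d\tilde P_\theta^n.
\]
Thus $Q_{\theta,h}^n$ and $\tilde P_{\theta+hn^{-1/2}}^n$ both admit densities against $\tilde P_\theta^n$, namely $Z_{\theta,h}(\lambda^{\prime n})$ and $Z_{\theta,h}^n$, so
\[
\|\tilde P_{\theta+hn^{-1/2}}^n - Q_{\theta,h}^n\|_1 \leq \mathrm E_\theta^n\!\left|Z_{\theta,h}^n - Z_{\theta,h}(\ell_\theta^n)\right| + \mathrm E^{\tilde P_\theta^n}\!\left|Z_{\theta,h}(\ell_\theta^n) - Z_{\theta,h}(\lambda^{\prime n})\right|.
\]

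The first summand tends to zero uniformly in $(\theta,h) \in (K\cap \Theta_0)\times K'$ by Lemma~\ref{lem:EZ-EZl}. For the second, $Z_{\theta,h}(\ell_\theta^n) - Z_{\theta,h}(\lambda^{\prime n}) \to 0$ in $\tilde P_\theta^n$-probability uniformly by continuity of $Z_{\theta,h}$ and the conclusion $\|\ell_\theta^n - \lambda^{\prime n}\| \to 0$ of Lemma~\ref{lem:reconstruction}; uniform integrability of the two families follows from (\ref{EZ<finite}) and the tail bound (\ref{E[e>a]}) for $Z_{\theta,h}(\ell_\theta^n)$, and from the explicit identity $\mathrm E \exp(h^T \lambda^{\prime n}) = \exp(\tfrac12 h^T J_\theta h)$ for $Z_{\theta,h}(\lambda^{\prime n})$, so the same truncation argument used inside the proof of Lemma~\ref{lem:EZ-EZl} upgrades uniform convergence in probability to uniform convergence in $L^1$. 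The main obstacle is arranging the probability convergence and the uniform integrability simultaneously over $\theta \in K \cap \Theta_0$ and $h \in K'$; countability of $\Theta_0$ is essential, since it underlies the Skorokhod-type coupling of Lemma~\ref{lem:skorohod}, and the exponential moment hypothesis (\ref{EZ<infty}) is precisely what renders the tail control uniform in $\theta$.
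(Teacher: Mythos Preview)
Your proposal is correct and follows essentially the same route as the paper: invoke Lemma~\ref{lem:reconstruction} to couple $\ell_\theta^n$ with the Gaussian $\lambda^{\prime n}$, identify $Q_{\theta,h}^n$ as the measure with density $Z_{\theta,h}(\lambda^{\prime n})$ against $\tilde P_\theta^n$, and then control $\|\tilde P_{\theta+hn^{-1/2}}^n-Q_{\theta,h}^n\|_1$ by the triangle inequality via $Z_{\theta,h}(\ell_\theta^n)$, using Lemma~\ref{lem:EZ-EZl} for the first piece and a uniform-integrability-plus-convergence-in-probability argument for the second. The only cosmetic difference is that the paper packages the second step as an application of Lemma~\ref{lem:converge} with $f(x)=e^x$, whereas you spell out the same truncation argument directly.
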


\begin{proof}
We use Lemma\thinspace\ref{lem:converge}, with $f\left(  x\right)  =e^{x}$,
$t=\left(  \theta,h\right)  $, $X_{n,t}:=h^{T}\lambda^{\prime n}-\frac{1}%
{2}h^{T}J_{\theta}h$ and $Y_{n,t}:=h^{T}\ell_{\theta}^{n}-\frac{1}{2}%
h^{T}J_{\theta}h$. \ Obviously, (\ref{prob-converge}) and (\ref{f-1}) are satisfied.

Due to (\ref{E[e>a]}), we have (\ref{f-2}):
\[
\lim_{n\rightarrow\infty}\sup_{h\in K^{\prime}}\sup_{\theta\in K\cap\Theta
_{0}}\mathrm{E}_{\theta}^{n}\left[  Z_{\theta,h}\left(  \ell_{\theta}%
^{n}\right)  :Z_{\theta,h}\left(  \ell_{\theta}^{n}\right)  >a\right]
\rightarrow0,a\rightarrow\infty.
\]
Due to (\ref{reconstruction-1}) , we have%
\begin{align*}
&  \lim_{n\rightarrow\infty}\sup_{h\in K^{\prime}}\sup_{\theta\in K\cap
\Theta_{0}}\mathrm{\tilde{E}}_{\theta}^{n}\left[  Z_{\theta,h}\left(
\lambda^{\prime n}\right)  :Z_{\theta,h}\left(  \lambda^{\prime n}\right)
>a\right] \\
&  =\sup_{h\in K^{\prime}}\sup_{\theta\in K\cap\Theta_{0}}\mathrm{E}%
^{X}\left[  Z_{\theta,h}\left(  X\right)  :Z_{\theta,h}\left(  X\right)
>a\right]  \rightarrow0,\,\,a\rightarrow\infty,
\end{align*}
where $\mathcal{L}\left(  X\right)  =\mathrm{N}\left(  0,J_{\theta}%
^{-1}\right)  $. Thus (\ref{f-3}). Therefore, we have
\[
\lim_{n\rightarrow\infty}\sup_{h\in K^{\prime}}\sup_{\theta\in K\cap\Theta
_{0}}\mathrm{\tilde{E}}_{\theta}^{n}\left\vert Z_{\theta,h}\left(
\ell_{\theta}^{n}\right)  -Z_{\theta,h}\left(  \lambda^{\prime n}\right)
\right\vert =0.
\]
Therefore, combining Lemma\thinspace\ref{lem:EZ-EZl},
\begin{align*}
&  \sup_{h\in K^{\prime}}\sup_{\theta\in K}\mathrm{\tilde{E}}_{\theta}%
^{n}\left\vert Z_{\theta,h}^{n}-Z_{\theta,h}\left(  \lambda^{\prime n}\right)
\right\vert \\
&  \leq\sup_{h\in K^{\prime}}\sup_{\theta\in K}\mathrm{E}_{\theta}%
^{n}\left\vert Z_{\theta,h}^{n}-Z_{\theta,h}\left(  \ell_{\theta}^{n}\right)
\right\vert +\sup_{h\in K^{\prime}}\sup_{\theta\in K}\mathrm{\tilde{E}%
}_{\theta}^{n}\left\vert Z_{\theta,h}\left(  \ell_{\theta}^{n}\right)
-Z_{\theta,h}\left(  \lambda^{\prime n}\right)  \right\vert \\
&  \rightarrow0.
\end{align*}

Let $\tilde{W}_{\theta}^{n}$ be the random variable with $\mathcal{L}\left(
\tilde{W}_{\theta}^{n}\right)  =\tilde{P}_{\theta}^{n}$. Then
\begin{align*}
&  \mathrm{\tilde{E}}_{\theta}^{n}Z_{\theta,h}\left(  \lambda^{\prime
n}\right)  I_{A}\left(  \tilde{W}_{\theta}^{n}\right) \\
&  =\int\mathrm{\tilde{E}}_{\theta}^{n}\left[  Z_{\theta,h}\left(
\lambda^{\prime n}\right)  I_{A}\left(  \tilde{W}_{\theta}^{n}\right)
|\lambda^{\prime n}=x\right]  \frac{e^{-\frac{1}{2}x^{T}J_{\theta}^{-1}%
x}\mathrm{d}x}{\left(  2\pi\right)  ^{m/2}\left(  \det J_{\theta}\right)
^{1/2}}\\
&  =\int\mathrm{\tilde{E}}_{\theta}^{n}\left[  I_{A}\left(  \tilde{W}_{\theta
}^{n}\right)  |\lambda^{\prime n}=x\right]  \exp\left\{  ^{-\frac{1}{2}%
x^{T}J_{\theta}^{-1}x}+h^{T}x-\frac{1}{2}h^{T}J_{\theta}h\right\}
\frac{\mathrm{d}x}{\left(  2\pi\right)  ^{m/2}\left(  \det J_{\theta}\right)
^{1/2}}\\
&  =\int\mathrm{\tilde{E}}_{\theta}^{n}\left[  I_{A}\left(  \tilde{W}_{\theta
}^{n}\right)  |\lambda^{\prime n}=J_{\theta}x\right]  \exp\left\{  -\frac
{1}{2}\left(  x-h\right)  ^{T}J_{\theta}\left(  x-h\right)  \right\}
\frac{\left(  \det J_{\theta}\right)  ^{1/2}}{\left(  2\pi\right)  ^{m/2}%
}\mathrm{d}x.
\end{align*}
Since $\Omega^{n}\times\Omega^{\prime}$ is Polish, there is a nice version of
$R_{\theta}^{n}\left(  A|x\right)  :=\mathrm{\tilde{E}}_{\theta}^{n}\left[
I_{A}\left(  \tilde{W}_{\theta}^{n}\right)  |\lambda^{\prime n}=J_{\theta
}x\right]  $ which is a probability measure in $\mathcal{X}^{n}\times
\mathcal{X}^{\prime}$ for every $\not x  \in%
\mathbb{R}
^{m}$ (see, for example, 342E of \cite{Suugaku}). By definition,
\[
\mathrm{\tilde{E}}_{\theta}Z_{\theta,h}\left(  \lambda^{\prime n}\right)
I_{A}\left(  \tilde{W}_{\theta}^{n}\right)  =\mathrm{E}^{\lambda_{h}^{n}%
}R_{\theta}^{n}\left(  A|\lambda_{h}^{n}\right)  .
\]
Therefore, we have the assertion:
\begin{align*}
&  \sup_{h\in K^{\prime}}\sup_{\theta\in K}\left\vert \tilde{P}_{\theta
+n^{-1/2}h}^{n}\left(  A\right)  -Q_{\theta,h}^{n}\left(  A\right)
\right\vert \\
&  =\sup_{h\in K^{\prime}}\sup_{\theta\in K}\left\vert \tilde{P}%
_{\theta+n^{-1/2}h}^{n}\left(  A\right)  -\mathrm{E}^{\lambda^{n}}R_{\theta
}^{n}\left(  A|\lambda^{n}\right)  \right\vert \\
&  =\sup_{h\in K^{\prime}}\sup_{\theta\in K}\left\vert \mathrm{\tilde{E}%
}_{\theta}^{n}Z_{\theta,h}^{n}I_{A}\left(  \tilde{W}_{\theta}^{n}\right)
-\mathrm{\tilde{E}}_{\theta}^{n}Z_{\theta,h}\left(  \lambda^{\prime n}\right)
I_{A}\left(  \tilde{W}_{\theta}^{n}\right)  \right\vert \\
&  \leq\sup_{h\in K^{\prime}}\sup_{\theta\in K}\mathrm{\tilde{E}}_{\theta}%
^{n}\left\vert Z_{\theta,h}^{n}-Z_{\theta,h}\left(  \lambda^{\prime n}\right)
\right\vert \rightarrow0,\,\,n\rightarrow\infty.
\end{align*}

\end{proof}

\begin{theorem}
\label{th:lan}Suppose  $\theta\rightarrow p_{\theta}$ is differentiable in
quadratic mean, and (\ref{EZ<infty}) holds. Then, there are probability
measures $\tilde{P}_{\theta}^{n}$ over a measurable spaces $\left(  \Omega
^{n}\times\Omega^{\prime},\mathcal{X}^{n}\otimes\mathcal{X}^{\prime}\right)
$, where $\left(  \Omega^{\prime},\mathcal{X}^{\prime}\right)  :=\left(
\mathbb{R}
^{m}\times\left[  0,1\right]  ,\mathcal{B}\left(
\mathbb{R}
^{m}\times\left[  0,1\right]  \right)  \right)  $, $n\geq1$, and random
variables $\lambda_{h}^{n}$, $n\geq1$ over $\left(  \Omega^{n}\times
\Omega^{\prime},\mathcal{X}^{n}\otimes\mathcal{X}^{\prime},\tilde{P}_{\theta
}^{n}\right)  $, such that, $\tilde{P}_{\theta}^{n}$ is an extension of
$P_{\theta}^{n}$ and
\begin{align*}
&  \lim_{n\rightarrow\infty}\sup_{h\in K^{\prime}}\left\Vert \tilde{P}%
_{\theta+hn^{-1/2}}^{n}-Q_{\theta,h}^{n}\right\Vert _{1}=0,\\
&  \lambda_{h}^{n}\sim\mathrm{N}\left(  h,J_{\theta}^{-1}\right)  ,\\
&  Q_{\theta,h}^{n}\left(  A\right)  :=\mathrm{E}^{\lambda^{n}}R_{\theta}%
^{n}\left(  A|\lambda_{h}^{n}\right)  .
\end{align*}
Here, $K^{\prime}$ is an arbitrary compact set in $%
\mathbb{R}
^{m}$, and $R_{\theta}^{n}\left(  \cdot|\lambda^{n}\right)  $ is a measure on
$\left(  \Omega^{n}\times\Omega^{\prime},\mathcal{X}^{n}\otimes\mathcal{X}%
^{\prime}\right)  $, which may depend on $\theta$, but is independent of $h$.
\end{theorem}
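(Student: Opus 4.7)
The plan is to reproduce the proof of Theorem~\ref{th:uniform-lan} essentially verbatim, but with all suprema over $\theta$ deleted and with the pointwise versions of the supporting lemmas substituted for the uniform ones. Concretely, Lemma~\ref{lem:reconstruction-1} replaces Lemma~\ref{lem:reconstruction}, Lemma~\ref{lem:EZ-EZl-1} replaces Lemma~\ref{lem:EZ-EZl}, and Lemma~\ref{lem:likelihood-expansion-1} sits behind Lemma~\ref{lem:EZ-EZl-1} in place of Lemma~\ref{lem:likelihood-expansion}. The hypotheses of Theorem~\ref{th:lan} match exactly what these pointwise lemmas require.

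First, I invoke Lemma~\ref{lem:reconstruction-1} at the given $\theta$ to produce an extension $\tilde{P}_\theta^n$ of $P_\theta^n$ on $\Omega^n\times\Omega'$ together with $\lambda^{\prime n}\sim\mathrm{N}(0,J_\theta)$ satisfying $\tilde{P}_\theta^n\{\|\ell_\theta^n-\lambda^{\prime n}\|\geq\varepsilon\}\to 0$. Then I apply Lemma~\ref{lem:converge} with $f(x)=e^x$, index $t=h\in K'$, $X_{n,h}:=h^T\lambda^{\prime n}-\tfrac{1}{2}h^T J_\theta h$ and $Y_{n,h}:=h^T\ell_\theta^n-\tfrac{1}{2}h^T J_\theta h$. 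Condition (\ref{prob-converge}) follows at once from Lemma~\ref{lem:reconstruction-1} and boundedness of $\|h\|$ on $K'$; (\ref{f-1}) holds since $\sup_{e^x\le a}e^x=a$; (\ref{f-2}) is given by (\ref{E[e>a]}) specialized to $K=\{\theta\}$; and (\ref{f-3}) comes from $\lambda^{\prime n}\sim\mathrm{N}(0,J_\theta)$ by direct Gaussian integration. This yields
\[
\lim_{n\rightarrow\infty}\sup_{h\in K'}\mathrm{\tilde{E}}_\theta^n\left|Z_{\theta,h}(\ell_\theta^n)-Z_{\theta,h}(\lambda^{\prime n})\right|=0,
\]
and combining with Lemma~\ref{lem:EZ-EZl-1} by the triangle inequality gives
\[
\lim_{n\rightarrow\infty}\sup_{h\in K'}\mathrm{\tilde{E}}_\theta^n\left|Z_{\theta,h}^n-Z_{\theta,h}(\lambda^{\prime n})\right|=0.
\]

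Next, I define the kernel $R_\theta^n(A|x)$ as a regular version of $\mathrm{\tilde{E}}_\theta^n[I_A(\tilde{W}_\theta^n)\mid\lambda^{\prime n}=J_\theta x]$, which exists because $\Omega^n\times\Omega'$ is Polish, and introduce $\lambda_h^n\sim\mathrm{N}(h,J_\theta^{-1})$ on a suitable enlargement. The identical Gaussian change-of-variables computation used in Theorem~\ref{th:uniform-lan} produces
\[
\mathrm{\tilde{E}}_\theta^n\bigl[Z_{\theta,h}(\lambda^{\prime n})\,I_A(\tilde{W}_\theta^n)\bigr]=\mathrm{E}^{\lambda_h^n}R_\theta^n(A\mid\lambda_h^n)=Q_{\theta,h}^n(A),
\]
while $Z_{\theta,h}^n$ is by definition the density of $\tilde{P}_{\theta+hn^{-1/2}}^n$ with respect to $\tilde{P}_\theta^n$. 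Consequently
\[
\left|\tilde{P}_{\theta+hn^{-1/2}}^n(A)-Q_{\theta,h}^n(A)\right|\leq\mathrm{\tilde{E}}_\theta^n\left|Z_{\theta,h}^n-Z_{\theta,h}(\lambda^{\prime n})\right|,
\]
and taking the supremum over measurable $A$ and $h\in K'$ and passing to the limit gives the claimed total-variation convergence.

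I do not anticipate a genuine obstacle: every analytic step has already been carried out for the uniform version, and the pointwise-in-$\theta$ lemmas were explicitly prepared (as the author's own remark on Lemma~\ref{lem:EZ-EZl-1} makes clear) precisely so that the same skeleton goes through with the $\sup_{\theta\in K\cap\Theta_0}$ simply erased. The only point deserving minor care is that applying Lemma~\ref{lem:converge} pointwise still delivers uniformity in $h\in K'$, which is fine because the index set $T$ in that lemma is arbitrary and here consists of just the $h$-variable.
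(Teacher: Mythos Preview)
Your proposal is correct and follows exactly the approach the paper itself takes: the paper's proof of Theorem~\ref{th:lan} is nothing more than the instruction to rerun the argument of Theorem~\ref{th:uniform-lan} with Lemma~\ref{lem:reconstruction-1} substituted for Lemma~\ref{lem:reconstruction} and with the suprema over $\theta$ removed. You have in fact been slightly more careful than the paper's one-line sketch, making explicit that Lemma~\ref{lem:EZ-EZl-1} must replace Lemma~\ref{lem:EZ-EZl} (since only differentiability, not continuous differentiability, in quadratic mean is assumed).
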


\begin{proof}
The proof is parallel with the one of Theorem\thinspace\ref{th:uniform-lan},
except that Lemma\thinspace\ref{lem:reconstruction-1} is used instead of
Lemma\thinspace\ref{lem:reconstruction}, and that $\sup_{\theta\in\Theta}$ at
each step is removed.
\end{proof}

\subsection{Asymptotic cloner using the optimal amplifier for the Gaussian
shift family}

Hereafter, we assume the existence of a sequence $\left\{  \hat{\theta}%
^{n}\right\}  $ of estimate of $\theta$, such that
\begin{equation}
\lim_{a\rightarrow\infty}\lim_{n\rightarrow\infty}P_{\theta}^{n}\left\{
\sqrt{n}\left\Vert \hat{\theta}^{n}-\theta\right\Vert \geq a\right\}  =0.
\label{consistent}%
\end{equation}
Without loss of generality, one can suppose that
\begin{equation}
\hat{\theta}^{n}\in n^{-1/2}%
\mathbb{Z}
. \label{discrete}%
\end{equation}
If (\ref{discrete}) is not satisfied, we redefine $\hat{\theta}^{n}$ as the
closest element of $n^{-1/2}%
\mathbb{Z}
$ to $\hat{\theta}^{n}$. Obviously, newly defined $\hat{\theta}^{n}$ satisfies
(\ref{consistent}). Therefore, letting
\[
\Theta_{0}:=\left\{  k^{-1/2}\cdot l;k\in%
\mathbb{N}
,l\in%
\mathbb{Z}
\right\}  ,
\]
we can suppose
\[
\hat{\theta}^{n}\in\Theta_{0}%
\]
and the cardinality of $\Theta_{0}$ is countable.

We consider the following procedure of $\left(  n,rn\right)  $-cloner
$\Lambda_{\delta,\varepsilon}^{n,r}$. For the composition, we use the optimal
$r$-amplifier $\Lambda_{\mathrm{amp}}^{r}=\Psi_{\sqrt{r}}$ of the Gaussian
shift family $\left\{  \mathrm{N}\left(  h,J_{\theta}^{-1}\right)  \right\}
_{h\in%
\mathbb{R}
^{m}}$. Also, define
\[
L_{\theta}^{n,\varepsilon}\left(  \omega^{n}\right)  :=J_{\theta}^{-1}%
\ell_{\theta}^{n}\left(  \omega^{n}\right)  +Y_{\varepsilon},
\]
where $\mathcal{L}\left(  Y_{\varepsilon}\right)  =$ $\mathrm{N}\left(
0,\varepsilon\boldsymbol{1}\right)  $.

Then, for a given $\varepsilon>0$ and $0<\delta<1$, we construct a cloner
$\Lambda_{\delta,\varepsilon}^{n,r}$ as follows.

\begin{description}
\item[(I)] Estimate $\theta$ using $n_{1}$-data, ($n_{1}:=\delta n$) and let
$n_{2}:=\left(  1-\delta\right)  n\,.$

\item[(II)] Apply $\Lambda_{\mathrm{amp}}^{\sqrt{r/1-\delta}}$ to
$\mathcal{L}\left(  L_{\hat{\theta}^{n_{1}}}^{n_{2},\varepsilon}|P_{\theta
}^{n_{2}}\right)  $. Denote the resulting random variable by $\tilde{X}%
_{\hat{\theta}^{n_{1}}}^{n}$.$\frac{\frac{\frac{{}}{{}}}{{}}}{{}}$

\item[(III)] Generate $\left(  \omega^{rn},\omega^{\prime}\right)  $ according
to $R_{\hat{\theta}^{n_{1}}}^{rn}\left(  \cdot|\tilde{X}_{\hat{\theta}^{n_{1}%
}}^{n}\right)  $, and discard $\omega^{\prime}$.
\end{description}

The output probability distribution is
\[
\Lambda_{\delta,\varepsilon}^{n,r}\left(  P_{\theta}^{n}\right)
=\mathrm{E}^{\hat{\theta}^{n_{1}}}\mathrm{E}^{\tilde{X}_{\hat{\theta}^{n_{1}}%
}^{n}}R_{\hat{\theta}^{n_{1}}}^{rn}\left(  A\times\Omega^{\prime}|\tilde
{X}_{\hat{\theta}^{n_{1}}}^{n}\right)  .
\]
We will show this is asymptotically optimal.

\begin{lemma}
\label{lem:local-limit}Suppose  $\theta\rightarrow p_{\theta}$ is continuously
differentiable in quadratic mean, and (\ref{EZ<infty}) holds. Moreover,
suppose (\ref{J-cont}) is satisfied. Then, for any compact set $K^{\prime
}\subset%
\mathbb{R}
^{m}$,
\[
\lim_{\varepsilon\downarrow0}\lim_{n\rightarrow\infty}\sup_{h\in K^{\prime}%
}\left\Vert \mathcal{L}\left(  L_{\theta_{n,h}}^{n,\varepsilon}|P_{\theta
_{n,h}}^{n}\right)  -\mathrm{N}\left(  0,J_{\theta}^{-1}\right)  \right\Vert
_{1}=0,
\]
where $\theta_{n,h}:=\theta+n^{-1/2}h$.
\end{lemma}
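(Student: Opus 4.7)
The plan is to use the added noise $Y_{\varepsilon}$ as a smoothing device that upgrades weak (CLT) convergence into total variation convergence, and then shrink $\varepsilon\downarrow 0$ at the end. The argument proceeds in three stages: (a) a central limit theorem for $\ell_{\theta_{n,h}}^{n}$ under $P_{\theta_{n,h}}^{n}$, uniformly in $h$; (b) convolution with $Y_{\varepsilon}$ to turn weak convergence into $L^{1}$-convergence of densities; (c) continuity of the Gaussian family in the covariance as $\varepsilon\downarrow 0$.

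For stage (a), observe that under $P_{\theta_{n,h}}^{n}$ the statistic $\ell_{\theta_{n,h}}^{n}=n^{-1/2}\sum_{\kappa=1}^{n}\ell_{\theta_{n,h}}(\omega_{\kappa})$ is a normalized i.i.d.\ sum with mean zero and covariance $J_{\theta_{n,h}}$. The multivariate Berry-Esseen bound (the same tool invoked in the proof of Lemma~\ref{lem:skorohod}) gives a rate of convergence to $\mathrm{N}(0,J_{\theta_{n,h}})$ controlled by $\mathrm{E}_{\theta_{n,h}}\|J_{\theta_{n,h}}^{-1/2}\ell_{\theta_{n,h}}\|^{3}$. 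Assumption (\ref{EZ<infty}) (via (\ref{l-moment<infty})) makes this third moment uniformly bounded on a fixed compact neighborhood $K_{0}\subset\Theta$ of $\theta$, and $\theta_{n,h}\in K_{0}$ for every $h\in K^{\prime}$ once $n$ is large enough. Together with (\ref{J-cont}), which gives $J_{\theta_{n,h}}\to J_{\theta}$ uniformly in $h\in K^{\prime}$, I obtain that $J_{\theta_{n,h}}^{-1}\ell_{\theta_{n,h}}^{n}$ converges weakly to $\mathrm{N}(0,J_{\theta}^{-1})$ uniformly in $h\in K^{\prime}$.

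For stage (b), with $\varepsilon>0$ fixed, the characteristic function of $L_{\theta_{n,h}}^{n,\varepsilon}$ factorizes as
\[
\phi_{n,h}(t)=\mathrm{E}_{\theta_{n,h}}^{n}\bigl[e^{i t^{T}J_{\theta_{n,h}}^{-1}\ell_{\theta_{n,h}}^{n}}\bigr]\,e^{-\varepsilon\|t\|^{2}/2}.
\]
By stage (a) this converges pointwise (uniformly in $h\in K^{\prime}$) to $\exp(-\tfrac{1}{2}t^{T}(J_{\theta}^{-1}+\varepsilon\mathbf{1})t)$, and the Gaussian factor $e^{-\varepsilon\|t\|^{2}/2}$ provides an integrable dominator independent of $n$ and $h$. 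Fourier inversion plus dominated convergence then yield pointwise (uniformly in $h$) convergence of the corresponding densities, and Scheff\'e's lemma promotes this to $L^{1}$-convergence, so that
\[
\lim_{n\to\infty}\sup_{h\in K^{\prime}}\bigl\|\mathcal{L}\bigl(L_{\theta_{n,h}}^{n,\varepsilon}\,\big|\,P_{\theta_{n,h}}^{n}\bigr)-\mathrm{N}(0,J_{\theta}^{-1}+\varepsilon\mathbf{1})\bigr\|_{1}=0.
\]
Stage (c) is immediate: $\|\mathrm{N}(0,J_{\theta}^{-1}+\varepsilon\mathbf{1})-\mathrm{N}(0,J_{\theta}^{-1})\|_{1}\to 0$ as $\varepsilon\downarrow 0$ by Scheff\'e applied to the pointwise convergence of the Gaussian densities. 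The triangle inequality combines the two limits.

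The main obstacle I anticipate is not any single step but securing the uniformity in $h\in K^{\prime}$ throughout stage (a). Pointwise-in-$h$ convergence is the classical CLT; the uniform version relies on the quantitative Berry-Esseen rate together with a uniform third moment bound $\sup_{h\in K^{\prime},\,n\,\text{large}}\mathrm{E}_{\theta_{n,h}}\|\ell_{\theta_{n,h}}\|^{3}<\infty$. This is exactly where the exponential moment hypothesis (\ref{EZ<infty}) (through (\ref{l-moment<infty})) and the continuity of $\theta\mapsto J_{\theta}$ simultaneously do the real work, by letting us replace a neighborhood shrinking with $n$ by a single fixed compact neighborhood $K_{0}$ of $\theta$.
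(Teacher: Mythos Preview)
Your proposal is correct and follows essentially the same route as the paper: Gaussian smoothing via $Y_{\varepsilon}$, characteristic-function convergence passed through Fourier inversion with the integrable dominator $e^{-\varepsilon\|t\|^{2}/2}$, Scheff\'e's lemma, and finally $\varepsilon\downarrow 0$. The only stylistic difference is in how the uniformity over $h\in K'$ is handled: the paper first passes to a near-maximizing sequence $h_{n}$ (so that a single-sequence CLT along $\theta^{n}:=\theta_{n,h_{n}}\to\theta$ suffices, via direct Taylor expansion of the characteristic function), whereas you invoke a Berry--Esseen rate to carry the uniformity through explicitly; either device closes the argument.
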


\begin{proof}
\bigskip Define $h_{n}$ so that
\[
\left\Vert \mathcal{L}\left(  L_{\theta_{n,h_{n}}}^{n,\varepsilon}%
|P_{\theta_{n,h_{n}}}^{n}\right)  -\mathrm{N}\left(  0,J_{\theta}^{-1}\right)
\right\Vert _{1}\geq\sup_{h\in K^{\prime}}\left\Vert \mathcal{L}\left(
L_{\theta_{n,h}}^{n,\varepsilon}|P_{\theta_{n,h}}^{n}\right)  -\mathrm{N}%
\left(  0,J_{\theta}^{-1}\right)  \right\Vert _{1}-\varepsilon^{\prime}%
\]
holds, and let $\theta^{n}:=\theta_{n,h_{n}}$. Then, $\lim_{n\rightarrow
\infty}\theta^{n}=\theta$.

Denote by $\phi_{\theta^{\prime}}$ the characteristic function of the
distribution of $J_{\theta^{\prime}}^{-1}\ell_{\theta^{\prime}}\left(
W_{\theta,\kappa}\right)  $. Then, the density of $\mathcal{L}\left(
L_{\theta^{n}}^{n,\varepsilon}|P_{\theta^{n}}^{n}\right)  $ with respect to
Lebesgue measure is
\[
\frac{1}{2\pi}\int\left\{  \phi_{\theta^{n}}\left(  \frac{t}{\sqrt{n}}\right)
\right\}  ^{n}e^{-\frac{1}{2}\varepsilon\left\Vert t\right\Vert ^{2}}%
e^{-\sqrt{-1}t\cdot x}\mathrm{d}t.
\]
Observe
\[
\int\left\vert \left\{  \phi_{\theta^{n}}\left(  \frac{t}{\sqrt{n}}\right)
\right\}  ^{n}e^{-\frac{1}{2}\varepsilon\left\Vert t\right\Vert ^{2}}%
e^{-\sqrt{-1}t\cdot x}\right\vert \mathrm{d}t\leq\int e^{-\varepsilon
\left\Vert t\right\Vert ^{2}}\mathrm{d}t<\infty.
\]
Hence, by Lebesgue's dominated convergence theorem, we have, with
$\ f_{\mathrm{rem}}$ being as of (\ref{exp-rem}),
\begin{align*}
&  \lim_{n\rightarrow\infty}\frac{1}{2\pi}\int\left\{  \phi_{\theta^{n}%
}\left(  \frac{t}{\sqrt{n}}\right)  \right\}  ^{n}e^{-\frac{1}{2}%
\varepsilon\left\Vert t\right\Vert ^{2}}e^{-\sqrt{-1}t\cdot x}\mathrm{d}t\\
&  =\frac{1}{2\pi}\int\lim_{n\rightarrow\infty}\left\{  \phi_{\theta^{n}%
}\left(  \frac{t}{\sqrt{n}}\right)  \right\}  ^{n}e^{-\frac{1}{2}%
\varepsilon\left\Vert t\right\Vert ^{2}}e^{-\sqrt{-1}t\cdot x}\mathrm{d}t,\\
&  =\,\ \frac{1}{2\pi}\int\lim_{n\rightarrow\infty}\left\{  1-\frac{1}%
{2n}\left(  t^{T}J_{\theta^{n}}^{-1}t\right)  +f_{\mathrm{rem}}\left(
\theta,\sqrt{-1}t,n\right)  \right\}  ^{n}e^{-\frac{1}{2}\varepsilon\left\Vert
t\right\Vert ^{2}}e^{-\sqrt{-1}t\cdot x}\mathrm{d}t\\
&  =\,\,\ \frac{1}{2\pi}\int\exp\left\{  -\frac{1}{2}t^{T}\left(  J_{\theta
}^{-1}+\varepsilon^{2}\mathbf{1}\right)  t\right\}  e^{-\sqrt{-1}t\cdot
x}\mathrm{d}t\,\,\,\text{a.e.}%
\end{align*}
Here, in the third line, we used Lemma\thinspace
\ref{lem:likelihood-expansion-1} to show that the first order term of the
Taylor expansion ($=\mathrm{E}_{\theta}^{n}\ell_{\theta}^{n}$) vanishes. Also,
to obtain the fourth line, we used the inequality (\ref{rem-upper}).

Therefore, the density of $\mathcal{L}\left(  L_{\theta^{n}}^{n,\varepsilon
}|P_{\theta^{n}}^{n}\right)  $ converges to the one of $\mathrm{N}\left(
0,J_{\theta}^{-1}+\varepsilon\right)  $, as $n\rightarrow\infty$. Therefore,
By Schefe's lemma, we have
\[
\lim_{n\rightarrow\infty}\left\Vert \mathcal{L}\left(  L_{\theta^{n}%
}^{n,\varepsilon}|P_{\theta^{n}}^{n}\right)  -\mathrm{N}\left(  0,J_{\theta
}^{-1}+\varepsilon\right)  \right\Vert _{1}=0.
\]
Therefore,
\begin{align*}
&  \lim_{\varepsilon\downarrow0}\lim_{n\rightarrow\infty}\left\Vert
\mathcal{L}\left(  L_{\theta^{n}}^{n,\varepsilon}|P_{\theta^{n}}^{n}\right)
-\mathrm{N}\left(  0,J_{\theta}^{-1}\right)  \right\Vert _{1}\\
&  \leq\lim_{\varepsilon\downarrow0}\lim_{n\rightarrow\infty}\left\Vert
\mathcal{L}\left(  L_{\theta^{n}}^{n,\varepsilon}|P_{\theta^{n}}^{n}\right)
-\mathrm{N}\left(  0,J_{\theta}^{-1}+\varepsilon\right)  \right\Vert _{1}%
+\lim_{\varepsilon\downarrow0}\left\Vert \mathrm{N}\left(  0,J_{\theta}%
^{-1}\right)  -\mathrm{N}\left(  0,J_{\theta}^{-1}+\varepsilon\right)
\right\Vert _{1}\\
&  =0.
\end{align*}

Therefore,
\[
\lim_{\varepsilon\downarrow0}\lim_{n\rightarrow\infty}\sup_{h\in K^{\prime}%
}\left\Vert \mathcal{L}\left(  L_{\theta_{n,h}}^{n,\varepsilon}|P_{\theta
_{n,h}}^{n}\right)  -\mathrm{N}\left(  0,J_{\theta}^{-1}\right)  \right\Vert
_{1}\leq\varepsilon^{\prime}.
\]
Since $\varepsilon^{\prime}>0$ is arbitrary, we have the assertion.
\end{proof}

\begin{lemma}
\label{lem:||p-N||}Suppose  $\theta\rightarrow p_{\theta}$ is continuously
differentiable in quadratic mean, and (\ref{EZ<infty}) holds. Moreover, we
suppose (\ref{J-cont}) and (\ref{J-nondeg}) hold. Then, for any compact set
$K^{\prime}\in%
\mathbb{R}
^{m}$, we have
\[
\lim_{\varepsilon\downarrow0}\lim_{n\rightarrow\infty}\sup_{h\in K^{\prime}%
}\left\Vert \mathcal{L}\left(  L_{\theta_{n,h}}^{n,\varepsilon}|P_{\theta}%
^{n}\right)  -\mathrm{N}\left(  -h,J_{\theta}^{-1}\right)  \right\Vert _{1}=0,
\]
where $\theta_{n,h}:=\theta+n^{-1/2}h$
\end{lemma}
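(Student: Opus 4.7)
The plan is to mirror the proof of Lemma \ref{lem:local-limit}, the only essential change being that expectations are now computed against $P_{\theta}^{n}$ instead of $P_{\theta_{n,h}}^{n}$. Under $P_{\theta}^{n}$, the first moment of the summand $\ell_{\theta_{n,h}}$ is no longer zero but carries an $n^{-1/2}$-bias; when raised to the $n$-th power in the characteristic-function expansion, this bias becomes exactly the shift by $-h$ of the limiting Gaussian.

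First, as in the proof of Lemma \ref{lem:local-limit}, I would choose an almost-extremal sequence $h_{n}\in K^{\prime}$ and, by compactness, pass to a subsequence with $h_{n}\rightarrow h_{\ast}$. Writing $\theta^{n}:=\theta_{n,h_{n}}$ and using that $Y_{\varepsilon}$ is an independent Gaussian noise, the density of $\mathcal{L}(L_{\theta^{n}}^{n,\varepsilon}|P_{\theta}^{n})$ with respect to Lebesgue measure admits the Fourier representation
\[
\frac{1}{(2\pi)^{m}}\int\bigl(\phi_{n}(t)\bigr)^{n}\,e^{-\frac{1}{2}\varepsilon\Vert t\Vert^{2}}\,e^{-\sqrt{-1}\,t\cdot x}\,\mathrm{d}t,\qquad \phi_{n}(t):=\mathrm{E}_{\theta}\exp\!\bigl(\sqrt{-1}\,n^{-1/2}\,t^{T}J_{\theta^{n}}^{-1}\ell_{\theta^{n}}(W_{\theta,1})\bigr).
\]
As in Lemma \ref{lem:local-limit}, the Gaussian factor $e^{-\varepsilon\Vert t\Vert^{2}/2}$ provides the integrable envelope needed for dominated convergence.

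Second, I would expand $\phi_{n}(t)$ to second order. The key new ingredient, relative to Lemma \ref{lem:local-limit}, is the DQM identity
\[
\mathrm{E}_{\theta}\ell_{\theta_{n,h_{n}}}=-n^{-1/2}\,J_{\theta}h_{n}+o(n^{-1/2}),
\]
derived from the DQM expansion of $\sqrt{p_{\theta+t}}$ combined with the zero-mean identity $\mathrm{E}_{\theta+t}\ell_{\theta+t}=0$ (cf.\ Lemma 7.6 of \cite{vandervaart}), uniformly in $h_{n}\in K^{\prime}$. Combined with $\mathrm{E}_{\theta}\ell_{\theta^{n}}\ell_{\theta^{n}}^{T}\to J_{\theta}$ (from (\ref{J-cont}) plus $L^{2}(\mu)$-continuity of $\ell_{\theta}\sqrt{p_{\theta}}$) and the remainder control (\ref{rem-upper}), this yields
\[
\phi_{n}(t)=1-\frac{\sqrt{-1}}{n}\,t^{T}h_{n}-\frac{1}{2n}\,t^{T}J_{\theta}^{-1}t+o(n^{-1})
\]
uniformly on $t$-compacts, so $\phi_{n}(t)^{n}\to\exp\!\bigl(-\sqrt{-1}\,t^{T}h_{\ast}-\tfrac{1}{2}t^{T}J_{\theta}^{-1}t\bigr)$.

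Third, Lebesgue's dominated convergence (with envelope $e^{-\varepsilon\Vert t\Vert^{2}/2}$) implies pointwise-a.e.\ convergence of the density of $\mathcal{L}(L_{\theta^{n}}^{n,\varepsilon}|P_{\theta}^{n})$ to the density of $\mathrm{N}(-h_{\ast},J_{\theta}^{-1}+\varepsilon\mathbf{1})$. Scheffé's lemma then upgrades this to $L^{1}$ convergence, and $\Vert\mathrm{N}(-h_{\ast},J_{\theta}^{-1}+\varepsilon\mathbf{1})-\mathrm{N}(-h_{\ast},J_{\theta}^{-1})\Vert_{1}\to 0$ as $\varepsilon\downarrow 0$ by translation invariance. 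Continuity of $h\mapsto\mathrm{N}(-h,J_{\theta}^{-1})$ in total variation and $h_{n}\to h_{\ast}$ close the reduction, and arbitrariness of $\varepsilon^{\prime}>0$ delivers the assertion.

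The main obstacle is the DQM-only derivation of $\mathrm{E}_{\theta}\ell_{\theta_{n,h_{n}}}=-n^{-1/2}J_{\theta}h_{n}+o(n^{-1/2})$ uniformly in $h_{n}\in K^{\prime}$ without assuming that $\ell_{\theta}$ is itself classically differentiable in $\theta$. Once this second-order identity is in place, every remaining step is a direct rerun of Lemma \ref{lem:local-limit} with the mean $0$ replaced by $-h$.
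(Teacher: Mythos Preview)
Your approach is genuinely different from the paper's. You propose to rerun the characteristic–function argument of Lemma~\ref{lem:local-limit} under the ``wrong'' measure $P_{\theta}^{n}$, picking up the bias $\mathrm{E}_{\theta}\ell_{\theta_{n,h}}\approx -n^{-1/2}J_{\theta}h$ in the first-order term of the Taylor expansion of $\phi_{n}$. The paper instead never computes cross-moments of $\ell_{\theta'}$ under $P_{\theta}$: it writes
\[
\mathrm{E}_{\theta}^{n}\bigl[f(L_{\theta_{n,h}}^{n,\varepsilon})\bigr]=\mathrm{E}_{\theta_{n,h}}^{n}\bigl[f(L_{\theta_{n,h}}^{n,\varepsilon})\,Z_{\theta_{n,h},-h}^{n}\bigr],
\]
replaces $Z_{\theta_{n,h},-h}^{n}$ by $Z_{\theta_{n,h},-h}(\ell_{\theta_{n,h}}^{n})$ via Lemma~\ref{lem:EZ-EZl}, and then truncates and invokes Lemma~\ref{lem:local-limit} as a black box. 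The shift by $-h$ thus appears not from a biased mean but from the explicit Gaussian tilt $Z_{\theta,-h}(J_{\theta}x)$ acting on $\mathrm{N}(0,J_{\theta}^{-1})$. The advantage of the paper's route is that every expectation is taken under $P_{\theta_{n,h}}^{n}$ with the matching score $\ell_{\theta_{n,h}}$, so conditions (\ref{EZ<infty}), (\ref{l-moment<infty}) and the remainder bound (\ref{rem-upper}) apply verbatim.

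Your route, by contrast, requires controlling $\mathrm{E}_{\theta}[\,g(\ell_{\theta^{n}})\,]$ for $\theta^{n}\neq\theta$, and the paper's hypotheses do not give this directly. You correctly flag the first-moment expansion as the main obstacle, but there are two further gaps of the same nature. First, the claimed second-moment convergence $\mathrm{E}_{\theta}[\ell_{\theta^{n}}\ell_{\theta^{n}}^{T}]\to J_{\theta}$ is not just (\ref{J-cont}); (\ref{J-cont}) gives $\mathrm{E}_{\theta^{n}}[\ell_{\theta^{n}}\ell_{\theta^{n}}^{T}]\to J_{\theta}$, and bridging the gap needs $\ell_{\theta^{n}}\sqrt{p_{\theta}}\to\ell_{\theta}\sqrt{p_{\theta}}$ in $L^{2}(\mu)$, which in turn requires controlling $\int|\ell_{\theta^{n}}|^{2}(\sqrt{p_{\theta}}-\sqrt{p_{\theta^{n}}})^{2}\mathrm{d}\mu$---a uniform-integrability issue not covered by continuous DQM alone. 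Second, the remainder bound (\ref{rem-upper}) that you invoke is stated for $\mathrm{E}_{\theta}\cosh(e^{T}\ell_{\theta})$, whereas your expansion needs moments of $\ell_{\theta^{n}}$ under $P_{\theta}$; condition (\ref{EZ<infty}) does not supply $\sup_{n}\mathrm{E}_{\theta}|e^{T}\ell_{\theta^{n}}|^{3}<\infty$. These are not fatal---with enough additional work (e.g.\ contiguity plus uniform integrability from (\ref{EZ<infty})) they can likely be closed---but the change-of-measure argument avoids them entirely and is the cleaner path here.
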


\begin{proof}
Observe, for any measurable function $f$ with $\sup_{x\in%
\mathbb{R}
^{m}}\left\vert f\left(  x\right)  \right\vert \leq1$,%
\[
\mathrm{E}^{Y_{\varepsilon}}\mathrm{E}_{\theta}^{n}\left[  f\left(
L_{\theta_{n,h}}^{n,\varepsilon}\right)  \right]  =\mathrm{E}^{Y_{\varepsilon
}}\mathrm{E}_{\theta_{n,h}}^{n}\left[  f\left(  L_{\theta_{n,h}}%
^{n,\varepsilon}\right)  Z_{\theta_{n,h},-h}^{n}\,\right]  ,
\]
Observe also, due to Lemma\thinspace\ref{lem:EZ-EZl}, with $K_{\theta}$ being
a compact subset of $\Theta$ containing $\theta$ and $K^{\prime}$ being an
arbitrary compact subset of $%
\mathbb{R}
^{m}$,%
\begin{align}
&  \sup_{h\in K^{\prime}}\left\vert \mathrm{E}^{Y_{\varepsilon}}%
\mathrm{E}_{\theta_{n,h}}^{n}\left[  f\left(  L_{\theta_{n,h}}^{n,\varepsilon
}\right)  \left(  Z_{\theta_{n,h},-h}^{n}\,-Z_{\theta_{n,h},-h}\left(
\ell_{\theta_{n,h}}^{n}\right)  \,\right)  \right]  \right\vert \nonumber\\
&  \leq\sup_{h\in K^{\prime}}\mathrm{E}_{\theta_{n,h}}^{n}\left[  \left\vert
Z_{\theta_{n,h},-h}^{n}\,-Z_{\theta_{n,h},-h}\left(  \ell_{\theta_{n,h}}%
^{n}\right)  \,\right\vert \,\right]  \nonumber\\
&  \leq\sup_{h\in K^{\prime}}\sup_{\theta^{\prime}\in K_{\theta}}%
\mathrm{E}_{\theta^{\prime}}^{n}\left[  \left\vert Z_{\theta^{\prime},-h}%
^{n}\,-Z_{\theta^{\prime},-h}\left(  \ell_{\theta^{\prime}}^{n}\right)
\,\right\vert \,\right]  \nonumber\\
&  \rightarrow0,\,\,n\rightarrow\infty.\label{EfZ}%
\end{align}
Therefore,\quad we have to evaluate
\begin{align*}
&  \lim_{n\rightarrow\infty}\sup_{h\in K^{\prime}}\mathrm{E}^{Y_{\varepsilon}%
}\mathrm{E}_{\theta_{n,h}}^{n}\left[  f\left(  L_{\theta_{n,h}}^{n,\varepsilon
}\right)  Z_{\theta_{n,h},-h}\,\left(  \ell_{\theta_{n,h}}^{n}\right)
\right]  \\
&  =\lim_{n\rightarrow\infty}\sup_{h\in K^{\prime}}\mathrm{E}^{Y_{\varepsilon
}}\mathrm{E}_{\theta_{n,h}}^{n}\left[  f\left(  L_{\theta_{n,h}}%
^{n,\varepsilon}\right)  Z_{\theta_{n,h},-h}\,\left(  J_{\theta_{n,h}%
}L_{\theta_{n,h}}^{n,\varepsilon}\right)  e^{h^{T}J_{\theta_{n,h}%
}Y_{\varepsilon}}\right]  \\
&  =\lim_{n\rightarrow\infty}\sup_{h\in K^{\prime}}\left\{  E_{1}+E_{2}%
+E_{3}\right\}  ,
\end{align*}
where
\begin{align*}
E_{1} &  :=\mathrm{E}^{Y_{\varepsilon}}\mathrm{E}_{\theta_{n,h}}^{n}\left[
I_{\left\{  \left\Vert L_{\theta_{n,h}}^{n,\varepsilon}\right\Vert \leq
a\,,\,\left\Vert Y_{\varepsilon}\right\Vert \leq\varepsilon^{1/4}\right\}
}f\left(  L_{\theta_{n,h}}^{n,\varepsilon}\right)  Z_{\theta_{n,h}%
,-h}\,\left(  J_{\theta_{n,h}}L_{\theta_{n,h}}^{n,\varepsilon}\right)
e^{h^{T}J_{\theta_{n,h}}Y_{\varepsilon}}\right]  ,\\
E_{2} &  :=\mathrm{E}^{Y_{\varepsilon}}\mathrm{E}_{\theta_{n,h}}^{n}\left[
I_{\left\{  \,\left\Vert Y_{\varepsilon}\right\Vert >\varepsilon
^{1/4}\right\}  }f\left(  L_{\theta_{n,h}}^{n,\varepsilon}\right)
Z_{\theta_{n,h},-h}\,\left(  J_{\theta_{n,h}}L_{\theta_{n,h}}^{n,\varepsilon
}\right)  e^{h^{T}J_{\theta_{n,h}}Y_{\varepsilon}}\right]  ,\\
E_{3} &  :=\mathrm{E}^{Y_{\varepsilon}}\mathrm{E}_{\theta_{n,h}}^{n}\left[
I_{\left\{  \left\Vert L_{\theta_{n,h}}^{n,\varepsilon}\right\Vert
>a\,,\,\left\Vert Y_{\varepsilon}\right\Vert \leq\varepsilon^{1/4}\right\}
}f\left(  L_{\theta_{n,h}}^{n,\varepsilon}\right)  Z_{\theta_{n,h}%
,-h}\,\left(  J_{\theta_{n,h}}L_{\theta_{n,h}}^{n,\varepsilon}\right)
e^{h^{T}J_{\theta_{n,h}}Y_{\varepsilon}}\right]  .
\end{align*}

The first term of the right most side of $E_{1}$ is evaluated as follows.
\begin{equation}
E_{1}=\mathrm{E}^{Y_{\varepsilon}}\mathrm{E}_{\theta_{n,h}}^{n}\left[
I_{\left\{  \left\Vert L_{\theta_{n,h}}^{n,\varepsilon}\right\Vert \leq
a\right\}  }f\left(  L_{\theta_{n,h}}^{n,\varepsilon}\right)  Z_{\theta
_{n,h},-h}\,\left(  J_{\theta_{n,h}}L_{\theta_{n,h}}^{n,\varepsilon}\right)
\mathrm{E}\left[  \left.  I_{\left\{  \left\vert Y_{\varepsilon}\right\vert
\leq\varepsilon^{1/4}\right\}  }e^{h^{T}J_{\theta_{n,h}}Y_{\varepsilon}%
}\right\vert L_{\theta_{n,h}}^{n,\varepsilon}\right]  \right]  , \label{EfZ-1}%
\end{equation}
whose second factor can be evaluated as
\begin{align}
&  \lim_{\varepsilon\downarrow0}\lim_{n\rightarrow\infty}\sup_{h\in K^{\prime
}}\left\vert \mathrm{E}\left[  \left.  I_{\left\{  \left\Vert Y_{\varepsilon
}\right\Vert \leq\varepsilon^{1/2}\right\}  }e^{h^{T}J_{\theta_{n,h}}%
Y}\right\vert L_{\theta_{n,h}}^{n,\varepsilon}\right]  -1\right\vert
\nonumber\\
&  \leq\lim_{\varepsilon\downarrow0}\lim_{n\rightarrow\infty}\sup_{h\in
K^{\prime}}e^{\left\Vert h\right\Vert \left\Vert J_{\theta_{n,h}}\right\Vert
\varepsilon^{1/4}}=0 \label{EfZ-11}%
\end{align}

To evaluate the first factor of $E_{1}$, or
\[
E_{1,1}:=\mathrm{E}^{Y_{\varepsilon}}\mathrm{E}_{\theta_{n,h}}^{n}\left[
I_{\left\{  \left\Vert L_{\theta_{n,h}}^{n,\varepsilon}\right\Vert \leq
a\right\}  }f\left(  L_{\theta_{n,h}}^{n,\varepsilon}\right)  Z_{\theta
_{n,h},-h}\,\left(  J_{\theta_{n,h}}L_{\theta_{n,h}}^{n,\varepsilon}\right)
\right]  ,
\]
observe
\begin{align*}
&  \lim_{n\rightarrow\infty}\sup_{h\in K^{\prime}}\left\vert E_{1,1}%
-\mathrm{E}^{Y_{\varepsilon}}\mathrm{E}_{\theta_{n,h}}^{n}\left[  I_{\left\{
\left\Vert L_{\theta_{n,h}}^{n,\varepsilon}\right\Vert \leq a\right\}
}f\left(  L_{\theta_{n,h}}^{n}\right)  Z_{\theta,-h}\,\left(  J_{\theta
}L_{\theta_{n,h}}^{n,\varepsilon}\right)  \right]  \right\vert \\
&  \leq\lim_{n\rightarrow\infty}\sup_{h\in K^{\prime}}\sup_{\left\Vert
L\right\Vert \leq a}\left\vert Z_{\theta_{n,h},-h}\,\left(  J_{\theta_{n,h}%
}L\right)  -Z_{\theta,-h}\,\left(  J_{\theta}L\right)  \right\vert \\
&  =\lim_{n\rightarrow\infty}\sup_{h\in K^{\prime}}\sup_{\left\Vert
L\right\Vert \leq a}\left\vert \exp\left\{  -h^{T}J_{\theta_{n,h}}L\right\}
e^{\frac{1}{2}h^{T}J_{\theta_{n,h}}h}-\exp\left\{  -h^{T}J_{\theta}L\right\}
e^{\frac{1}{2}h^{T}J_{\theta}h}\right\vert \\
&  =0.
\end{align*}
Therefore, letting $X_{h}$ be a random variable with $\mathcal{L}\left(
X_{h}\right)  =\mathrm{N}\left(  h,J_{\theta}^{-1}\right)  $,
\begin{align}
&  \lim_{\varepsilon\downarrow0}\lim_{n\rightarrow\infty}\sup_{h\in K^{\prime
}}\left\vert E_{1,1}-\mathrm{E}^{X_{-h}}\left[  f\left(  X_{-h}\right)
;\left\Vert X_{-h}\right\Vert \leq a\right]  \right\vert \nonumber\\
&  =\lim_{\varepsilon\downarrow0}\lim_{n\rightarrow\infty}\sup_{h\in
K^{\prime}}\left\vert
\begin{array}
[c]{c}%
\mathrm{E}^{Y_{\varepsilon}}\mathrm{E}_{\theta_{n,h}}^{n}\left[  f\left(
L_{\theta_{n,h}}^{n}\right)  Z_{\theta,-h}\,\left(  J_{\theta}L_{\theta_{n,h}%
}^{n,\varepsilon}\right)  :\left\Vert L_{\theta_{n,h}}^{n,\varepsilon
}\right\Vert \leq a\right] \\
-\mathrm{E}^{X_{0}}\left[  f\left(  X_{0}\right)  Z_{\theta,-h}\,\left(
J_{\theta}X_{0}\right)  :\left\Vert X_{0}\right\Vert \leq a\right]
\end{array}
\right\vert \nonumber\\
&  \leq\sup_{h\in K^{\prime}}e^{\left\Vert h\right\Vert \left\Vert J_{\theta
}\right\Vert a}e^{-\frac{1}{2}h^{T}J_{\theta}h}\lim_{\varepsilon\downarrow
0}\lim_{n\rightarrow\infty}\sup_{h\in K^{\prime}}\left\Vert \mathcal{L}\left(
L_{\theta_{n,h}}^{n,\varepsilon}|P_{\theta_{n,h}}^{n}\right)  -\mathrm{N}%
\left(  0,J_{\theta}^{-1}\right)  \right\Vert _{1}\nonumber\\
&  =0, \label{EfZ-12}%
\end{align}
where the last identity is due to Lemma\thinspace\ref{lem:local-limit}.

Therefore, by (\ref{EfZ-11}) and (\ref{EfZ-12}),
\begin{equation}
\lim_{\varepsilon\downarrow0}\lim_{n\rightarrow\infty}\sup_{h\in K^{\prime}%
}\left\vert E_{1}-\mathrm{E}\left[  f\left(  X_{-h}\right)  ;\left\Vert
X_{-h}\right\Vert \leq a\right]  \right\vert =0. \label{EfZ-EfX}%
\end{equation}

On the other hand, by (\ref{EZ<finite}), $E_{2}$the second term of the right
most side of (\ref{EfZ}) is evaluated as
\begin{align}
&  \lim_{\varepsilon\downarrow0}\lim_{n\rightarrow\infty}\sup_{h\in K^{\prime
}}E_{2}\leq\lim_{n\rightarrow\infty}\sup_{h\in K^{\prime}}\mathrm{E}%
^{Y_{\varepsilon}}\mathrm{E}_{\theta_{n,h}}^{n}\left[  Z_{\theta_{n,h}%
,-h}\,\left(  \ell_{\theta_{n,h}}^{n}\right)  :\left\Vert Y_{\varepsilon
}\right\Vert >\varepsilon^{1/4}\right] \nonumber\\
&  =\lim_{\varepsilon\downarrow0}\Pr\left\{  \left\Vert Y_{\varepsilon
}\right\Vert >\varepsilon^{1/4}\right\}  \cdot\lim_{n\rightarrow\infty}%
\sup_{h\in K^{\prime}}e^{h^{T}J_{\theta_{n,h}}h}e^{-\frac{1}{2}h^{T}%
J_{\theta_{n,h}}h}\nonumber\\
&  =0. \label{EfZ-2}%
\end{align}
Also, $E_{3}$ is evaluated as, by (\ref{EZ<finite}),
\begin{align}
&  \lim_{\varepsilon\downarrow0}\lim_{n\rightarrow\infty}\sup_{h\in K^{\prime
}}E_{3}\nonumber\\
&  \leq\lim_{\varepsilon\downarrow0}\lim_{n\rightarrow\infty}\sup_{h\in
K^{\prime}}\mathrm{E}^{Y_{\varepsilon}}\mathrm{E}_{\theta_{n,h}}\left[
Z_{\theta_{n,h},-h}\,\left(  \ell_{\theta_{n,h}}^{n}\right)  :\left\Vert
L_{\theta_{n,h}}^{n,\varepsilon}\right\Vert >a\,,\,\left\Vert Y_{\varepsilon
}\right\Vert \leq\varepsilon^{1/4}\right] \nonumber\\
&  \leq\lim_{\varepsilon\downarrow0}\lim_{n\rightarrow\infty}\sup_{h\in
K^{\prime}}\mathrm{E}_{\theta_{n,h}}^{n}\left[  I_{\left\{  \left\Vert
J_{\theta_{n,h}}^{-1}\ell_{\theta_{n,h}}^{n}\right\Vert >a-\varepsilon
^{1/4}\right\}  }Z_{\theta_{n,h},-h}\,\left(  \ell_{\theta_{n,h}}^{n}\right)
\right] \nonumber\\
&  \leq\lim_{\varepsilon\downarrow0}\lim_{n\rightarrow\infty}\sqrt{\sup_{h\in
K^{\prime}}P_{\theta_{n,h}}^{n}\left\{  \left\Vert \ell_{\theta_{n,h}}%
^{n}\right\Vert >\alpha_{\theta_{n,h}}\left(  a-\varepsilon^{1/4}\right)
\right\}  \mathrm{E}_{\theta_{n,h}}^{n}\left[  Z_{\theta_{n,h},-h}\,\left(
\ell_{\theta_{n,h}}^{n}\right)  \right]  ^{2}}\nonumber\\
&  \leq\lim_{\varepsilon\downarrow0}\lim_{n\rightarrow\infty}\sqrt{\sup_{h\in
K^{\prime}}P_{\theta_{n,h}}^{n}\left\{  \left\Vert \ell_{\theta_{n,h}}^{n_{2}%
}\right\Vert >\alpha_{\theta_{n,h}}\left(  a-\varepsilon^{1/4}\right)
\right\}  e^{2h^{T}J_{\theta_{n,h}}h}e^{-h^{T}J_{\theta_{n,h}}h}}\nonumber\\
&  \leq\lim_{\varepsilon\downarrow0}\lim_{n\rightarrow\infty}\sqrt{\sup_{h\in
K^{\prime}}\frac{1}{\left\{  \alpha_{\theta_{n,h}}\left(  a-\varepsilon
^{1/4}\right)  \right\}  ^{2}}\mathrm{E}_{\theta_{n,h}}^{n}\left\Vert
\ell_{\theta_{n,h}}^{n}\right\Vert ^{2}e^{h^{T}J_{\theta_{n,h}}h}}\nonumber\\
&  \leq\sqrt{\frac{\mathrm{tr}\,J_{\theta}}{\alpha_{\theta}^{2}a^{2}}%
\sup_{h\in K^{\prime}}e^{h^{T}J_{\theta}h}}, \label{EfZ-3}%
\end{align}
where $\alpha_{\theta}$ is the minimal eigenvalue of $J_{\theta}$.

After all, combining (\ref{EfZ}), (\ref{EfZ-EfX}), (\ref{EfZ-2}) and
(\ref{EfZ-3}), we have
\begin{align*}
&  \lim_{\varepsilon\downarrow0}\lim_{n\rightarrow\infty}\sup_{h\in K^{\prime
}}\left\vert \mathrm{E}^{Y_{\varepsilon}}\mathrm{E}_{\theta}^{n}\left[
f\left(  L_{\theta_{n,h}}^{n,\varepsilon}\right)  \right]  -\mathrm{E}\left[
f\left(  X_{-h}\right)  \right]  \right\vert \\
&  \leq\lim_{\varepsilon\downarrow0}\lim_{n\rightarrow\infty}\sup_{h\in
K^{\prime}}\left\vert \mathrm{E}^{Y_{\varepsilon}}\mathrm{E}_{\theta_{n,h}%
}^{n}\left[  f\left(  L_{\theta_{n,h}}^{n,\varepsilon}\right)  Z_{\theta
_{n,h},-h}\,\left(  \ell_{\theta_{n,h}}^{n}\right)  \right]  -\mathrm{E}%
\left[  f\left(  X_{-h}\right)  \right]  \right\vert \\
&  =\sqrt{\frac{\mathrm{tr}\,J_{\theta}}{\alpha_{\theta}^{2}a^{2}}\sup_{h\in
K^{\prime}}e^{h^{T}J_{\theta}h}}+\sup_{h\in K^{\prime}}\Pr\left\{  \left\Vert
X_{-h}\right\Vert >a\right\}  .
\end{align*}
Since $a$ is arbitrary, letting $a\rightarrow\infty$, we have
\[
\lim_{\varepsilon\downarrow0}\lim_{n\rightarrow\infty}\sup_{h\in K^{\prime}%
}\left\Vert \mathcal{L}\left(  L_{\theta_{n,h}}^{n,\varepsilon}|P_{\theta}%
^{n}\right)  -\mathrm{N}\left(  -h,J_{\theta}^{-1}\right)  \right\Vert
_{1}=0.
\]

\end{proof}

\begin{theorem}
\label{th:realize-by-gauss}Suppose $\theta\rightarrow p_{\theta}$ is
continuously differentiable in quadratic mean, and (\ref{EZ<infty}) holds.
Moreover, we suppose (\ref{J-cont}), (\ref{J-nondeg}), and (\ref{consistent})
hold. Then, we have (\ref{achieve}).
\end{theorem}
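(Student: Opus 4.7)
The plan is to track the three steps of $\Lambda_{\delta,\varepsilon}^{n,r}$ through the LAN apparatus, at each stage replacing the original experiment by the Gaussian shift centered at $\hat\theta^{n_1}$, so that the total-variation loss collapses to the known loss $D_{r/(1-\delta),J_\theta^{-1}}$ of the optimal Gaussian $(r/(1-\delta))$-amplifier $\Psi_{\sqrt{r/(1-\delta)}}$, and then to let $\delta\downarrow 0$.

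I would first use (\ref{consistent}) together with (\ref{discrete}) to truncate to the event $\mathcal{E}_{a,n}:=\{\sqrt{n_1}\|\hat\theta^{n_1}-\theta\|\le a\}$, whose probability tends to $1$ as $a\to\infty$, and on which both $h':=\sqrt{n_2}(\hat\theta^{n_1}-\theta)$ and $h_*:=\sqrt{rn}(\theta-\hat\theta^{n_1})$ lie in a compact $K'\subset\mathbb{R}^m$ depending only on $a,r,\delta$. Conditioning on $\hat\theta^{n_1}\in\Theta_0$ (which is independent of the last $n_2$ coordinates used to form $L$), Lemma~\ref{lem:||p-N||} applied with true parameter $\theta$ and evaluation point $\hat\theta^{n_1}=\theta+n_2^{-1/2}h'$ gives
\[
\bigl\|\mathcal{L}(L_{\hat\theta^{n_1}}^{n_2,\varepsilon}\mid P_\theta^{n_2})-\mathrm{N}(-h',J_\theta^{-1})\bigr\|_1\longrightarrow 0
\]
uniformly on $\mathcal{E}_{a,n}$ as $n\to\infty$ then $\varepsilon\downarrow 0$. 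Pushing through the scale map $\Psi_{\sqrt{r/(1-\delta)}}$, a bijective deterministic Markov map sending $\mathrm{N}(\mu,\Sigma)$ to $\mathrm{N}(\sqrt{r/(1-\delta)}\mu,(r/(1-\delta))\Sigma)$ and therefore preserving total variation, together with the identity $\sqrt{r/(1-\delta)}\cdot(-h')=h_*$, yields $\|\mathcal{L}(\tilde X_{\hat\theta^{n_1}}^n)-\mathrm{N}(h_*,\tfrac{r}{1-\delta}J_\theta^{-1})\|_1\to 0$. In parallel, Theorem~\ref{th:uniform-lan} applied at center $\hat\theta^{n_1}$, shift $h_*$ and sample size $rn$ gives $\|\tilde P_\theta^{rn}-Q^{rn}_{\hat\theta^{n_1},h_*}\|_1\to 0$, where $Q^{rn}_{\hat\theta^{n_1},h_*}(\cdot)=\mathrm{E}^{\lambda^{rn}_{h_*}}R^{rn}_{\hat\theta^{n_1}}(\cdot\mid\lambda^{rn}_{h_*})$ and $\lambda^{rn}_{h_*}\sim\mathrm{N}(h_*,J_{\hat\theta^{n_1}}^{-1})$.

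Since Step~(III) outputs precisely $\mathrm{E}^{\tilde X}R^{rn}_{\hat\theta^{n_1}}(\cdot\times\Omega'\mid\tilde X)$, the data-processing inequality applied to the Markov kernel $x\mapsto R^{rn}_{\hat\theta^{n_1}}(\cdot\mid x)$ reduces the conditional loss to the Gaussian comparison $\|\mathcal{L}(\tilde X)-\mathrm{N}(h_*,J_{\hat\theta^{n_1}}^{-1})\|_1$, which by (\ref{J-cont}) and the scale-invariance of $D_{r,\Sigma}$ established in (\ref{gauss-opt-loss}) tends to $D_{r/(1-\delta),J_\theta^{-1}}$. Averaging over $\hat\theta^{n_1}$ (absorbing $\mathcal{E}_{a,n}^{c}$ into $a\to\infty$ via the trivial bound $\|\cdot\|_1\le 2$) and using continuity of $\rho\mapsto\|\mathrm{N}(0,\mathbf{1})-\mathrm{N}(0,\rho\mathbf{1})\|_1$ at $\rho=r$ produces the upper bound $\le D_{r,J_\theta^{-1}}$ in the iterated limit. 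The matching lower bound follows from (\ref{l-min-max}): because the construction is approximately translation-equivariant at the $n^{-1/2}$-scale (the estimator, $J$, and the Gaussian amplifier all vary continuously in the center $\theta'$), the loss at $\theta$ differs from $\sup_{\|\theta'-\theta\|\le an^{-1/2}}\|\Lambda_{\delta,\varepsilon}^{n,r}(P_{\theta'}^n)-P_{\theta'}^{rn}\|_1$ by $o(1)$, and is hence $\ge D_{r,J_\theta^{-1}}-o(1)$. The main obstacle is not any single estimate but the careful bookkeeping of uniformities in the random $\hat\theta^{n_1}$ on $\mathcal{E}_{a,n}$; the compact-set uniformity over $K\cap\Theta_0$ and $K'$ built into Lemma~\ref{lem:||p-N||} and Theorem~\ref{th:uniform-lan} is exactly what makes this chaining possible.
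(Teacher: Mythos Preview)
Your upper-bound argument (the bulk of the proposal) is correct and follows the paper's proof essentially step for step: truncation to the event $\{\sqrt{n_2}(\hat\theta^{n_1}-\theta)\in K'\}$ via (\ref{consistent}); Lemma~\ref{lem:||p-N||} for $\mathcal{L}(L^{n_2,\varepsilon}_{\hat\theta^{n_1}}\mid P_\theta^{n_2})\to\mathrm{N}(-h',J_\theta^{-1})$; pushing through $\Psi_{\sqrt{r/(1-\delta)}}$; Theorem~\ref{th:uniform-lan} for $\tilde P_\theta^{rn}\approx Q^{rn}_{\hat\theta^{n_1},h_*}$; data processing through the kernel $R^{rn}_{\hat\theta^{n_1}}$; and finally $D_{r/(1-\delta),J_\theta^{-1}}\to D_{r,J_\theta^{-1}}$ by continuity of (\ref{gauss-opt-loss}). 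This is exactly the paper's chain (\ref{|LP-P|})--(\ref{||P-Q||-2}).

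Your closing paragraph, however, adds a pointwise lower bound at $\theta$ that the paper does not prove in this theorem and that does not follow from your sketch. The paper's proof establishes only $\le D_{r,J_\theta^{-1}}$; the companion bound (\ref{l-min-max}) is a local \emph{minimax} statement, not a pointwise one. To convert it you invoke ``approximate translation-equivariance'' of $\Lambda_{\delta,\varepsilon}^{n,r}$, i.e.\ that its loss is asymptotically constant over $\|\theta'-\theta\|\le an^{-1/2}$. But the only control available on that loss at any individual $\theta'$ is the upper bound $\le D_{r/(1-\delta)}+o(1)$ you have just derived; the data-processing step through $R^{rn}$ is a one-sided inequality, and nothing rules out the loss being strictly smaller at some or all $\theta'$. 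Hence showing that the supremum equals the value at $\theta$ already presupposes the pointwise lower bound you are trying to extract, and the argument is circular. If you want equality in (\ref{achieve}) at the single point $\theta$, you would need an independent argument (for instance that the $R^{rn}$-processing is asymptotically lossless on the relevant Gaussian pair), which neither you nor the paper supplies.
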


\begin{proof}
\ \ \ Observe, for any compact set $K^{\prime}\subset%
\mathbb{R}
^{m}$, \thinspace\ \
\begin{align}
&  \left\Vert \Lambda_{\delta,\varepsilon}^{n,r}\left(  P_{\theta}^{n}\right)
-P_{\theta}^{rn}\right\Vert _{1}.\nonumber\\
&  \leq\left\Vert \mathrm{E}^{\hat{\theta}^{n_{1}}}\mathrm{E}^{\tilde{X}%
_{\hat{\theta}^{n_{1}}}^{n}}R_{\hat{\theta}^{n_{1}}}^{rn}\left(  \cdot
|\tilde{X}_{\hat{\theta}^{n_{1}}}^{n}\right)  -\tilde{P}_{\theta}%
^{rn}\right\Vert _{1}\nonumber\\
&  \leq\mathrm{E}^{\hat{\theta}^{n_{1}}}\left\Vert \mathrm{E}^{\tilde{X}%
_{\hat{\theta}^{n_{1}}}^{n}}R_{\hat{\theta}^{n_{1}}}^{rn}\left(  \cdot
|\tilde{X}_{\hat{\theta}^{n_{1}}}^{n}\right)  -\tilde{P}_{\theta}%
^{rn}\right\Vert _{1}\nonumber\\
&  \leq\mathrm{E}^{\hat{\theta}^{n_{1}}}\left[  \left\Vert \mathrm{E}%
^{\tilde{X}_{\hat{\theta}^{n_{1}}}^{n}}R_{\hat{\theta}^{n_{1}}}^{rn}\left(
\cdot|\tilde{X}_{\hat{\theta}^{n_{1}}}^{n}\right)  -\tilde{P}_{\theta}%
^{rn}\right\Vert _{1}:\sqrt{n_{2}}\left(  \hat{\theta}^{n_{1}}-\theta\right)
\in K^{\prime}\right] \nonumber\\
&  +\mathrm{E}^{\hat{\theta}^{n_{1}}}\left[  \left\Vert \mathrm{E}^{\tilde
{X}_{\hat{\theta}^{n_{1}}}^{n}}R_{\hat{\theta}^{n_{1}}}^{rn}\left(
\cdot|\tilde{X}_{\hat{\theta}^{n_{1}}}^{n}\right)  -\tilde{P}_{\theta}%
^{rn}\right\Vert _{1}:\sqrt{n_{2}}\left(  \hat{\theta}^{n_{1}}-\theta\right)
\not \in K^{\prime}\right]  \label{|LP-P|}%
\end{align}

The second term of (\ref{|LP-P|}) is evaluated as
\begin{align*}
&  \varlimsup_{n\rightarrow\infty}\mathrm{E}^{\hat{\theta}^{n_{1}}}\left\{
\left\Vert \mathrm{E}^{\tilde{X}_{\hat{\theta}^{n_{1}}}^{n}}R_{\hat{\theta
}^{n_{1}}}^{rn}\left(  \cdot|\tilde{X}_{\hat{\theta}^{n_{1}}}^{n}\right)
-\tilde{P}_{\theta}^{rn}\right\Vert _{1}:\sqrt{rn}\left(  \hat{\theta}^{n_{1}%
}-\theta\right)  \notin K^{\prime}\right\} \\
&  \leq2\varlimsup_{n\rightarrow\infty}P_{\theta}^{n_{1}}\left\{  \sqrt
{rn}\left(  \hat{\theta}^{n_{1}}-\theta\right)  \notin K^{\prime}\right\}  ,
\end{align*}
whose left hand side becomes arbitrarily small as $\,\,\,K^{\prime}\uparrow%
\mathbb{R}
^{m}$, due to (\ref{consistent}).

Next, we evaluate the first term of (\ref{|LP-P|}).
\begin{align}
&  \mathrm{E}^{\hat{\theta}^{n_{1}}}\left[  \left\Vert \mathrm{E}^{\tilde
{X}_{\hat{\theta}^{n_{1}}}^{n}}R_{\hat{\theta}^{n_{1}}}^{rn}\left(
\cdot|\tilde{X}_{\hat{\theta}^{n_{1}}}^{n}\right)  -\tilde{P}_{\theta}%
^{rn}\right\Vert _{1}:\sqrt{n_{2}}\left(  \hat{\theta}^{n_{1}}-\theta\right)
\in K^{\prime}\right]  \nonumber\\
&  \leq\mathrm{E}^{\hat{\theta}^{n_{1}}}\left[  \left\Vert \mathrm{E}%
^{\tilde{X}_{\hat{\theta}^{n_{1}}}^{n}}R_{\hat{\theta}^{n_{1}}}^{rn}\left(
\cdot|\tilde{X}_{\hat{\theta}^{n_{1}}}^{n}\right)  -Q_{\hat{\theta}^{n_{1}%
},\sqrt{rn}\left(  \theta-\hat{\theta}^{n_{1}}\right)  }^{rn}\right\Vert
_{1}:\sqrt{n_{2}}\left(  \hat{\theta}^{n_{1}}-\theta\right)  \in K^{\prime
}\right]  \nonumber\\
&  +\mathrm{E}^{\hat{\theta}^{n_{1}}}\left[  \left\Vert Q_{\hat{\theta}%
^{n_{1}},\sqrt{rn}\left(  \theta-\hat{\theta}^{n_{1}}\right)  }^{rn}-\tilde
{P}_{\theta}^{rn}\right\Vert _{1}:\sqrt{n_{2}}\left(  \hat{\theta}^{n_{1}%
}-\theta\right)  \in K^{\prime}\right]  .\label{||P-Q||}%
\end{align}
The first term of (\ref{||P-Q||}) is evaluated as follows. Let
\[
h:=\sqrt{n_{2}}\left(  \hat{\theta}^{n_{1}}-\theta\right)  =\sqrt
{\frac{1-\delta}{r}}\sqrt{rn}\left(  \hat{\theta}^{n_{1}}-\theta\right)  ,
\]
or equivalently,
\[
\hat{\theta}^{n_{1}}=\theta_{n_{2},h}=\theta_{rn,\tilde{h}},
\]
where $\tilde{h}:=\sqrt{r\left(  1-\delta\right)  ^{-1}}h$. Then,
\begin{align*}
&  \sup_{\tilde{A}}\left\vert \mathrm{E}^{\tilde{X}_{\hat{\theta}^{n_{1}}}%
^{n}}R_{\hat{\theta}^{n_{1}}}^{rn}\left(  \tilde{A}|\tilde{X}_{\hat{\theta
}^{n_{1}}}^{n}\right)  -Q_{\hat{\theta}^{n_{1}},\sqrt{rn}\left(  \theta
-\hat{\theta}^{n_{1}}\right)  }^{rn}\left(  \tilde{A}\right)  \right\vert \\
&  =\sup_{\tilde{A}}\left\vert \mathrm{E}^{\tilde{X}_{\hat{\theta}^{n_{1}}%
}^{n}}R_{\hat{\theta}^{n_{1}}}^{rn}\left(  \tilde{A}|\tilde{X}_{\hat{\theta
}^{n_{1}}}^{n}\right)  -Q_{\hat{\theta}^{n_{1}},-\tilde{h}}^{rn}\left(
\tilde{A}\right)  \right\vert \\
&  =\sup_{\tilde{A}}\left\vert \mathrm{E}^{\tilde{X}_{\hat{\theta}^{n_{1}}%
}^{n}}R_{\hat{\theta}^{n_{1}}}^{rn}\left(  \tilde{A}|\tilde{X}_{\hat{\theta
}^{n_{1}}}^{n}\right)  -\mathrm{E}^{\lambda_{-\tilde{h}}^{rn}}R_{\hat{\theta
}^{n_{1}}}^{rn}\left(  \tilde{A}|\lambda_{-\tilde{h}}^{rn}\right)  \right\vert
\\
&  \leq\left\Vert \mathcal{L}\left(  \tilde{X}_{\hat{\theta}^{n_{1}}}%
^{n}\right)  -\mathrm{N}\left(  -\tilde{h},J_{\theta}^{-1}\right)  \right\Vert
_{1}\\
&  \leq\left\Vert \mathcal{L}\left(  \tilde{X}_{\hat{\theta}^{n_{1}}}%
^{n}\right)  -\Lambda_{\mathrm{amp}}^{\sqrt{r/\left(  1-\delta\right)  }%
}\left(  \mathrm{N}\left(  -h,J_{\theta}^{-1}\right)  \right)  \right\Vert
_{1}+\left\Vert \Lambda_{\mathrm{amp}}^{\sqrt{r/\left(  1-\delta\right)  }%
}\left(  \mathrm{N}\left(  -h,J_{\theta}^{-1}\right)  \right)  -\mathrm{N}%
\left(  -\tilde{h},J_{\theta}^{-1}\right)  \right\Vert _{1}\\
&  =\left\Vert \Lambda_{\mathrm{amp}}^{\sqrt{r/\left(  1-\delta\right)  }%
}\left(  \mathcal{L}\left(  L_{\hat{\theta}^{n_{1}}}^{n_{2},\varepsilon
}|P_{\theta}^{n_{2}}\right)  \right)  -\Lambda_{\mathrm{amp}}^{\sqrt{r/\left(
1-\delta\right)  }}\left(  \mathrm{N}\left(  -h,J_{\theta}^{-1}\right)
\right)  \right\Vert _{1}\\
&  +\left\Vert \Lambda_{\mathrm{amp}}^{\sqrt{r/\left(  1-\delta\right)  }%
}\left(  \mathrm{N}\left(  -h,J_{\theta}^{-1}\right)  \right)  -\mathrm{N}%
\left(  -\sqrt{r/\left(  1-\delta\right)  }h,J_{\theta}^{-1}\right)
\right\Vert _{1}\\
&  \leq\left\Vert \mathcal{L}\left(  L_{\hat{\theta}^{n_{1}}}^{n_{2}%
,\varepsilon}|P_{\theta}^{n_{2}}\right)  -\mathrm{N}\left(  -h,J_{\theta}%
^{-1}\right)  \right\Vert _{1}+\sup_{h^{\prime}\in%
\mathbb{R}
^{m}}\left\Vert \Lambda_{\mathrm{amp}}^{\sqrt{r/\left(  1-\delta\right)  }%
}\left(  \mathrm{N}\left(  h^{\prime},J_{\theta}^{-1}\right)  \right)
-\mathrm{N}\left(  \sqrt{r/\left(  1-\delta\right)  }h^{\prime},J_{\theta
}^{-1}\right)  \right\Vert _{1}.
\end{align*}
Therefore, due to Lemma\thinspace\ref{lem:||p-N||},
\begin{align}
&  \lim_{\varepsilon\downarrow0}\varlimsup_{n\rightarrow\infty}\mathrm{E}%
^{\hat{\theta}^{n_{1}}}\left[  \sup_{\tilde{A}}\left\vert \mathrm{E}%
^{\tilde{X}_{\hat{\theta}^{n_{1}}}^{n}}R_{\hat{\theta}^{n_{1}}}^{rn}\left(
\tilde{A}|\tilde{X}_{\hat{\theta}^{n_{1}}}^{n}\right)  -Q_{\hat{\theta}%
^{n_{1}},\sqrt{rn}\left(  \theta-\hat{\theta}^{n_{1}}\right)  }^{rn}\left(
\tilde{A}\right)  \right\vert ;\sqrt{n_{2}}\left(  \hat{\theta}^{n_{1}}%
-\theta\right)  \in K^{\prime}\right]  \nonumber\\
&  \leq\lim_{\varepsilon\downarrow0}\varlimsup_{n\rightarrow\infty}\sup_{h\in
K^{\prime}}\left\Vert \mathcal{L}\left(  L_{\theta_{n_{2},h}}^{n_{2}%
,\varepsilon}|P_{\theta}^{n_{2}}\right)  -\mathrm{N}\left(  -h,J_{\theta}%
^{-1}\right)  \right\Vert _{1}\nonumber\\
&  +\sup_{h\in K^{\prime}}\sup_{h^{\prime}\in%
\mathbb{R}
^{m}}\left\Vert \Lambda_{\mathrm{amp}}^{\sqrt{r/\left(  1-\delta\right)  }%
}\left(  \mathrm{N}\left(  h^{\prime},J_{\theta}^{-1}\right)  \right)
-\mathrm{N}\left(  \sqrt{r/\left(  1-\delta\right)  }h^{\prime},J_{\theta
}^{-1}\right)  \right\Vert _{1}\nonumber\\
&  =\sup_{h\in%
\mathbb{R}
^{m}}\left\Vert \Lambda_{\mathrm{amp}}^{\sqrt{r/\left(  1-\delta\right)  }%
}\left(  \mathrm{N}\left(  h,J_{\theta}^{-1}\right)  \right)  -\mathrm{N}%
\left(  \sqrt{r/\left(  1-\delta\right)  }h,J_{\theta}^{-1}\right)
\right\Vert _{1}\nonumber\\
&  =D_{r/\left(  1-\delta\right)  ,J_{\theta}^{-1}}.\label{||P-Q||-1}%
\end{align}

The second term of (\ref{||P-Q||}) is evaluated as follows. Let $K^{\prime}$
be an arbitrary compact set in $%
\mathbb{R}
^{m}$ and $K_{\theta}$ be an arbitrary compact set in $\Theta$ with $\theta\in
K$. Then, due to (\ref{discrete}) and Theorem\thinspace\ref{th:uniform-lan},%
\begin{align}
&  \mathrm{E}^{\hat{\theta}^{n_{1}}}\left\{  \left\Vert Q_{\hat{\theta}%
^{n_{1}},\sqrt{rn}\left(  \theta-\hat{\theta}^{n_{1}}\right)  }^{rn}-\tilde
{P}_{\theta}^{rn}\right\Vert _{1}:\sqrt{rn}\left(  \hat{\theta}^{n_{1}}%
-\theta\right)  \in K^{\prime}\right\} \nonumber\\
&  \leq\sup_{\theta^{\prime}\in K_{\theta}}\sup_{h\in K^{\prime}}\left\Vert
Q_{\theta^{\prime},\,-h}^{rn}-\tilde{P}_{\theta^{\prime}-h/\sqrt{rn}}%
^{rn}\right\Vert _{1}\,\rightarrow0,\,\,n\rightarrow\infty. \label{||P-Q||-2}%
\end{align}
Therefore, combining (\ref{||P-Q||-1}) and (\ref{||P-Q||-2}), we have%
\[
\lim_{\varepsilon\downarrow0}\varlimsup_{n\rightarrow\infty}\mathrm{E}%
^{\hat{\theta}^{n_{1}}}\left[  \left\Vert \mathrm{E}^{\tilde{X}_{\hat{\theta
}^{n_{1}}}^{n}}R_{\hat{\theta}^{n_{1}}}^{rn}\left(  \cdot|\tilde{X}%
_{\hat{\theta}^{n_{1}}}^{n}\right)  -\tilde{P}_{\theta}^{rn}\right\Vert
_{1}:\sqrt{n_{2}}\left(  \hat{\theta}^{n_{1}}-\theta\right)  \in K^{\prime
}\right]  \leq D_{r/\left(  1-\delta\right)  ,J_{\theta}^{-1}}%
\]

After all, we have%
\[
\lim_{\varepsilon\downarrow0}\lim_{n\rightarrow\infty}\left\Vert
\Lambda_{\delta,\varepsilon}^{n,r}\left(  P_{\theta}^{n}\right)  -P_{\theta
}^{rn}\right\Vert \leq D_{r/\left(  1-\delta\right)  ,J_{\theta}^{-1}}.
\]
Since the map $x\rightarrow D_{x,\Sigma}$ is continuous about $x$ (see
(\ref{gauss-opt-loss}) ), letting $\delta\rightarrow0$, we have the asserted result.
\end{proof}

\subsection{Local minimax property}

Based on $\left(  n,rn\right)  $-cloner $\Lambda^{n,r}$ for $\left\{
P_{\theta}^{n}\right\}  _{\theta\in\Theta}$, we compose an amplifier
$\Lambda_{\mathrm{amp}}^{\theta,\sqrt{r},n,\varepsilon}$ for the Gaussian
shift $\left\{  \mathrm{N}\left(  h,J_{\theta}^{-1}\right)  \right\}  _{h\in%
\mathbb{R}
^{m}}$ as follows.

\begin{description}
\item[(I)] Given $X_{h}$ with $\mathcal{L}\left(  X_{h}\right)  =\mathrm{N}%
\left(  h,J_{\theta}^{-1}\right)  $, compose
\[
Q_{\theta,h}^{\prime n}\left(  A\right)  :=Q_{\theta,h}^{n}\left(
A\times\Omega^{\prime}\right)  =\mathrm{E}^{X_{h}}R_{\theta}^{n}\left(
A\times\Omega^{\prime}|X_{h}\right)  .
\]

\item[(II)] Apply $\Lambda^{n,r}$ to $Q_{\theta,h}^{\prime n}$. \ Denote by
$W_{h}^{n,r}$ the random variable with $\mathcal{L}\left(  W_{h}^{n,r}\right)
=$ $\Lambda^{n,r}\left(  Q_{\theta,h}^{\prime n}\right)  $.

\item[(III)] The output random variable is $\tilde{X}_{h}^{r,n,\varepsilon
}:=L_{\theta}^{rn,\varepsilon}\left(  W_{h}^{n,r}\right)  $, where
$\ L_{\theta}^{n,\varepsilon}\left(  \omega^{n}\right)  :=J_{\theta}^{-1}%
\ell_{\theta}^{n}\left(  \omega^{n}\right)  +Y_{\varepsilon}$ and
$\mathcal{L}\left(  Y_{\varepsilon}\right)  =\mathrm{N}\left(  0,\varepsilon
\right)  $.
\end{description}

\begin{lemma}
\label{lem:||p-N||-1}Suppose $\theta\rightarrow p_{\theta}$ is differentiable
in quadratic mean, and (\ref{EZ<infty}) and (\ref{J-nondeg}) hold. Then, for
any compact set $K^{\prime}\in%
\mathbb{R}
^{m}$, we have%
\[
\lim_{n\rightarrow\infty}\sup_{h\in K^{\prime}}\left\Vert \mathcal{L}\left(
L_{\theta}^{n,\varepsilon}|P_{\theta_{n,h}}^{n}\right)  -\mathrm{N}\left(
h,J_{\theta}^{-1}\right)  \right\Vert _{1}=0.
\]

\end{lemma}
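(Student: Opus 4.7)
The plan is to mirror the proof of Lemma \ref{lem:||p-N||} with the roles of the sampling parameter and the log-likelihood base point interchanged, and with every appeal to Lemma \ref{lem:EZ-EZl} replaced by Lemma \ref{lem:EZ-EZl-1}. For any measurable $f$ with $\sup|f|\le 1$, start from the dual change of measure
\[
\mathrm{E}^{Y_{\varepsilon}}\mathrm{E}_{\theta_{n,h}}^{n}[f(L_{\theta}^{n,\varepsilon})]=\mathrm{E}^{Y_{\varepsilon}}\mathrm{E}_{\theta}^{n}[f(L_{\theta}^{n,\varepsilon})\,Z_{\theta,h}^{n}].
\]
Lemma \ref{lem:EZ-EZl-1}, whose hypotheses coincide exactly with those of the present lemma, permits replacing $Z_{\theta,h}^{n}$ by $Z_{\theta,h}(\ell_{\theta}^{n})$ at an error which is $o(1)$ uniformly for $h\in K^{\prime}$. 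Since $\ell_{\theta}^{n}=J_{\theta}(L_{\theta}^{n,\varepsilon}-Y_{\varepsilon})$, the integrand then takes the form $f(L_{\theta}^{n,\varepsilon})\exp\{h^{T}J_{\theta}L_{\theta}^{n,\varepsilon}-h^{T}J_{\theta}Y_{\varepsilon}-\tfrac{1}{2}h^{T}J_{\theta}h\}$.

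Next I would decompose the expectation into three pieces $E_{1},E_{2},E_{3}$ by inserting the indicators $I_{\{\|L_{\theta}^{n,\varepsilon}\|\le a\}}$ and $I_{\{\|Y_{\varepsilon}\|\le\varepsilon^{1/4}\}}$, in exact parallel with the proof of Lemma \ref{lem:||p-N||}. On $E_{1}$ the factor $e^{-h^{T}J_{\theta}Y_{\varepsilon}}$ restricted to $\{\|Y_{\varepsilon}\|\le\varepsilon^{1/4}\}$ tends to $1$ uniformly in $h\in K^{\prime}$ as $\varepsilon\downarrow 0$, leaving
\[
\mathrm{E}_{\theta}^{n}\!\left[I_{\{\|L_{\theta}^{n,\varepsilon}\|\le a\}}f(L_{\theta}^{n,\varepsilon})\exp\{h^{T}J_{\theta}L_{\theta}^{n,\varepsilon}-\tfrac{1}{2}h^{T}J_{\theta}h\}\right]
\]
to evaluate. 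This is handled by the local limit statement $\mathcal{L}(L_{\theta}^{n,\varepsilon}\mid P_{\theta}^{n})\to\mathrm{N}(0,J_{\theta}^{-1}+\varepsilon\mathbf{1})$ in total variation, followed by $\varepsilon\downarrow 0$; the Gaussian identity $\mathrm{E}[f(X_{0})e^{h^{T}J_{\theta}X_{0}-\tfrac{1}{2}h^{T}J_{\theta}h}]=\mathrm{E}[f(X_{h})]$, with $X_{0}\sim\mathrm{N}(0,J_{\theta}^{-1})$ and $X_{h}\sim\mathrm{N}(h,J_{\theta}^{-1})$, then identifies the limit as $\mathrm{E}[f(X_{h})]$ after $a\to\infty$. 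The tail pieces $E_{2},E_{3}$ are controlled exactly as in Lemma \ref{lem:||p-N||}: $E_{2}$ by $\mathrm{E}_{\theta}^{n}e^{h^{T}\ell_{\theta}^{n}}\le e^{h^{T}J_{\theta}h}$ from (\ref{EZ<finite}) combined with $\Pr\{\|Y_{\varepsilon}\|>\varepsilon^{1/4}\}\to 0$, and $E_{3}$ by Cauchy--Schwarz and Chebyshev, with (\ref{J-nondeg}) supplying the eigenvalue lower bound needed to convert $\|L_{\theta}^{n,\varepsilon}\|>a$ into a tail statement about $\|\ell_{\theta}^{n}\|$.

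The one delicate point, and the main obstacle, is justifying the local limit theorem for $L_{\theta}^{n,\varepsilon}$ under $P_{\theta}^{n}$ without the continuous-differentiability hypothesis used in Lemma \ref{lem:||p-N||}. What is required is precisely the specialization of Lemma \ref{lem:local-limit} to the constant sequence $\theta^{n}\equiv\theta$, $h_{n}\equiv 0$. Inspecting that proof, the characteristic-function plus Lebesgue-dominated-convergence calculation and the remainder estimate (\ref{rem-upper}) invoke continuity of $\theta\mapsto\ell_{\theta}$ only to accommodate a moving base point; with the base point frozen at $\theta$, differentiability in quadratic mean at $\theta$ (to kill the first-order Taylor term via $\mathrm{E}_{\theta}\ell_{\theta}=0$) together with (\ref{EZ<infty}) is enough. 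Once that specialization is recorded, the remainder of the proof is a mechanical transcription of Lemma \ref{lem:||p-N||} with $\sup_{\theta'\in K_{\theta}}$ erased throughout and (\ref{J-cont}) never invoked.
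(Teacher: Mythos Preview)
Your proposal is correct and follows the paper's own proof essentially line for line: the same change of measure, the same appeal to Lemma~\ref{lem:EZ-EZl-1}, the same $E_1,E_2,E_3$ decomposition, and the same tail estimates via (\ref{EZ<finite}), Cauchy--Schwarz and Chebyshev. You are in fact more careful than the paper on one point: the paper simply cites Lemma~\ref{lem:local-limit} for the step $\mathcal{L}(L_{\theta}^{n,\varepsilon}\mid P_{\theta}^{n})\to\mathrm{N}(0,J_{\theta}^{-1})$ without remarking that its hypotheses are nominally stronger, whereas you correctly observe that only the fixed-base-point case $\theta^{n}\equiv\theta$ is needed and that this case does not require continuous differentiability.
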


\begin{proof}
Observe, for any measurable function $f$ with $\sup_{x\in%
\mathbb{R}
^{m}}\left\vert f\left(  x\right)  \right\vert \leq1$,%
\[
\mathrm{E}^{Y_{\varepsilon}}\mathrm{E}_{\theta_{n,h}}^{n}\left[  f\left(
L_{\theta}^{n,\varepsilon}\right)  \right]  =\mathrm{E}^{Y_{\varepsilon}%
}\mathrm{E}_{\theta}^{n}\left[  f\left(  L_{\theta}^{n,\varepsilon}\right)
Z_{\theta,h}^{n}\,\right]  ,
\]
Observe also, due to Lemma\thinspace\ref{lem:EZ-EZl-1}, with $K^{\prime}$
being an arbitrary compact subset of $%
\mathbb{R}
^{m}$,%
\begin{align}
&  \sup_{h\in K^{\prime}}\left\vert \mathrm{E}^{Y_{\varepsilon}}%
\mathrm{E}_{\theta}^{n}\left[  f\left(  L_{\theta}^{n,\varepsilon}\right)
\left(  Z_{\theta,h}^{n}\,-Z_{\theta,h}\left(  \ell_{\theta}^{n}\right)
\,\right)  \right]  \right\vert \label{EfZZ}\\
&  \leq\sup_{h\in K^{\prime}}\mathrm{E}_{\theta_{n,h}}^{n}\left[  \left\vert
Z_{\theta,h}^{n}\,-Z_{\theta,h}\left(  \ell_{\theta}^{n}\right)  \right\vert
\,\right] \nonumber\\
&  \rightarrow0,\,\,n\rightarrow\infty.\nonumber
\end{align}
Therefore, we have to evaluate
\begin{align*}
&  \lim_{\varepsilon\downarrow0}\lim_{n\rightarrow\infty}\sup_{h\in K^{\prime
}}\mathrm{E}^{Y_{\varepsilon}}\mathrm{E}_{\theta}^{n}\left[  f\left(
L_{\theta}^{n,\varepsilon}\right)  Z_{\theta,h}\,\left(  \ell_{\theta}%
^{n}\right)  \right] \\
&  =\lim_{\varepsilon\downarrow0}\lim_{n\rightarrow\infty}\sup_{h\in
K^{\prime}}\mathrm{E}^{Y_{\varepsilon}}\mathrm{E}_{\theta}^{n}\left[  f\left(
L_{\theta}^{n,\varepsilon}\right)  Z_{\theta,h}\,\left(  J_{\theta}L_{\theta
}^{n,\varepsilon}\right)  e^{-h^{T}J_{\theta}Y_{\varepsilon}}\right] \\
&  =\lim_{\varepsilon\downarrow0}\lim_{n\rightarrow\infty}\sup_{h\in
K^{\prime}}\left(  E_{1}+E_{2}+E_{3}\right)  ,
\end{align*}
where
\begin{align*}
E_{1}  &  :=\mathrm{E}^{Y_{\varepsilon}}\mathrm{E}_{\theta}^{n}\left[
I_{\left\{  \left\Vert L_{\theta}^{n,\varepsilon}\right\Vert \leq
a\,,\,\left\Vert Y_{\varepsilon}\right\Vert \leq\varepsilon^{1/4}\right\}
}f\left(  L_{\theta}^{n,\varepsilon}\right)  Z_{\theta,h}\,\left(  J_{\theta
}L_{\theta}^{n,\varepsilon}\right)  e^{-h^{T}J_{\theta}Y_{\varepsilon}%
}\right]  ,\\
E_{2}  &  :=\mathrm{E}^{Y_{\varepsilon}}\mathrm{E}_{\theta}^{n}\left[
I_{\left\{  \left\Vert Y_{\varepsilon}\right\Vert >\varepsilon^{1/4}\right\}
}f\left(  L_{\theta}^{n,\varepsilon}\right)  Z_{\theta,h}\,\left(  J_{\theta
}L_{\theta}^{n,\varepsilon}\right)  e^{-h^{T}J_{\theta}Y_{\varepsilon}%
}\right]  ,\\
E_{3}  &  :=\mathrm{E}^{Y_{\varepsilon}}\mathrm{E}_{\theta}^{n}\left[
I_{\left\{  \left\Vert L_{\theta}^{n,\varepsilon}\right\Vert >a\,,\,\left\Vert
Y_{\varepsilon}\right\Vert \leq\varepsilon^{1/4}\right\}  }f\left(  L_{\theta
}^{n,\varepsilon}\right)  Z_{\theta,h}\,\left(  J_{\theta}L_{\theta
}^{n,\varepsilon}\right)  e^{-h^{T}J_{\theta}Y_{\varepsilon}}\right]  .
\end{align*}

$E_{1}$ is evaluated as follows. Observe%
\[
E_{1}=\mathrm{E}^{Y_{\varepsilon}}\mathrm{E}_{\theta_{n,h}}^{n}\left[
I_{\left\{  \left\Vert L_{\theta}^{n,\varepsilon}\right\Vert \leq a\right\}
}f\left(  L_{\theta}^{n,\varepsilon}\right)  Z_{\theta,h}\,\left(  J_{\theta
}L_{\theta}^{n,\varepsilon}\right)  \mathrm{E}\left[  \left.  I_{\left\{
\left\Vert Y_{\varepsilon}\right\Vert \leq\varepsilon^{1/4}\right\}
}e^{-h^{T}J_{\theta}Y_{\varepsilon}}\right\vert L_{\theta}^{n,\varepsilon
}\right]  \right]  ,
\]
whose second factor can be evaluated as
\begin{equation}
\lim_{\varepsilon\downarrow0}\lim_{n\rightarrow\infty}\sup_{h\in K^{\prime}%
}\left\vert \mathrm{E}\left[  \left.  I_{\left\{  \left\Vert Y_{\varepsilon
}\right\Vert \leq\varepsilon^{1/4}\right\}  }e^{-h^{T}J_{\theta}Y}\right\vert
L_{\theta_{n,h}}^{n,\varepsilon}\right]  -1\right\vert \leq e^{\left\Vert
h\right\Vert \left\Vert J_{\theta}\right\Vert \varepsilon^{1/4}}
\label{EfZZ-1-1}%
\end{equation}

To evaluate the first factor
\[
E_{1,1}:=\mathrm{E}^{Y_{\varepsilon}}\mathrm{E}_{\theta}^{n}\left[  f\left(
L_{\theta}^{n,\varepsilon}\right)  Z_{\theta,h}\,\left(  J_{\theta}L_{\theta
}^{n,\varepsilon}\right)  :\left\Vert L_{\theta}^{n,\varepsilon}\right\Vert
\leq a\right]  ,
\]
observe, with $\mathcal{L}\left(  X_{h}\right)  =\mathrm{N}\left(
h,J_{\theta}^{-1}\right)  $,%
\begin{align}
&  \lim_{\varepsilon\downarrow0}\lim_{n\rightarrow\infty}\sup_{h\in K^{\prime
}}\left\vert E_{1,1}-\mathrm{E}\left[  f\left(  X_{-h}\right)  :\left\Vert
X_{-h}\right\Vert \leq a\right]  \right\vert \nonumber\\
&  =\lim_{\varepsilon\downarrow0}\lim_{n\rightarrow\infty}\sup_{h\in
K^{\prime}}\left\vert
\begin{array}
[c]{c}%
\mathrm{E}^{Y_{\varepsilon}}\mathrm{E}_{\theta}^{n}\left[  f\left(  L_{\theta
}^{n,\varepsilon}\right)  Z_{\theta,h}\,\left(  J_{\theta}L_{\theta
}^{n,\varepsilon}\right)  :\left\Vert L_{\theta}^{n,\varepsilon}\right\Vert
\leq a\right] \\
-\mathrm{E}^{X_{0}}\left[  f\left(  X_{0}\right)  Z_{\theta,h}\,\left(
J_{\theta}X_{0}\right)  :\left\Vert X_{0}\right\Vert \leq a\right]
\end{array}
\right\vert \nonumber\\
&  \leq\sup_{h\in K^{\prime}}e^{\left\Vert h\right\Vert \left\Vert J_{\theta
}\right\Vert a-\frac{1}{2}h^{T}J_{\theta}h}\lim_{\varepsilon\downarrow0}%
\lim_{n\rightarrow\infty}\left\Vert \mathcal{L}\left(  L_{\theta
}^{n,\varepsilon}|P_{\theta}^{n}\right)  -\mathrm{N}\left(  0,J_{\theta}%
^{-1}\right)  \right\Vert _{1}\nonumber\\
&  =0, \label{EfZZ-1-2}%
\end{align}
where the last identity is due to Lemma\thinspace\ref{lem:local-limit}.

Therefore, by (\ref{EfZZ-1-1}) and (\ref{EfZZ-1-2}),
\begin{equation}
\lim_{\varepsilon\downarrow0}\lim_{n\rightarrow\infty}\sup_{h\in K^{\prime}%
}\left\vert E_{1}-\mathrm{E}\left[  f\left(  X_{-h}\right)  :\left\Vert
X_{-h}\right\Vert \leq a\right]  \right\vert =0. \label{EfZZ-1}%
\end{equation}

On the other hand, by (\ref{EZ<finite}),%
\begin{align}
&  \lim_{\varepsilon\downarrow0}\lim_{n\rightarrow\infty}\sup_{h\in K^{\prime
}}E_{2}\nonumber\\
&  \leq\lim_{\varepsilon\downarrow0}\lim_{n\rightarrow\infty}\sup_{h\in
K^{\prime}}\mathrm{E}^{Y_{\varepsilon}}\mathrm{E}_{\theta}^{n}\left[
Z_{\theta,h}\,\left(  \ell_{\theta}^{n}\right)  :\left\Vert Y_{\varepsilon
}\right\Vert >\varepsilon^{1/4}\right] \nonumber\\
&  =\lim_{\varepsilon\downarrow0}\Pr\left\{  \left\Vert Y_{\varepsilon
}\right\Vert >\varepsilon^{1/4}\right\}  \cdot\sup_{h\in K^{\prime}}%
e^{h^{T}J_{\theta}h}e^{-\frac{1}{2}h^{T}J_{\theta}h}\nonumber\\
&  =0, \label{EfZZ-2}%
\end{align}
and
\begin{align}
&  \lim_{\varepsilon\downarrow0}\lim_{n\rightarrow\infty}\sup_{h\in K^{\prime
}}E_{3}\nonumber\\
&  \leq\lim_{\varepsilon\downarrow0}\lim_{n\rightarrow\infty}\sup_{h\in
K^{\prime}}\mathrm{E}^{Y_{\varepsilon}}\mathrm{E}_{\theta}\left[  Z_{\theta
,h}\,\left(  \ell_{\theta}^{n}\right)  :\left\Vert L_{\theta}^{n,\varepsilon
}\right\Vert >a\,,\,\left\Vert Y_{\varepsilon}\right\Vert \leq\varepsilon
^{1/2}\right] \nonumber\\
&  \leq\lim_{\varepsilon\downarrow0}\lim_{n\rightarrow\infty}\sup_{h\in
K^{\prime}}\mathrm{E}_{\theta}^{n}\left[  I_{\left\{  \left\Vert \ell_{\theta
}^{n}\right\Vert >\alpha_{\theta}\left(  a-\varepsilon^{1/4}\right)  \right\}
}Z_{\theta,h}\,\left(  \ell_{\theta}^{n}\right)  \right] \nonumber\\
&  \leq\lim_{\varepsilon\downarrow0}\lim_{n\rightarrow\infty}\sup_{h\in
K^{\prime}}\sqrt{P_{\theta}^{n}\left\{  \left\Vert \ell_{\theta}%
^{n}\right\Vert >\alpha_{\theta}\left(  a-\varepsilon^{1/4}\right)  \right\}
}\sqrt{\mathrm{E}_{\theta}^{n}\left[  Z_{\theta,h}\,\left(  \ell_{\theta}%
^{n}\right)  \right]  ^{2}}\nonumber\\
&  \leq\lim_{\varepsilon\downarrow0}\lim_{n\rightarrow\infty}\sqrt{\sup_{h\in
K^{\prime}}P_{\theta}^{n}\left\{  \left\Vert \ell_{\theta}^{n}\right\Vert
>\alpha_{\theta}\left(  a-\varepsilon^{1/4}\right)  \right\}  \sup_{h\in
K^{\prime}}\left(  \mathrm{E}_{\theta^{\prime}}^{n}e^{-\frac{2h^{T}}{\sqrt{n}%
}\ell_{\theta}}\right)  ^{n}e^{-h^{T}J_{\theta}h}}\nonumber\\
&  \leq\lim_{\varepsilon\downarrow0}\lim_{n\rightarrow\infty}\sqrt{\sup_{h\in
K^{\prime}}\frac{1}{\left\{  \alpha_{\theta}\left(  a-\varepsilon
^{1/4}\right)  \right\}  ^{2}}\mathrm{E}_{\theta}^{n}\left\Vert \ell_{\theta
}^{n}\right\Vert ^{2}\sup_{h\in K^{\prime}}e^{h^{T}J_{\theta}h}}\nonumber\\
&  =\sqrt{\sup_{h\in K^{\prime}}\frac{\mathrm{tr}\,J_{\theta}}{\alpha_{\theta
}^{2}a^{2}}\sup_{h\in K^{\prime}}e^{h^{T}J_{\theta}h}}. \label{EfZZ-3}%
\end{align}

After all, by (\ref{EfZZ}), (\ref{EfZZ-1}), (\ref{EfZ-2}), and (\ref{EfZ-3}),
we have%
\begin{align*}
&  \lim_{\varepsilon\downarrow0}\lim_{n\rightarrow\infty}\sup_{h\in K^{\prime
}}\left\vert \mathrm{E}^{Y_{\varepsilon}}\mathrm{E}_{\theta_{h_{n}}}%
^{n}\left[  f\left(  L_{\theta}^{n,\varepsilon}\right)  \right]
-\mathrm{E}\left[  f\left(  X_{h}\right)  \right]  \right\vert \\
&  =\lim_{\varepsilon\downarrow0}\lim_{n\rightarrow\infty}\sup_{h\in
K^{\prime}}\left\vert \mathrm{E}^{Y_{\varepsilon}}\mathrm{E}_{\theta}%
^{n}\left[  f\left(  L_{\theta}^{n,\varepsilon}\right)  Z_{\theta,h}\,\left(
\ell_{\theta}^{n}\right)  \right]  -\mathrm{E}\left[  f\left(  X_{h}\right)
\right]  \right\vert \\
&  =\sqrt{\sup_{h\in K^{\prime}}\frac{\mathrm{tr}\,J_{\theta}}{\alpha_{\theta
}^{2}a^{2}}\sup_{h\in K^{\prime}}e^{h^{T}J_{\theta}h}}+\sup_{h\in K^{\prime}%
}\Pr\left\{  \left\Vert X_{h}\right\Vert >a\right\}  .
\end{align*}
Since $a$ is arbitrary, letting $a\rightarrow\infty$, we have%
\[
\lim_{\varepsilon\downarrow0}\lim_{n\rightarrow\infty}\sup_{h\in K^{\prime}%
}\left\Vert \mathcal{L}\left(  L_{\theta}^{n,\varepsilon}|P_{\theta_{n,h}}%
^{n}\right)  -\mathrm{N}\left(  h,J_{\theta}^{-1}\right)  \right\Vert _{1}=0.
\]

\end{proof}

\begin{theorem}
Suppose  $\theta\rightarrow p_{\theta}$ is differentiable in quadratic mean,
and (\ref{EZ<infty}) and (\ref{J-nondeg}) hold. Then, for any Markov map
$\Lambda^{n,r}$ and for any $\theta\in\Theta$, we have (\ref{l-min-max}).
\end{theorem}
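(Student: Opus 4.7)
The plan is to reduce the problem to the optimal Gaussian-shift amplifier by exploiting the construction $\Lambda_{\mathrm{amp}}^{\theta,\sqrt{r},n,\varepsilon}$ introduced just above the theorem. Given any Markov cloner $\Lambda^{n,r}$, the amplifier built from it takes $X_{h}\sim\mathrm{N}(h,J_{\theta}^{-1})$, converts it to $Q_{\theta,h}^{\prime n}$ via the kernel $R_{\theta}^{n}(\cdot|\cdot)$ supplied by Theorem \ref{th:lan}, applies $\Lambda^{n,r}$ to obtain a distribution on $\Omega^{rn}$, and finally pushes that through the Markov map $\Psi^{rn}_{L,\varepsilon}$ induced by $\omega^{rn}\mapsto L_{\theta}^{rn,\varepsilon}(\omega^{rn})$. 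Thus $\mathcal{L}(\tilde X_{h}^{r,n,\varepsilon})=\Psi^{rn}_{L,\varepsilon}\circ\Lambda^{n,r}(Q_{\theta,h}^{\prime n})$, and since this is itself a Markov image of $\mathrm{N}(h,J_{\theta}^{-1})$, its worst-case loss relative to the target $\mathrm{N}(\sqrt{r}h,J_{\theta}^{-1})$ over $\|h\|\le a$ is bounded below by $D_{r,J_{\theta}^{-1},a}$.

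The core step is to bound this same amplifier loss above in terms of the cloner loss via the triangle inequality combined with the $L^{1}$-contractivity of Markov maps:
\begin{align*}
\bigl\|\mathcal{L}(\tilde X_{h}^{r,n,\varepsilon})-\mathrm{N}(\sqrt{r}h,J_{\theta}^{-1})\bigr\|_{1}
&\le \bigl\|Q_{\theta,h}^{\prime n}-P_{\theta+hn^{-1/2}}^{n}\bigr\|_{1}\\
&\quad+\bigl\|\Lambda^{n,r}(P_{\theta+hn^{-1/2}}^{n})-P_{\theta+hn^{-1/2}}^{rn}\bigr\|_{1}\\
&\quad+\bigl\|\mathcal{L}(L_{\theta}^{rn,\varepsilon}\mid P_{\theta+hn^{-1/2}}^{rn})-\mathrm{N}(\sqrt{r}h,J_{\theta}^{-1})\bigr\|_{1}.
\end{align*}
The first term vanishes uniformly in $h\in\{\|h\|\le a\}$ as $n\to\infty$ by Theorem \ref{th:lan} (after marginalizing over $\Omega'$, which is itself a Markov map and so only contracts the bound). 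The third term vanishes, after the standard reparametrization $\theta+hn^{-1/2}=\theta+(\sqrt{r}h)(rn)^{-1/2}$, by Lemma \ref{lem:||p-N||-1} applied with $n\mapsto rn$ and $h\mapsto\sqrt{r}h$, uniformly in $\|h\|\le a$, provided one takes $n\to\infty$ first and then $\varepsilon\downarrow 0$. Writing $\alpha_{n}(\varepsilon)$ for the sum of the suprema of the first and third terms over $\|h\|\le a$, note crucially that $\alpha_{n}(\varepsilon)$ does not depend on $\Lambda^{n,r}$, and $\limsup_{\varepsilon\downarrow 0}\limsup_{n\to\infty}\alpha_{n}(\varepsilon)=0$.

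Combining the lower bound $D_{r,J_{\theta}^{-1},a}\le\sup_{\|h\|\le a}\bigl\|\mathcal{L}(\tilde X_{h}^{r,n,\varepsilon})-\mathrm{N}(\sqrt{r}h,J_{\theta}^{-1})\bigr\|_{1}$ with the above triangle bound gives, uniformly in $\Lambda^{n,r}$,
\[
D_{r,J_{\theta}^{-1},a}\;\le\;\sup_{\|h\|\le a}\bigl\|\Lambda^{n,r}(P_{\theta+hn^{-1/2}}^{n})-P_{\theta+hn^{-1/2}}^{rn}\bigr\|_{1}+\alpha_{n}(\varepsilon).
\]
Taking $\inf_{\Lambda^{n,r}}$, then $\varliminf_{n\to\infty}$, then $\varepsilon\downarrow 0$, absorbs the error term; taking $\sup_{a\ge 0}$ and invoking Lemma 3 (in the form $\sup_{a}D_{r,J_{\theta}^{-1},a}=D_{r,J_{\theta}^{-1}}$) finally produces (\ref{l-min-max}) after the change of variables $\theta'=\theta+hn^{-1/2}$.

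I expect the main obstacle to be tracking uniformity over $\|h\|\le a$ in the first and third terms, especially ensuring that Lemma \ref{lem:||p-N||-1} can be applied with the reparametrized parameter $\sqrt{r}h$ ranging in a compact set, and that the first-term bound survives the marginalization $Q_{\theta,h}^{n}\mapsto Q_{\theta,h}^{\prime n}$ and subsequent Markov contractions. The remaining steps are essentially a disciplined interchange of limits, using only that the relevant error terms are independent of the cloner $\Lambda^{n,r}$.
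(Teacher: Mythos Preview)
Your proposal is correct and follows essentially the same approach as the paper: both build the amplifier $\Lambda_{\mathrm{amp}}^{\theta,\sqrt{r},n,\varepsilon}$ from an arbitrary cloner $\Lambda^{n,r}$, decompose its loss by the same three-term triangle inequality (controlled respectively by Theorem~\ref{th:lan}, the cloner loss, and Lemma~\ref{lem:||p-N||-1} via the reparametrization $\theta_{n,h}=\theta_{rn,\sqrt{r}h}$), and then pass to the limit in $n$, $\varepsilon$, and $a$ using Lemma~1. Your write-up is in fact slightly more careful than the paper's in isolating the cloner-independent error $\alpha_n(\varepsilon)$ before taking $\inf_{\Lambda^{n,r}}$; note only that the result you cite as ``Lemma~3'' is Lemma~1 in the paper's numbering.
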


\begin{proof}
Observe
\begin{align*}
&  \left\Vert \Lambda_{\mathrm{amp}}^{\theta,\sqrt{r},n,\varepsilon}\left(
\mathrm{N}\left(  h,J_{\theta}^{-1}\right)  \right)  -\mathrm{N}\left(
\sqrt{r}h,J_{\theta}^{-1}\right)  \right\Vert _{1}\\
&  \leq\left\Vert \Lambda_{\mathrm{amp}}^{\theta,\sqrt{r},n,\varepsilon
}\left(  \mathrm{N}\left(  h,J_{\theta}^{-1}\right)  \right)  -\mathcal{L}%
\left(  L_{\theta}^{rn,\varepsilon}|P_{\theta_{n,h}}^{rn}\right)  \right\Vert
_{1}+\left\Vert \mathcal{L}\left(  L_{\theta}^{rn,\varepsilon}|P_{\theta
_{n,h}}^{rn}\right)  -\mathrm{N}\left(  \sqrt{r}h,J_{\theta}^{-1}\right)
\right\Vert _{1}\\
&  =\left\Vert \mathcal{L}\left(  L_{\theta}^{rn,\varepsilon}|\Lambda
^{n,r}\left(  Q_{\theta,h}^{\prime n}\right)  \right)  -\mathcal{L}\left(
L_{\theta}^{rn,\varepsilon}|P_{\theta_{n,h}}^{rn}\right)  \right\Vert
_{1}+\left\Vert \mathcal{L}\left(  L_{\theta}^{rn,\varepsilon}|P_{\theta
_{n,h}}^{rn}\right)  -\mathrm{N}\left(  \sqrt{r}h,J_{\theta}^{-1}\right)
\right\Vert _{1}\\
&  \leq\left\Vert \Lambda^{n,r}\left(  Q_{\theta,h}^{\prime n}\right)
-P_{\theta_{n,h}}^{rn}\right\Vert _{1}+\left\Vert \mathcal{L}\left(
L_{\theta}^{rn,\varepsilon}|P_{\theta_{n,h}}^{rn}\right)  -\mathrm{N}\left(
\sqrt{r}h,J_{\theta}^{-1}\right)  \right\Vert _{1}\\
&  \leq\left\Vert \Lambda^{n,r}\left(  Q_{\theta,h}^{\prime n}\right)
-\Lambda^{n,r}\left(  P_{\theta_{n,h}}^{n}\right)  \right\Vert _{1}+\left\Vert
\Lambda^{n,r}\left(  P_{\theta_{n,h}}^{n}\right)  -P_{\theta_{n,h}}%
^{rn}\right\Vert _{1}\\
&  +\left\Vert \mathcal{L}\left(  L_{\theta}^{rn,\varepsilon}|P_{\theta_{n,h}%
}^{rn}\right)  -\mathrm{N}\left(  \sqrt{r}h,J_{\theta}^{-1}\right)
\right\Vert _{1}\\
&  \leq\left\Vert Q_{\theta,h}^{\prime n}-P_{\theta_{n,h}}^{n}\right\Vert
_{1}+\left\Vert \Lambda^{n,r}\left(  P_{\theta_{n,h}}^{n}\right)
-P_{\theta_{n,h}}^{rn}\right\Vert _{1}+\left\Vert \mathcal{L}\left(
L_{\theta}^{rn,\varepsilon}|P_{\theta_{n,h}}^{rn}\right)  -\mathrm{N}\left(
\sqrt{r}h,J_{\theta}^{-1}\right)  \right\Vert _{1}.
\end{align*}
By Theorem\thinspace\ref{th:lan},the first term vanishes,
\[
\lim_{n\rightarrow\infty}\sup_{h\in K^{\prime}}\left\Vert Q_{\theta,h}^{\prime
n}-P_{\theta_{n,h}}^{n}\right\Vert _{1}=0.
\]
By Lemma \ref{lem:||p-N||-1}, since $\theta_{n,h}=\theta_{rn,\sqrt{r}h}$,
\ the third term vanishes,
\[
\lim_{\varepsilon\downarrow0}\lim_{n\rightarrow\infty}\sup_{h\in K^{\prime}%
}\left\Vert \mathcal{L}\left(  L_{\theta}^{rn,\varepsilon}|P_{\theta_{n,h}%
}^{rn}\right)  -\mathrm{N}\left(  \sqrt{r}h,J_{\theta}^{-1}\right)
\right\Vert _{1}=0.
\]
After all, we have
\[
\lim_{\varepsilon\downarrow0}\varliminf_{n\rightarrow\infty}\sup_{h\in
K^{\prime}}\left\Vert \Lambda_{\mathrm{amp}}^{\theta,\sqrt{r},n,\varepsilon
}\left(  \mathrm{N}\left(  h,J_{\theta}^{-1}\right)  \right)  -\mathrm{N}%
\left(  \sqrt{r}h,J_{\theta}^{-1}\right)  \right\Vert _{1}\leq\varliminf
_{n\rightarrow\infty}\sup_{h\in K^{\prime}}\left\Vert \Lambda^{n,r}\left(
P_{\theta_{n,h}}^{n}\right)  -P_{\theta_{n,h}}^{rn}\right\Vert _{1}.
\]
Since
\[
\inf_{\Lambda}\sup_{h\in K^{\prime}}\left\Vert \Lambda_{\mathrm{amp}}%
^{\theta,\sqrt{r}}\left(  \mathrm{N}\left(  h,J_{\theta}^{-1}\right)  \right)
-\mathrm{N}\left(  \sqrt{r}h,J_{\theta}^{-1}\right)  \right\Vert _{1}%
\leq\varliminf_{n\rightarrow\infty}\sup_{h\in K^{\prime}}\left\Vert
\Lambda_{\mathrm{amp}}^{\theta,\sqrt{r},n,\varepsilon}\left(  \mathrm{N}%
\left(  h,J_{\theta}^{-1}\right)  \right)  -\mathrm{N}\left(  \sqrt
{r}h,J_{\theta}^{-1}\right)  \right\Vert _{1}%
\]
holds due to optimality of $\Lambda_{\mathrm{amp}}^{\theta,\sqrt{r}}$, we
have
\[
\varliminf_{n\rightarrow\infty}\sup_{h\in K^{\prime}}\left\Vert \Lambda
^{n,r}\left(  P_{\theta_{n,h}}^{n}\right)  -P_{\theta_{n,h}}^{rn}\right\Vert
_{1}\geq\inf_{\Lambda}\sup_{h\in K^{\prime}}\left\Vert \Lambda_{\mathrm{amp}%
}^{\theta,\sqrt{r}}\left(  \mathrm{N}\left(  h,J_{\theta}^{-1}\right)
\right)  -\mathrm{N}\left(  \sqrt{r}h,J_{\theta}^{-1}\right)  \right\Vert
_{1}.
\]
Here, letting $K^{\prime}=\left\{  x\,;\left\Vert x\right\Vert \leq a\right\}
$ and $a\rightarrow\infty$, we have (\ref{l-min-max}).
\end{proof}

\section{Discussion}

Using quantum LAN, we can produce similar results for finite dimensional
quantum system.

\bigskip\appendix

\section{Proof of (\ref{inf|p-p|})}

In this appendix, we prove%
\begin{equation}
\inf_{x}\left\Vert p_{0,\mathbf{1}}-p_{x,r\mathbf{1}}\right\Vert
_{1}=\left\Vert p_{0,\mathbf{1}}\left(  y\right)  -p_{0,r\mathbf{1}}\left(
y\right)  \right\Vert _{1},\nonumber
\end{equation}
where $p_{x,\Sigma}$ is the density of $\mathrm{N}\left(  x,\Sigma\right)  $.
\ Due to the symmetry, we can suppose $x=\left(  t,0,\cdots,0\right)
$,\thinspace$t\geq0$. \ Define $B_{r,t}:=\left\{  y;p_{0,\mathbf{1}}\left(
y\right)  \geq p_{x,r\mathbf{1}}\left(  y\right)  \right\}  $. Observe
\begin{align*}
y  &  \in B_{r}\\
&  \Leftrightarrow\frac{1}{\left(  2\pi\right)  ^{m/2}}e^{-\frac{\left(
y_{1}\right)  ^{2}+\sum_{\kappa=2}^{m}\left(  y_{\kappa}\right)  ^{2}}{2}}%
\geq\frac{1}{\left(  2\pi r\right)  ^{m/2}}e^{-\frac{\left(  y_{1}-t\right)
^{2}+\sum_{\kappa=2}^{m}\left(  y_{\kappa}\right)  ^{2}}{2r}}\\
&  \Leftrightarrow\left(  y_{1}-\frac{t}{r-1}\right)  ^{2}+\sum_{\kappa=2}%
^{m}\left(  y_{\kappa}\right)  ^{2}\leq\frac{2m}{r-1}\log r+\left(  \frac
{t}{r-1}\right)  ^{2}.
\end{align*}
Hence, $B_{r}$ is a ball. \ 

For $z\in%
\mathbb{R}
^{m-1}$ and $t\in%
\mathbb{R}
$, define $t_{1}$ and $t_{2}$ by
\begin{align*}
p_{0,\mathbf{1}}\left(  \left(  t_{\kappa},z\right)  \right)   &
=p_{0,r\mathbf{1}}\left(  \left(  t_{\kappa}-t,z\right)  \right)  ,\,\,\\
t_{1}  &  \leq t_{2}.
\end{align*}
One can verify \
\[
p_{0,\mathbf{1}}\left(  \left(  t_{1},z\right)  \right)  \geq p_{0,\mathbf{1}%
}\left(  \left(  t_{2},z\right)  \right)
\]
as follows. In case of $0\leq t_{1}\left(  t,z\right)  \leq t_{2}\left(
t,z\right)  $, this holds because $p_{0,\mathbf{1}}\left(  \left(
\cdot,z\right)  \right)  $ is monotone decreasing on $%
\mathbb{R}
_{+}$. In case of $t_{1}\left(  t,z\right)  \leq0\leq t_{2}\left(  t,z\right)
$, observe $t_{1}\left(  0,z\right)  \leq t_{1}\left(  t,z\right)  \leq0\leq
t_{2}\left(  0,z\right)  \leq t_{2}\left(  t,z\right)  $. Hence,
\[
p_{0,\mathbf{1}}\left(  \left(  t_{1}\left(  t,z\right)  ,z\right)  \right)
\geq p_{0,\mathbf{1}}\left(  \left(  t_{1}\left(  0,z\right)  ,z\right)
\right)  =p_{0,\mathbf{1}}\left(  \left(  t_{2}\left(  0,z\right)  ,z\right)
\right)  \geq p_{0,\mathbf{1}}\left(  \left(  t_{2}\left(  t,z\right)
,z\right)  \right)  .
\]

Therefore,
\[
\frac{\mathrm{d}}{\mathrm{d}t}\left\Vert p_{0,\mathbf{1}}-p_{x,r\mathbf{1}%
}\right\Vert _{1}=\int\left\{  p_{0,\mathbf{1}}\left(  \left(  t_{1},z\right)
\right)  -p_{0,\mathbf{1}}\left(  \left(  t_{2},z\right)  \right)  \right\}
\mathrm{d}z\geq0.
\]
Therefore, the minimum is achieved $t=0$, and we have the asserted result.

\end{document}